\definecolor{GrayBgColor}{rgb}{0.9, 0.9, 0.9}
\definecolor{GrayFgColor}{rgb}{0.4, 0.4, 0.4}
\definecolor{StringColor}{rgb}{0.0, 0.38039, 0.141176}
\renewcommand\emph[1]{{\it #1}}
\newcommand{\mathem}{\sf}
\newcommand{\kw}[1]{\normalfont\bfseries\sffamily #1}
\newcommand\tkw[1]{{\bfseries #1}}
\newcommand{\IN}{\mbox{\kw{in}}}
\newcommand{\LET}{\mbox{\kw{let}}}
\newcommand{\CASE}{\mbox{\kw{case}}}
\newcommand{\OF}{\mbox{\kw{of}}}
\newcommand{\foreach}[2]{\overline{#1}^{#2}} %% {{\overrightarrow{#1}}^{#2}}
\newcommand{\foreachN}[1]{\overline{#1}}    %% index and bound do not matter
\newcommand{\gap}{\,\,\,\,}
\newcommand{\TYPE}{\mbox{\kw{type}}}
\newcommand{\STRUCT}{\mbox{\kw{struct}}}
\newcommand{\INTERFACE}{\mbox{\kw{interface}}}
\newcommand{\FUNC}{\mbox{\kw{func}}}
\newcommand{\RETURN}{\mbox{\kw{return}}}
\newcommand{\PACKAGE}{\mbox{\kw{package}}}
\newcommand{\MAIN}{\mbox{\mathem main}}
\newcommand\GoSynCatName[1]{\mbox{\small #1}}
\newcommand{\methodSpecificationsRel}{{\mathem methods}}
\newcommand{\methodSpecifications}[2]{\methodSpecificationsRel(#1,#2)}
\newcommand{\fgEnv}{\Gamma}
\newcommand{\EmptyFgEnv}{\{\}}
\newcommand{\turnsFG}{\, \vdash_{\mathsf{FG}} \,}
\newcommand{\subtypeOf}[2]{#1 <: #2}
\newcommand\assertOfSym{\mathrel{\scalebox{0.8}{\ensuremath{\searrow}}}}
\newcommand{\assertOf}[2]{#1 \assertOfSym #2}
\newcommand{\fgSub}[3]{#1 \turnsFG \subtypeOf{#2}{#3}}
\newcommand{\FGStructNonRec}{{FG1}}
\newcommand{\FGUniqueFields}{{FG2}} %% {{\bf FG-UNIQUE-FIELD}}
\newcommand{\FGUniqueMethSpec}{{FG3}} %% {{\bf FG-UNIQUE-MSPEC}}
\newcommand{\FGUniqueReceiver}{{FG4}} %% {{\bf FG-UNIQUE-RECV}}
\newcommand{\panicFG}[2]{\mathsf{panic}_{\mathsf{FG}}(#1, #2)}
\newcommand{\panicDFG}[1]{\panicFG{\foreachN{D}}{#1}}
\newcommand{\divergeFG}[2]{#1 \Uparrow_{\mathsf{FG}} #2}
\newcommand{\divergeDFG}[1]{\divergeFG{\foreachN{D}}{#1}}
\newcommand{\EvCtx}{{\mathcal E}}
\newcommand{\reduceSym}{\longrightarrow}
\newcommand{\reduce}[2]{#1 \reduceSym  #2}
\newcommand{\reduceN}[2]{#1 \reduceSym^* #2} %% reduce to normal form
\newcommand{\reducek}[3]{#2 \reduceSym^{#1} #3} %% reduce k steps
\newcommand{\subst}[2]{#1 \mapsto #2}
\newcommand{\reduceFGN}[3]{#1 \turnsFG #2 \reduceSym^* #3} %% reduce to normal form
\newcommand{\reduceFGk}[4]{#2 \turnsFG #3 \reduceSym^{#1} #4} %% reduce to normal form
\newcommand{\reduceTL}[3]{#1 \turnsTL #2 \reduceSym #3}
\newcommand{\reduceTLN}[3]{#1 \turnsTL #2 \reduceSym^* #3} %% reduce to normal form
\newcommand{\reduceTLk}[4]{#2 \turnsTL #3 \reduceSym^{#1} #4} %% reduce k steps
\newcommand{\panic}[1]{{\mathem panic}(#1)}
\newcommand{\diverge}[1]{{\mathem diverge}(#1)}
\newcommand{\vbName}{\Phi}
\newcommand{\vbFG}{\vbName_{\mathsf{v}}} %% Name of value binding
\newcommand{\xRightarrow}[2][]{\ext@arrow 0359\Rightarrowfill@{#1}{#2}}
\newcommand{\panicMTL}[1]{\mathsf{panic}_{\mathsf{TL}}(\vbMethTL, #1)}
\newcommand{\divergeMTL}[1]{\vbMethTL \Uparrow_{\mathsf{TL}} #1}
\newcommand{\TL}{\mbox{TL}} %% Target language = GADT where for convenience we add ``method'' lookup to reduction rules
\newcommand{\kT}{K}         %% constructor
\newcommand\Angle[1]{\langle#1\rangle}
\newcommand{\pair}[2]{\Angle{#1,#2}}
\newcommand{\triple}[3]{\Angle{#1,#2, #3}}
\newcommand{\program}{\mathit{Prog}}
\newcommand{\clsT}{\mathit{Cls}}
\newcommand{\patT}{\mathit{Pat}}
\newcommand{\expT}{E}
\newcommand{\expTA}{\expT'}    %% N.B. Can't use \expT_1
\newcommand{\xT}{X}
\newcommand{\yT}{Y}
\newcommand{\xTval}{\xT_{\mathit{val}}}
\newcommand{\kA}{k'}
\newcommand{\kB}{k''}
\newcommand{\kC}{k'''}
\newcommand{\kD}{k''''}
\newcommand{\minF}{\mathit min}
\newcommand{\maxF}{\mathit max}
\newcommand{\uT}{V}                 %% TL value, use V
\newcommand{\uTA}{\uT'}
\newcommand{\uTVal}{\uT_{val}}
\newcommand{\ignore}[1]{}
\newcommand{\PANIC}[1]{#1}
\newcommand{\addSpace}{\mbox{} \\}
\newcommand{\reduceSourcek}[3]{#2 \reduceSym^{#1} #3}
\newcommand{\reduceTargetN}[2]{#1 \reduceSym^* #2}
\newcommand{\reduceTargetk}[3]{#2 \reduceSym^{#1} #3}
\newcommand{\txt}[1]{\texttt{#1}}
\newcommand{\aplasLR}[4]{#3 \approx_{\mathit{APLAS}} #4 \in \Sem{#2}_{#1}}
\newcommand{\aplasVLR}[4]{#3 \equiv_{\mathit{APLAS}} #4 \in \Sem{#2}_{#1}}
\newcommand{\terminatingLR}[4]{#3 \approx #4 \in \Sem{#2}_{#1}}
\newcommand{\divergeLR}[4]{#3 \approx_{\uparrow} #4 \in \Sem{#2}_{#1}}
\newcommand{\panicLR}[4]{#3 \approx_{\not \downarrow} #4 \in \Sem{#2}_{#1}}
\newcommand{\valueLR}[4]{#3 \equiv #4 \in \Sem{#2}_{#1}}
\newcommand{\expLR}[4]{#3 \approx #4 \in \Sem{#2}_{#1}}
\newcommand{\tcaseof}[2]{\CASE\ #1 \ \OF\ #2}
\newcommand{\tletrecin}[2]{\LET\ #1 \ \IN\ #2}  %% MS: just use \LET instead of \LETREC
\newcommand\turnsSomething[1]{\, \vdash_{\mathsf{#1}} \,}
\newcommand\turnsSomethingk[2]{\, \vdash^{#2}_{\mathsf{#1}} \,}
\newcommand{\turnsTL}{\turnsSomething{TL}}
\newcommand{\vbTL}{\vbName_{\mathsf{\uT}}}
\newcommand{\vbMethTL}{\vbName_{\mathsf m}}
\newcommand{\REvCtxT}{{\mathcal R}} %% context for the refined case
\newcommand{\turnsG}[1]{\, \vdash_{#1} \,}
\newcommand{\trans}[4]{#2 \turnsG{#1} #3 \leadsto #4}
\newcommand{\transGrayBox}[4]{#2 \turnsG{#1} #3 \GrayBox{\leadsto #4}}
\newcommand{\tdExpTrans}[3]{\trans{\mathsf{exp}}{#1}{#2}{#3}}
\newcommand{\tdExpTransGrayBox}[3]{\transGrayBox{\mathsf{exp}}{#1}{#2}{#3}}
\newcommand{\tdMethTrans}[3]{\trans{\mathsf{meth}}{#1}{#2}{#3}}
\newcommand{\tdMethTransGrayBox}[3]{\transGrayBox{\mathsf{meth}}{#1}{#2}{#3}}
\newcommand{\tdProgTrans}[2]{ \turnsG{\mathsf{prog}} #1 \leadsto #2}
\newcommand{\tdProgTransGrayBox}[2]{ \turnsG{\mathsf{prog}} #1 \GrayBox{\leadsto #2}}
\newcommand{\tdUpcast}[3]{\trans{\mathsf{iCons}}{#1}{#2}{#3}}
\newcommand{\tdUpcastGrayBox}[3]{\transGrayBox{\mathsf{iCons}}{#1}{#2}{#3}}
\newcommand{\tdDowncast}[3]{\trans{\mathsf{iDestr}}{#1}{#2}{#3}}
\newcommand{\tdDowncastGrayBox}[3]{\transGrayBox{\mathsf{iDestr}}{#1}{#2}{#3}}
\newcommand\methodLookupSym{{\mathem methodLookup}}
\newcommand{\methodLookup}[2]{\methodLookupSym(#1, #2)}
\newcommand{\mapPerm}{\pi}               %% total function
\newcommand{\redLRk}[5]{#2 \turnsSomethingk{rr}{#1} #4 \approx_{#3} #5}
\newcommand\Sem[1]{\llbracket#1\rrbracket}
\newcommand{\redLRConfk}[4]{ #3 \approx #4 \in \Sem{#2}_{#1}^{\pair{\foreachN{D}}{\vbMethTL}}}
\newcommand{\redVLRConfk}[4]{ #3 \equiv #4 \in \Sem{#2}_{#1}^{\pair{\foreachN{D}}{\vbMethTL}}}
\newcommand{\mT}[2]{\xT_{{#1},{#2}}}
\newcommand\sSynCatName[1]{\mbox{\small #1}}
\newcommand\sTransSection[2]{%
  \begin{tabularx}{\textwidth}{Xr}%
    #1\hfill & {\small \it #2}%
  \end{tabularx}%
}
\newcommand\sOver[2][!*NEVER USED ARGUMENT*!]{%
  \ifthenelse{\equal{#1}{!*NEVER USED ARGUMENT*!}}{\overline{#2}}{\overline{#2}^{#1}}%
}
\newcommand\sOptEmpty\bullet
\newcommand\RefTirName[1]{\TirName{#1}}
\newcommand\Rule[1]{\RefTirName{#1}}
\newcommand\inferruleLeft[3][]{\inferrule*[vcenter, Left=#1]{#2}{#3}}
\newcommand\GrayBox[1]{#1}
\newcommand{\bi}{\begin{array}[t]{@{}l@{}}}
\newcommand{\ei}{\end{array}}
\newcommand{\ba}{\begin{array}}
\newcommand{\ea}{\end{array}}
\newcommand{\bda}{\[\ba}
\newcommand{\eda}{\ea\]}
\newcommand{\bp}{\begin{quote}\tt\begin{tabbing}}
\newcommand{\ep}{\end{tabbing}\end{quote}}
\def\ruleform#1{{\setlength{\fboxrule}{0.5pt}\fbox{\normalsize \ensuremath{#1}}}}
\newcommand{\myirule}[2]{{\renewcommand{\arraystretch}{1.2}\ba{c} #1
                      \\ \hline #2 \ea}}
\newcommand{\figurebox}[1]
        {\fbox{\begin{minipage}{\textwidth} #1 \end{minipage}}}
\newcommand{\boxfig}[3]
           {\begin{figure*}\figurebox{#3}\vspace{-2ex}\caption{\label{#1}#2}\end{figure*}}
\newcommand\noteForall[1]{(\forall\, #1)}
\newcommand\Nat{\mathbb{N}}
\newenvironment{EnumAlph}{%

  \begin{enumerate}
}{\end{enumerate}}
\def\smallunderbrace#1{\mathop{\vtop{\m@th\ialign{##\crcr
   $\hfil\displaystyle{#1}\hfil$\crcr
   \noalign{\kern3\p@\nointerlineskip}%
   \tiny\upbracefill\crcr\noalign{\kern3\p@}}}}\limits}
\newcommand\BraceBelow[2]{\smallunderbrace{#1}_{#2}}
\newcommand\Hole{\scalebox{0.8}{\ensuremath{\square}}}
\begin{document}

%% Title information
\title{Semantic preservation for a type directed translation scheme of Featherweight Go}

\author{Martin Sulzmann\inst1  \and Stefan Wehr\inst2}
\institute{
  Karlsruhe University of Applied Sciences, Germany\\
  \email{martin.sulzmann@h-ka.de}
  \and
  Offenburg University of Applied Sciences, Germany\\
  \email{stefan.wehr@hs-offenburg.de}
}

\maketitle

\begin{abstract}
  Featherweight Go (FG) is a minimal core calculus that includes
  essential Go features such as overloaded methods and interface types.
  The most straightforward semantic description of the dynamic behavior of FG programs
  is to resolve method calls based on run-time type information.
  A more efficient approach is to apply a type-directed translation scheme
  where interface-values are replaced by dictionaries that contain concrete method definitions.
  Thus, method calls can be resolved by a simple lookup of the method definition in the dictionary.
  Establishing that the target program obtained via the type-directed translation scheme
  preserves the semantics of the original FG program is an important task.

  To establish this property we employ logical relations that are indexed by types to relate source and target programs.
  We provide rigorous proofs and give a detailed discussion of the many subtle corners that we have
  encountered including the need for a step index due to recursive interfaces and method definitions.
  %% Methods developed here should be applicable to Haskell, Rust, Scala, ...
  %% Maybe leave out and mention somewhere in the conclusion.
\end{abstract}

%--------------------------------------------------------
%--------------------------------------------------------
\section{Introduction}

Type directed translation is the process of elaborating a source into some target program
by making use of type information available in the source program.
Source and target may be from the same language as found in the case of compiler transformations,
e.g.~consider~\cite{Morrisett1995CompilingWT}.
The target may be a more elementary language compared to the source,
e.g.~consider~\cite{Hall:1996:TCH:227699.227700,10.1145/1291151.1291169}.
In all cases it is essential to establish that the target program resulting from
the translation preserves the meaning of the source program.

Here, we consider a type directed translation method applied to Featherweight Go.
Featherweight Go (FG) is a minimal core calculus that includes the essential features of Go
such as method overloading and interfaces.
Earlier work by Griesmer and co-authors~\cite{FeatherweightGo}
specifies static typing rules and a run-time method lookup semantics for FG.
In our own prior work~\cite{DBLP:conf/aplas/SulzmannW21},
we give a type directed translation that
elaborates methods calls to method lookup in dictionaries that will be passed
around in place of interfaces.
We could establish correctness of our translation but
the result was somewhat limited as semantic preservation only holds
under the assumption that source and target programs terminate.

In this work, we significantly extend our earlier semantic preservation result
and establish the following properties.
\begin{itemize}
  \item If the source program terminates so will the target program and the resulting values are equivalent.
  \item If the source program diverges so will the target program.
  \item If the source program panics due to a failed run-time type check, the target program will panic as well.
\end{itemize}
These results require non-trivial extensions and adaptations
of our earlier proof method and type-indexed relation to
connect source to target values.
The upcoming Section~\ref{sec:overview} gives an overview
and highlights the changes from~\cite{DBLP:conf/aplas/SulzmannW21}
to achieve the above results.

In summary, we make the following contributions.

\begin{itemize}
\item We introduce a family of syntactic, step-indexed logical relations to establish semantic
  preservation for terminating, diverging and panicking source programs
  (Section~\ref{sec:logical-relations})
\item We provide for rigorous proofs of our results
       (Section~\ref{sec:properties} and Appendix).
\end{itemize}

Section~\ref{sec:featherweight-go} specifies Featherweight Go (FG).
The type directed translation of FG including a description
of the target language is given in Section~\ref{sec:type-directed}.
Both sections are adopted from our earlier work~\cite{DBLP:conf/aplas/SulzmannW21}.
Related work is covered in Section~\ref{sec:related}.
Section~\ref{sec:conc} concludes.

%--------------------------------------------------------
%--------------------------------------------------------
\section{Overview}
\label{sec:overview}

\newcommand{\Eq}{\txt{Eq}}
\newcommand{\Int}{\txt{Int}}
\newcommand{\eqM}{\txt{eq}}
\newcommand{\stepCmp}[2]{#1 \blacktriangleleft #2}

\paragraph{Translation by example.}

We consider a type directed translation scheme that
transforms a FG program into some target program.
The target language
is an untyped lambda-calculus extended with recursive let-bindings, constructors, and pattern matching.
Here, we use Haskell-style notation.

For example, the FG program on the left translates to the target program on the right.
For simplicity, we leave out some details (marked by \texttt{...}).

\mbox{}
\\
{\small
\begin{tabular}{l@{\qquad}l}
\texttt{\tkw{type} Int \tkw{struct} \{val int\}}
\\ \texttt{\tkw{type} Eq \tkw{interface} \{eq(that Eq) bool\}}
\\ \texttt{\tkw{func} (this Int) eq(that Eq) bool \{...\}}     & \ \ \ \texttt{eqInt this that = ...}
\\ \texttt{\tkw{func} main() \{}                               &  \ \ \ \texttt{main =}
\\ \ \ \texttt{\tkw{var} i Eq = Int\{1\}}          & \quad \ \ \texttt{\tkw{let} j = (1,eqInt)}
\\ \ \ \texttt{\tkw{var} \_ bool = i.eq(i) \}}        &  \quad \ \ \texttt{\tkw{in} \tkw{case} j \tkw{of} (x,eq) -> eq x j}
\end{tabular}
}
\mbox{}
\\

The FG program on the left contains a struct \txt{Int}, an interface \txt{Eq}, and
a definition for method \txt{eq} for receiver type \txt{Int} (line 3).
Our example only contains one definition of the \txt{eq} method.
In general, FG methods can be overloaded on the receiver type.
Hence, in the translation on the right, the function name \texttt{eqInt} uniquely identifies
the definition of \texttt{eq} for receiver type \texttt{Int}.

Interfaces give a name to a set of method signatures.
They are types, and so interface type \texttt{Eq} effectively describes all receiver types
implementing the \texttt{eq} method.
Type \texttt{Int} implements this \texttt{eq} method
and therefore \texttt{Int\{1\}} is (also) of type \texttt{Eq}.
Hence, the method call \texttt{i.eq(i)} type checks.

The FG semantics performs a run-time type lookup to resolve method calls such as \texttt{i.eq(i)}.
In the translation, an interface is represented as a pair that consists
of the value that implements the interface and a dictionary of method definitions
for this specific value.
For example, \texttt{i} at type \Eq\ translates to the pair \texttt{(1,eqInt)}.
Assuming that we represent the FG variable \txt{i} as \txt{j} in the target,
the method call \texttt{i.eq(i)} translates to \texttt{\tkw{case} j \tkw{of} (x,eq) -> eq x j},
where we only require a pattern match
to access the underlying value and the concrete method definition.
See \Cref{sec:type-directed} for details of the type-directed translation.

\paragraph{Semantic preservation via logical relations.}

To establish that the translation is meaning preserving we need to relate source to target expressions.
One challenge is that evaluation steps are not in sync.
For example, FG method calls reduce in a single step whereas the translated code
first performs a pattern match to obtain the method definition
followed by another step to execute the call.

\boxfig{f:highlights}{Improvements compared to earlier work}{

  Earlier work:
\begin{mathpar}
  \inferrule
            { \forall k_1, k_2, v, \uT ~.~
              k - (k_1 + k_2) > 0
              \ \wedge \
              \reduceSourcek{k_1}{e}{v}
              \ \wedge \
              \reduceTargetk{k_2}{\expT}{\uT}
              \\
              \implies \
              \aplasVLR{k-(k_1 + k_2)}{t}{v}{\uT}
            }
            {
              \aplasLR{k}{t}{e}{\expT}
            }
\end{mathpar}

This work:
\begin{mathpar}
  \inferrule[terminate]
  {
       \forall k' < k,
       v ~.~ \reduceSourcek{\kA}{e}{v} \implies \exists \uT.
       \reduceTargetN{\expT}{\uT} \wedge \valueLR{k- \kA}{t}{v}{\uT}
  }
  {
    \terminatingLR{k}{t}{e}{\expT}
  }

  \inferrule[diverge]
  {
       \forall k' < k,
       e' ~.~ \reduceSourcek{\kA}{e}{e'} \wedge \diverge{e'}
       \implies
       \diverge{\expT}
  }
  {
    \divergeLR{k}{t}{e}{\expT}
  }

  \inferrule[panic]
  {
       \forall k' < k,
       e' ~.~ \reduceSourcek{\kA}{e}{e'} \wedge \panic{e'}
       \implies
       \panic{\expT}
  }
  {
    \panicLR{k}{t}{e}{\expT}
  }

\end{mathpar}

} %% fig

In our earlier work~\cite{DBLP:conf/aplas/SulzmannW21}, we introduce
a logical relation $\aplasLR{k}{t}{e}{\expT}$ to express that source and target expressions behave the same.
See top of Figure~\ref{f:highlights}.
The relation is indexed by type $t$ and a step index $k$
where we assume that step indices are natural numbers starting with $0$.
Source expression $e$ and target expression $\expT$ are in a relation:
if the sum of evaluation steps to reduce $e$ to some source value $v$
and $\expT$ to some target value $\uT$ is less than $k$, then the values must be in a relation.\footnote{%
  We defer all formal and missing definitions to Sections~\ref{sec:featherweight-go}, \ref{sec:type-directed}
  and~\ref{sec:semantic-preservation}. For now, we appeal to intuition.
}
Thus, the number of reduction steps in the source and target do not need to be in sync.
If we need more than $k$ steps, or if only the source expression yields a value within $k$ steps,
or if one of the expressions diverges or panics,
the relation $\aplasLR{k}{t}{e}{\expT}$ holds vacuously and does not give us any information.

In this work, we consider three cases: source terminates, diverges or panics.
See Figure~\ref{f:highlights} that sketches a logical relation for each case.
\Rule{terminate}:
if the source $e$ terminates within less than $k$ steps,
then the target $\expT$ terminates as well where the number of evaluation steps do not matter.
\Rule{diverge} / \Rule{panic}:
if the source $e$ evaluates to $e'$ in less than $k$ steps and $e'$ diverges/panics, then so does target $\expT$.
The detour via $e'$ in the last two cases is required to prove that source $e$ and its translation $E$ are
related.
Taken together, the three cases yield a much stronger characterization of the semantic relation
between source and target programs compared to our earlier work.

We use the convention that $\equiv$ relates values whereas $\approx$ relates
expressions.
The step index in the relation for values, see $\valueLR{k- \kA}{t}{v}{\uT}$, seems unnecessary
but is important to guarantee that the definition of logical relations is well-founded
(c.f.~\Cref{sec:semantic-preservation})

\paragraph{Ill-founded without step index.}

Consider the example from above where
\txt{Int\{1\}} of type \txt{Eq} translates to \txt{(1,eqInt)}.
We expect $\valueLR{}{\txt{Eq}}{\txt{Int\{1\}}}{\txt{(1,eqInt)}}$ to hold; that is,
FG value $\txt{Int\{1\}}$ is equivalent to target value $\txt{(1,eqInt)}$ when viewed
at FG type $\txt{Eq}$.

The following reasoning steps try to verify this claim.
We deliberately ignore the step index to show that without a step index we run into some issue.
%
%{\small
\bda{rl}
& (1) \ \valueLR{}{\txt{Eq}}{\txt{Int\{1\}}}{\txt{(1,eqInt)}}
\\ [\smallskipamount] \mbox{if} & (2) \ \valueLR{}{\txt{Int}}{\txt{Int\{1\}}}{\txt{1}}
\\ [\smallskipamount] \mbox{and}&    (3) \ \expLR{}{\txt{eq(y Eq) bool}}{\texttt{\tkw{func} (x Int) eq(y Eq) bool \{e'\}}}{\txt{eqInt}}
\eda
%}%
%
Statement (1) reduces to statements (2) and (3).
(2) states that the underlying values are related at struct type $\txt{Int}$. This clearly holds, we omit the details.
(3) requires a bit more thought.
Function \txt{eqInt} is part of the dictionary. Hence, \txt{eqInt} must have the same behavior
as method \txt{eq} defined on receiver type \txt{Int}.
This is the intention of statement (3).
Compared to the earlier notation of the example, the
function body has been replaced by some expression \txt{e'}.

How can we establish (3)?
It must hold that when applied to related argument values,
\txt{eqInt} and method \txt{eq} defined on receiver type \txt{Int} behave the same.
Thus, establishing (3) requires (4):
%{\small
\bda{ll}
   & (4) \ \forall \valueLR{}{\txt{Int}}{v}{\uT}, \valueLR{}{\txt{Eq}}{v'}{\uTA}.
 \  \expLR{}{\txt{bool}}{\Angle{\subst{\txt{x}}{v},\subst{\txt{y}}{v'}} \txt{e'}}{\txt{eqInt} \ \uT \ \uTA}
\eda
%}%
%
where we write $\Angle{\subst{\txt{x}}{v},\subst{\txt{y}}{v'}} \txt{e'}$
to denote the substitution of arguments by values in the function body.

There is an issue. One of the arguments is of interface type {\Eq}.
Hence, for any values $v', \uTA$ we require $\valueLR{}{\txt{Eq}}{v'}{\uTA}$.
This leads to a cyclic dependency as
a statement of the form $\valueLR{}{\txt{Eq}}{\cdot}{\cdot}$
relies on a statement $\valueLR{}{\txt{Eq}}{\cdot}{\cdot}$.
See (1) and (4).
This would mean that the definition of our logical relation is ill-founded.

\paragraph{Step indices to the rescue.}

\boxfig{f:step-index}{Getting the step index right}{

 Rule schemes parameterized by some binary ordering relation $\stepCmp{}{}$:
  \begin{mathpar}
    \inferrule[method-$\stepCmp{}{}$]
              { \forall \kA, v, v', \uT, \uTA.
                \stepCmp{\kA}{k}
                \wedge
                \valueLR{\kA}{\Int}{v}{\uT}
                \wedge
                \valueLR{\kA}{\Eq}{v'}{\uTA}
                \\
                \implies  \expLR{\kA}{\txt{bool}}{\Angle{\subst{\txt{x}}{v},\subst{\txt{y}}{v'}} \txt{e'}}{\txt{eq} \ \uT \ \uTA}              }
              {
                \expLR{k}{\txt{eq(y Eq) bool}}{\texttt{\tkw{func} (x Int) eq(y Eq) bool \{e'\}}}{\eqM}
              }
  \end{mathpar}

  \begin{mathpar}
    \inferrule[iface-$\stepCmp{}{}$]
              { \forall k_1. \stepCmp{k_1}{k} \implies \valueLR{k_1}{\Int}{v}{\uT}
                \\ \forall k_2. \stepCmp{k_2}{k} \implies \expLR{k_2}{\txt{eq(y Eq) bool}}{\texttt{\tkw{func} (x Int) eq(y Eq) bool \{e'\}}}{V'}
              }
              { \valueLR{k}{\Eq}{v}{(\uT,V')}
              }
  \end{mathpar}

  Logical relation properties:

\bda{r@{\quad}l}
\Rule{lr-step} & \expLR{k}{t}{e}{\expT} \wedge \reduceSourcek{1}{e'}{e} \wedge \reduceTargetN{E'}{E}
\implies
\expLR{k+1}{t}{e'}{E'}
\\
\Rule{lr-mono} & \expLR{k}{t}{e}{\expT} \wedge k' \leq k \implies \expLR{k'}{t}{e}{\expT}
\eda

}

FG interfaces can have cyclic dependencies similar to recursive types, see interface \Eq.
To guarantee well-foundedness we include a step index in the definition of our logical relations.
There is in fact a second reason for a step index. Method definitions may be recursive similar to recursive functions.
There is no well-foundedness issue here.
But to apply an inductive proof method where semantic preservation for expressions is lifted
to method definitions require a step-index.

Step indices in case of recursive types and recursive functions have been studied before~\cite{10.1007/11693024_6}.
What makes our setting interesting is a subtle interaction between (recursive) interfaces and (recursive) methods.
Figure~\ref{f:step-index} specifies the logical relation rules for methods and interfaces
that we have used in the above reasoning steps (1-4).
The rule for interfaces relies on the rule for methods
and the rule for methods relies on the rule for interfaces (in case of recursive interfaces).
For brevity, we omit rules for struct types such as \Int.

Rules \Rule{method-$\stepCmp{}{}$} and \Rule{iface-$\stepCmp{}{}$}
are parameterized by some binary ordering relation $\stepCmp{}{}$.
Why not simply replace $\stepCmp{}{}$ by $<$, the less than relation?
Rule instances \Rule{method-$<$} and \Rule{iface-$<$} are clearly well-founded.

\paragraph{Failed proof attempt in case of \Rule{method-$<$} and \Rule{iface-$<$}.}

Via our running example we illustrate that
the proof of semantic preservation for expressions will not go through.
Recall that
\bda{l@{\quad}c@{\quad}l}
\txt{i.eq(i)} \textrm{~of type~} \txt{bool} & \textit{translates to} & \txt{\tkw{case} j \tkw{of} (x,eq) -> eq x j}\\
\txt{i = Int\{1\}} \textrm{~of type~} \txt{Eq} & & \txt{j = (1, eqInt)}
\eda
For values \txt{i} and \txt{j}, we may assume (1) $\valueLR{k}{\Eq}{\txt{Int\{1\}}}{\txt{(1,eqInt)}}$
for some $k$.
To verify that the translation yields related expressions,
we must show
$$(2)~ \expLR{k}{\txt{bool}}{\txt{i.eq(i)}}{\txt{\tkw{case} j \tkw{of} (x,eq) -> eq x j}}$$
From (1), via reverse application of rule \Rule{iface-$<$}, we can derive
$$(3)~\expLR{k-1}{\txt{eq(y Eq) bool}}{\texttt{\tkw{func} (x Int) eq(y Eq) bool \{e'\}}}{\txt{eqInt}}$$

From (3) we get the implication in the premise of rule \Rule{method-$<$}.
The left-hand side of this implication can be satisfied for $k-2 < k -1$
via \Rule{lr-mono} from \Cref{f:step-index} and (1) and
the fact that $\valueLR{k}{\txt{Int}}{\txt{Int\{1\}}}{1}$.
Thus, we can derive the right-hand side
$\expLR{k-2}{\txt{bool}}{\Angle{\subst{\txt{x}}{\txt{i}}, \subst{\txt{y}}{\txt{i}}} \txt{e'}}{\txt{eqInt 1 j}}$.
From this we get
$$(5)~\expLR{k-1}{\txt{bool}}{\txt{i.eq(i)}}{\txt{\tkw{case} j \tkw{of} (x,eq) -> eq x j}}$$
via property \Rule{lr-step} in Figure~\ref{f:step-index} and
the following evaluation steps:
\bda{rcl}
\txt{i.eq(i)}                                & \reduceSym^1 & \Angle{\subst{\txt{x}}{\txt{i}}, \subst{\txt{y}}{\txt{i}}} \txt{e'}\\
\txt{\tkw{case} j \tkw{of} (x,eq) -> eq x j} & \reduceSym^* & \txt{eqInt 1 j}\\
\eda

The issue is that from (5) we cannot deduce (2).
Property \Rule{lr-step} allows us to bump up the step index in case of source reduction steps.
Target reductions have no impact.
Hence, we end up being one step short.

\paragraph{Fixing the proof by turning $<$ into $\leq$.}

The solution is to turn one $<$ into $\leq$.
Then, we can derive (5) $\expLR{k}{\txt{bool}}{\txt{i.eq(i)}}{\txt{\tkw{case} j \tkw{of} (x,eq) -> eq x j}}$
and the proof of semantic equivalence for expressions goes through.
It seems that we have a choice between
(A) rule instances \Rule{method-$\leq$} and \Rule{iface-$<$} and
(B) rule instances \Rule{method-$<$} and \Rule{iface-$\leq$}.
We pick choice (B) because under (A) the proof of semantic preservation for (possibly recursive) method definitions
will not go through. See the proof of the upcoming Lemma~\ref{le:method-red-rel-equiv}
in \Cref{sec:semantic-preservation}.

\paragraph{Comparison to our earlier work~\cite{DBLP:conf/aplas/SulzmannW21}.}

The logical relation introduced in our earlier work~\cite{DBLP:conf/aplas/SulzmannW21}
is more limited in that semantic preservation is only stated
under the assumption that both expressions, source and target programs, terminate.
Recall Figure~\ref{f:highlights} that shows
that the logical relation $\aplasLR{}{}{}{}$ also takes into account source as well as target steps.
Under this stronger assumption it is easier to get the step index right as we can derive
the following more general variant of property \Rule{lr-step}
\bda{lc}
 \aplasLR{k}{t}{e}{\expT} \wedge \reduceSourcek{k_1}{e'}{e} \wedge \reduceTargetk{k_2}{\expT}{\expTA}
\implies
\aplasLR{k+k_1+k_2}{t}{e'}{\expTA}
\eda
That is, we can bump up the step index based on target reductions as well.
This provides for more flexibility, even with
rule instances \Rule{method-$<$} and \Rule{iface-$<$},
the proofs go through.
As highlighted above, more care is needed for the logical relations that we introduce in this work.

Next, we introduce the semantics of Featherweight Go
and give the details of the typed-directed translation scheme
followed by our semantic preservation results.

%--------------------------------------------------------
%--------------------------------------------------------
\section{Featherweight Go}
\label{sec:featherweight-go}

Featherweight Go (FG)~\cite{FeatherweightGo} is a tiny fragment of Go containing
only structs, methods and interfaces.
Figure~\ref{f:fg} gives its syntax and the dynamic semantics.
Overbar notation $\sOver[n]{\mathfrak s}$ denotes the sequence $\mathfrak s_1 \ldots \mathfrak s_n$ for some
syntactic construct $\mathfrak s$, where in some places commas separate the sequence
items.
If irrelevant, we omit the $n$ and simply write
$\sOver{\mathfrak s}$. Using the index variable $i$ under an overbar
marks the parts that vary from sequence item to sequence
item; for example, $\sOver[n]{\mathfrak s'\,\mathfrak s_i}$ abbreviates
$\mathfrak s'\,\mathfrak s_1\ldots\mathfrak s'\,\mathfrak s_n$
and $\sOver[q]{\mathfrak s_j}$ abbreviates
$\mathfrak s_{j1}\,\ldots\,\mathfrak s_{jq}$.

\boxfig{f:fg}{Featherweight Go (FG)}{
\bda{ll}
{
  \ba{ll}
  \GoSynCatName{Field name}     & f
\\ \GoSynCatName{Method name}   & m
\\ \GoSynCatName{Variable name} & x,y
\\ \GoSynCatName{Struct type name}    & t_S, u_S
\\ \GoSynCatName{Interface type name}    & t_I, u_I
\\ \GoSynCatName{Type name}              & t, u ::= t_S \mid t_I
\\ \GoSynCatName{Method signature}       & M ::= (\overline{x_i \ t_i}) \ t
\\ \GoSynCatName{Method specification}   & R, S ::= m M
\ea
}

&

\ba{llc}
\GoSynCatName{Expression} & d,e ::=
\\ \quad \GoSynCatName{Variable}          & \quad x & \mid
\\ \quad \GoSynCatName{Method call}       & \quad e.m(\overline{e}) & \mid
\\ \quad \GoSynCatName{Struct literal} & \quad t_S \{ \overline{e} \} & \mid
\\ \quad \GoSynCatName{Select}            & \quad e.f & \mid
\\ \quad \GoSynCatName{Type assertion}    & \quad e.(t)
\ea

\eda
\vspace{-1ex}
\bda{lcl}

\ba{llc}
\GoSynCatName{Type literal} & L ::=
\\ \quad \GoSynCatName{Struct} & \quad \STRUCT\ \{ \overline{f \ t} \} & \mid
\\ \quad \GoSynCatName{Interface} & \quad \INTERFACE\ \{ \overline{S} \}
\ea

&
\gap
&

\ba{llc}
\GoSynCatName{Declaration} & D ::=
\\ \quad \GoSynCatName{Type}   & \quad \TYPE\ t \ L & \mid
\\ \quad \GoSynCatName{Method} & \quad \FUNC\ (x \ t_S) \ mM \ \{ \RETURN\ e \}
\ea
\eda
\vspace{-1ex}
\bda{llcl}
\GoSynCatName{Program} & P & ::= & %% MS: omit \PACKAGE\ \MAIN;
          \overline{D} \ \FUNC\ \MAIN () \{ \_ = e \}
\eda

\sTransSection{
  \ruleform{\fgSub{\foreachN{D}}{t}{u}}
}{Subtyping}
  \begin{mathpar}
    \inferrule[methods-struct]
  {}
  {
    \methodSpecifications{\foreachN{D}}{t_S} = \{ m M \mid \FUNC\ (x \ t_S) \ mM \ \{ \RETURN\ e \} \in  \foreachN{D} \}
  }

  \inferrule[methods-iface]{
    \TYPE\ t_I \ \INTERFACE\ \{ \foreachN{S} \} \in \foreachN{D}
  }
  { \methodSpecifications{\foreachN{D}}{t_I} = \{ \foreachN{S} \}
  }

  \inferrule[sub-struct-refl]{
   }
  { \fgSub{\foreachN{D}}{t_S}{t_S}
  }

  \inferrule[sub-iface]{
   \methodSpecifications{\foreachN{D}}{t} \supseteq \methodSpecifications{\foreachN{D}}{u_I}
  }
  { \fgSub{\foreachN{D}}{t}{u_I}
  }

  \end{mathpar}
\sTransSection{
  \ruleform{\foreachN{D} \turnsFG \reduce{e}{e}}
}{
  Reductions
}
  \bda{llrl}
  \sSynCatName{Value} &
  v & ::= & t_S \{ \foreachN{v} \}
  \\
  \sSynCatName{Evaluation context} &
  \EvCtx & ::= & \Hole
   \mid t_S \{ \foreachN{v}, \EvCtx, \foreachN{e} \}
   \mid \EvCtx.f
   \mid \EvCtx.(t)
   \mid \EvCtx.m(\foreachN{e})
   \mid v.m(\foreachN{v}, \EvCtx, \foreachN{e})
   \\
      \sSynCatName{Substitution (FG values)} &
   \vbFG & ::= & \Angle{\foreachN{\subst{x_i}{v_i}}}
\eda
\begin{mathpar}
 \inferrule[fg-context]
 {\foreachN{D} \turnsFG \reduce{e}{e'}
 }
 {\foreachN{D} \turnsFG \reduce{\EvCtx [e]}{\EvCtx [e']}
 }

 \inferrule[fg-field]
 { \TYPE\ t_S \ \STRUCT\ \{ \foreachN{f \ t} \} \in  \foreachN{D}
   %% \fieldsOf{\foreachN{D}}{t_S} = \foreachN{f \ t}
 }
 { \foreachN{D} \turnsFG \reduce{t_S \{ \foreachN{v} \}.f_i}{v_i}
 }

 \inferrule[fg-call]
  { v = t_S \{ \foreachN{v} \}
    \\ \FUNC\ (x \ t_S) \ m (\foreachN{x \ t}) \ t \ \{ \RETURN\ e \} \in \foreachN{D}
   %% \methodLookup{\foreachN{D}}{(m,\typeOf{v})} = \FUNC\ (x \ t_S) \ m (\foreachN{x \ t}) \ t \ \{ \RETURN\ e \}
 }
 {\foreachN{D} \turnsFG \reduce{v.m(\foreachN{v})}{\Angle{\subst{x}{v}, \foreachN{\subst{x_i}{v_i}}} e}
 }

 \inferrule[fg-assert]
           { v = t_S \{ \foreachN{v} \}
              \\ \fgSub{\foreachN{D}}{t_S}{t}
 }
 { \foreachN{D} \turnsFG \reduce{v.(t)}{v}
 }
\end{mathpar}

}
%% boxfig

A FG program $P$ consists of declarations $\foreachN{D}$ and a main function.
A declaration is either a type declaration for a struct or an interface, or
a method declaration.
Such a method declaration $\FUNC\ (x\ t_S)\ mM\,\{ \RETURN\ e\}$ makes
method of name $m$ and signature $M$ available for receiver type $t_S$,
where the body $e$ may refer to the receiver as $x$.
Expressions $e$ consist of variables $x$, method calls
$e.m(\foreachN{e})$, struct literals $t_S\{\foreachN{e}\}$ with
field values $\foreachN{e}$, access to a struct's
field $e.f$, and dynamic type assertions $e.(t)$.
For convenience, we use disjoint sets of identifiers for structs
$t_S$ and interfaces $t_I$.

% With the exception of variable bindings in function bodies, the primitive type \texttt{int} with operations
% \texttt{==} and \texttt{<}, and the primitive type \texttt{bool} with operations
% \texttt{\&\&} and \texttt{||},
% we can represent the example from Figure~\ref{f:fg} in FG.

FG is a statically typed language.
For brevity, we omit a detailed description of the FG typing rules
as they appear in~\cite{FeatherweightGo} and
will show up in slightly different form in the type-directed translation
in \Cref{sec:type-directed}.
However, we state the following conditions that must be satisfied by a FG program:
\begin{description}
\item[\FGStructNonRec:] Structs must be non-recursive.
\item[\FGUniqueFields:] For each struct, field names must be distinct.
\item[\FGUniqueMethSpec:] For each interface, method names must be distinct.
\item[\FGUniqueReceiver:] Each method declaration is uniquely identified by the receiver type and method name.
\end{description}

The execution of dynamic type assertions in $FG$ relies on structural subtyping.
The relation $\fgSub{\foreachN{D}}{t}{u}$ denotes that under declarations $\foreachN{D}$
type $t$ is a subtype of type $u$ (see~\Cref{f:fg}).
A struct $t_S$ is a subtype of an interface $t_I$ if $t_S$ implements all the methods
specified by $t_I$. An interface $t_I$ is a subtype of another interface $u_I$ if
the methods specified by $t_I$ are a superset of the methods specified by $u_I$.

The dynamic semantics of FG is given in
the bottom part of Figure~\ref{f:fg} as structural operational semantics rules.
The relation $\foreachN{D} \turnsFG \reduce{e}{e'}$ denotes that
expression $e$ reduces to expression $e'$ under the  sequence $\foreachN{D}$ of declarations.
%% MS: fyi
%% Compared to the original presentation~\cite{FeatherweightGo} we explicitly include the in
%% the reduction judgment $\foreachN{D} \turnsFG \reduce{d}{e}$.
Rule \Rule{fg-context} makes use of evaluation contexts $\EvCtx$
with holes $\Hole$ to apply a reduction inside an expression.
Values, ranged over by $v$, are struct literals whose components are all values.
A capture-avoiding substitution $\vbFG = \Angle{\foreachN{\subst{x_i}{v_i}}}$ replaces
variables $x_i$ with values $v_i$, applying a substitution $\vbFG$ to an expression $e$
is written $\vbFG e$.

Rule \Rule{fg-field} deals with field access.
Condition \FGUniqueFields\ guarantees that field name lookup is unambiguous.
Rule \Rule{fg-call} reduces method calls.
Condition \FGUniqueReceiver\ guarantees that method lookup is unambiguous.
The method call is reduced to the method body $e$ where we map the receiver argument to a concrete value $v$
and method arguments $x_i$ to concrete values $v_i$.
Rule \Rule{fg-assert} covers type assertions.
We need to check that the type $t_S$ of value $v$ is consistent with the type $t$ asserted in the program text.
This check can be carried out by checking that $t_S$ and $t$ are in a structural subtype relation.

We write $\reduceFGk{k}{\foreachN{D}}{e}{e'}$ to denote that
$e$ reduces to $e'$
in exactly $k$ steps.
We write $\foreachN{D} \turnsFG \reduceN{e}{e'}$ to denote that there
exists some $k \in \Nat$
with $\reducek{k}{e}{e'}$.
We assume that $\Nat$ includes zero.
If $e$ reduces
ad infinitum, we say that $e$ diverges, written
$\divergeDFG{e}$. Formally,
$\divergeDFG{e}$ if
$\forall k \in \Nat . \exists e' . \foreachN{D} \turnsFG \reducek{k}{e}{e'}$.

FG enjoys type soundness~\cite{FeatherweightGo}. A well-typed program either reduces
to a value, diverges, or it panics by getting stuck on a failed type assertion. The predicate
$\panicDFG{e}$ formalizes panicking:
\begin{mathpar}
  \inferrule*[right=fg-panic]{
    \neg \ \fgSub{\foreachN{D}}{t_S}{t}
  }{
    \panicDFG{\EvCtx [t_S \{ \foreachN{v} \}.(t)]}
  }
\end{mathpar}

%--------------------------------------------------------
%--------------------------------------------------------
\section{Type Directed Translation}
\label{sec:type-directed}

We specify a type-directed translation from FG to
an untyped lambda-calculus extended with recursive let-bindings, constructors, and pattern matching.
The translation itself has already been specified elsewhere~\cite{DBLP:conf/aplas/SulzmannW21},
but there only a weak form of semantic equivalence between source and target
programs was given. The goal of the article at hand is to prove a much stronger form of semantic
equivalence (see \Cref{sec:semantic-preservation}).

%--------------------------------------------------------
\subsection{Target Language}
\label{sec:target-language}

\boxfig{f:target-lang}{Target Language (TL)}{
  \vspace{-2ex}
  \bda{lcr}
  {
    \ba{lll}
    \sSynCatName{Expression} & \expT ::=
    \\ \quad \sSynCatName{Variable}    & \quad \xT \mid \yT         & \mid
    \\ \quad \sSynCatName{Constructor} & \quad \kT                  & \mid
    \\ \quad \sSynCatName{Application} & \quad \expT \ \expT        & \mid \
    \\ \quad \sSynCatName{Abstraction} & \quad \lambda \xT. \expT   & \mid
    \\ \quad \sSynCatName{Pattern case} & \quad \tcaseof{\expT}{[\foreachN{\clsT}]}
    \ea
  }
  &~\quad~&
  {
    \ba{lrl}
     \sSynCatName{Pattern clause} &
     \clsT & ::=  \patT \rightarrow \expT
     \\
     \sSynCatName{Pattern} &
     \patT & ::= \kT \ \foreachN{\xT}
     \\
    \sSynCatName{Program}  &
    \program & ::= {\begin{array}[t]{l}\arraycolsep=0pt
      \LET\ \foreachN{\yT_i = \lambda \xT_i . \expT_i}\\
      \IN\ \expT
      \end{array}}
     \ea
   }
   \eda
  \bda{llrl}
  \sSynCatName{TL values} &
  \uT & ::= & \kT\ \foreachN{\uT} \mid \lambda X . E \mid X
  \\
  \sSynCatName{TL evaluation context} &
  \REvCtxT & ::= & \Hole
   %\mid \kT\  \foreachN{\uT} \REvCtxT \foreachN{\expT}
   \mid \tcaseof{\REvCtxT}{[\foreachN{\patT \rightarrow \expT}]}
   \mid \REvCtxT\ \expT
   \mid \uT\ \REvCtxT
  \\
     \sSynCatName{Substitution (TL values)} &
   \vbTL & ::= & \Angle{\foreachN{\subst{\xT}{\uT}}}
  \\
     \sSynCatName{Substitution (TL methods)} &
   \vbMethTL & ::= & \Angle{\foreachN{\subst{\yT}{\lambda \xT. \expT}}}
   \eda

\vspace{1ex}
\sTransSection{
  \ruleform{\reduceTL{\vbMethTL}{\expT}{\expTA}}
}{
  TL expression reductions
}
\vspace{-0.5ex}
\begin{mathpar}
 \inferrule[tl-context]
 {\reduceTL{\vbMethTL}{\expT}{\expTA}
 }
 {\reduceTL{\vbMethTL}{\REvCtxT [\expT]}{\REvCtxT [\expTA]}
 }

 \inferrule[tl-lambda]
 {}
 { \reduceTL{\vbMethTL}{(\lambda \xT. \expT) \ \uT}{\Angle{\subst{\xT}{\uT}} \expT}
 }

 \inferrule[tl-case]
 { \kT \ \foreach{\xT_i}{n} \rightarrow \expTA \in \foreachN{\clsT}
 }
 {\reduceTL{\vbMethTL}{\tcaseof{\kT \ \foreach{\uT_i}{n}}{{[\foreachN{\clsT}]}}}
         {\Angle{\foreach{\subst{\xT_i}{\uT_i}}{n}} \expTA}
 }

 \inferrule[tl-method]
 {}
 { \reduceTL{\vbMethTL}{\yT \ V}{\vbMethTL(\yT) \ V}
 }

\end{mathpar}

\vspace{1ex}
\sTransSection{
  \ruleform{\reduceTL{}{\program}{\program'}}
}{
  TL reductions
}
\vspace{-0.5ex}
\begin{mathpar}
    \inferrule[tl-prog]
 {\reduceTL{\Angle{\foreachN{\subst{\yT_i}{\lambda \xT_i. \expT_i}}}}{\expT}{\expTA}  }
 { \reduceTL{}{\tletrecin{\foreachN{\yT_i = \lambda \xT_i.\expT_i}}{\expT}}{\tletrecin{\foreachN{\yT_i = \lambda \xT_i. \expT_i}}{\expTA}}
 }

\end{mathpar}
} %% boxfig

Figure~\ref{f:target-lang} specifies the syntax and dynamic semantics of our target language (\TL).
We use capital letters for constructs of the target language. Target expressions $E$ include
variables $X,Y$, data constructors $K$,
function application, lambda abstraction and case expressions to pattern match against constructors.
In a case expression with only one pattern clause,
we often omit the brackets. If a case expressions has more than one clause
$[\foreachN{\patT \rightarrow \expT}]$, we assume that the constructors in $\patT$ are distinct.
A program consists of a sequence of (mutually recursive) function definitions and a (main) expression.
The function definitions are the result of translating FG method definitions.

We assume data constructors for tuples up to some fixed but arbitrary size. The syntax
$(\foreach{E}{n})$ constructs an $n$-tuple when used as an expression, and deconstructs
it when used in a pattern context.
At some places, we use nested patterns as an abbreviation for nested case expressions.
The notation $\lambda \patT. \expT$ stands for
$\lambda X . \tcaseof{X}{\patT \rightarrow \expT}$, where $X$ is fresh.

Target values $V$ consist of constructors, lambda expressions, and variables.
A variable may be a value if it is bound in a $\LET$ at the top-level; that is,
it refers to a method from FG.
A constructor value $K\,\foreach{V}{n}$ is short for $(\ldots (K\,V_1) \ldots)\,V_n$.

The structural operational semantics employs two types of substitutions.
Value substitutions $\vbTL$ records the bindings resulting from pattern matching and function applications.
Method substitutions $\vbMethTL$ records the bindings for translated method definitions (i.e. for
top-level let-bindings).
Reduction of programs is mapped to reduction of expressions under a method substitution,
see rule \Rule{tl-prog}. A variable $Y$ applied to a value $V$ reduces to $\vbMethTL(Y)\,V$ via
\Rule{tl-method}.
The remaining reduction rules are standard.

We write $\reduceTLk{k}{\vbMethTL}{\expT}{E'}$ to denote that
$\expT$ reduces to $E'$
with exactly $k$ steps, and
$\reduceTLN{\vbMethTL}{\expT}{E'}$ for
some finite number of steps.
If $E$ reduces an arbitrary number of steps, we say that
$E$ \emph{diverges}, written $\divergeMTL{E}$.
Formally, $\divergeMTL{E}$ iff
$\forall k \in \Nat . \exists E' . \reduceTLk{k}{\vbMethTL}{E}{E'}$.

In the source language FG, evaluation might panic by getting stuck on a failed type
assertion. The translation to the target language preserves panicking, so we
need to formalize panic. A target language
expression panics if it is stuck on a $\CASE$-expression and there is no
matching clause.
\begin{mathpar}
  \inferrule*[right=tl-panic]{
    \kT \ \foreach{\xT_i}{n} \rightarrow \expTA \not\in [\foreachN{\clsT}]
  }{
    \panicMTL{\REvCtxT[\tcaseof{\kT \ \foreach{\uT_i}{n}}{{[\foreachN{\clsT}]}}]}
  }
\end{mathpar}

%--------------------------------------------------------
\subsection{Translation}

The specification of the translation spreads out over three figures. \Cref{f:type-directed-exp}
gives the translation of expressions, relying on \Cref{f:upcast-downcast} to define auxiliary relations
for translating structural subtyping and type assertions. Finally, \Cref{f:type-directed-meth-prog} translates
method declarations and programs.

Before explaining the translation rules, we establish the following conventions (see also
the top of \Cref{f:type-directed-exp}).
We assume that each FG variable $x$ translates to the TL variable $\xT$.
FG variables introduced in method declarations are assumed to be distinct.
This guarantees that there are no name clashes in environment $\fgEnv$.
For each struct $t_S$ we introduce a TL constructor $K_{t_S}$, and
for each interface $t_I$ we introduce a TL constructor $K_{t_I}$.
For each method declaration $\FUNC\ (x \ {t_S}) \ m M \ \{ \RETURN\ e \}$ we introduce
a TL variable $\mT{m}{t_S}$, thereby relying on condition \FGUniqueReceiver{} which guarantees that $m$ and
$t_S$ uniquely identify this declaration.
We write $\fgEnv$ to denote typing environments where we record the types of FG variables.
The notation $[n]$ is a short-hand for the set $\{1,\dots,n\}$.

The overall idea of the translation is to choose the TL-representation of an FG-value $v = t_S\{\foreachN{v}\}$
based on the type $t$ that $v$ is used at:

\begin{itemize}
\item If $t$ is a struct type $t_S$, then the representation of $v$ is $K_{t_S}\,(\foreachN{V})$, where
  each $V_i$ is the representation of $v_i$, so $(\foreachN{V})$ is a tuple for the struct fields.
\item If $t$ is an interface type, then the representation of $v$ is an \emph{interface-value}
  $K_{t_I}\,(V, \foreachN{\mT{m_i}{t_S}})$, where $V$ is the representation of $v$ at struct type $t_S$
  and $\foreachN{\mT{m_i}{t_S}}$ is a dictionary~\cite{Hall:1996:TCH:227699.227700} containing all
  methods $\foreachN{m_i}$ of interface $t_I$. The translation makes each method
  $\FUNC\ (x \ {t_S}) \ m_i M \ \{ \RETURN\ e \}$
  available as a top-level binding $\LET\ \mT{m_i}{t_S} = E$.
  An interface value $K_{t_I}\,(V, \foreachN{\mT{m_i}{t_S}})$ bears close resemblance to an
  existential type~\cite{Laufer:1994:PTI:186025.186031}, as it hides the concrete representation of the value $V$.
\end{itemize}

\boxfig{f:type-directed-exp}{Translation of expressions}{
  Convention for mapping source to target terms
  \vspace{-1ex}
  \begin{mathpar}
    x \leadsto \xT
    \quad
    t_S \leadsto K_{t_S}
    \quad
    t_I \leadsto K_{t_I}
    \quad
    \FUNC\ (x \ {t_S}) \ m M \ \{ \RETURN\ e \} \leadsto \mT{m}{t_S}
  \end{mathpar}
  \bda{llcl}
\mbox{FG environment} & \fgEnv & ::= & \{ \} \mid \{ x : t \} \mid \fgEnv \cup \fgEnv
  \eda
  \vspace{0.1ex}

  \sTransSection{
    \ruleform{\tdExpTransGrayBox{\pair{\foreachN{D}}{\fgEnv}}{e : t}{\expT}}
  }{Translating expressions}
  \begin{mathpar}
  \inferrule[td-var]
            { (x : t) \in \fgEnv
            }
            { \tdExpTransGrayBox{\pair{\foreachN{D}}{\fgEnv}}{x : t}{\xT}
            }

  \inferrule[td-struct]
  {\TYPE\ t_S \ \STRUCT\ \{ \foreach{f \ t}{n} \} \in  \foreachN{D}
               \\\\   \tdExpTransGrayBox{\pair{\foreachN{D}}{\fgEnv}}{e_i : t_i}{\expT_i}\quad\noteForall{i \in [n]}
            }
            { \tdExpTransGrayBox{\pair{\foreachN{D}}{\fgEnv}}{t_S \{ \foreach{e}{n} \} : t_S }{\kT_{t_S} \ (\foreach{\expT}{n})}
            }

  \inferrule[td-access]
  {\tdExpTransGrayBox{\pair{\foreachN{D}}{\fgEnv}}{e : t_S}{\expT}
             \\  \TYPE\ t_S \ \STRUCT\ \{ \foreach{f \ t}{n} \} \in  \foreachN{D}
          }
           { \tdExpTransGrayBox{\pair{\foreachN{D}}{\fgEnv}}
                        {e.f_i : t_i }
                        { \CASE\ \expT\ \OF\ \kT_{t_S} \ (\foreach{\xT}{n}) \rightarrow \xT_i}
           }

  \inferrule[td-call-struct]
            { m (\foreach{x \ t}{n}) \ t \in \methodSpecifications{\foreachN{D}}{t_S}
              \\
              \tdExpTransGrayBox{\pair{\foreachN{D}}{\fgEnv}}{e : t_S}{\expT}
              \\ \tdExpTransGrayBox{\pair{\foreachN{D}}{\fgEnv}}{e_i : t_i}{\expT_i}\quad\noteForall{i \in [n]}
          }
          { \tdExpTransGrayBox{\pair{\foreachN{D}}{\fgEnv}}{e.m(\foreach{e}{n}) : t}{\mT{m}{t_S} \ \expT \ (\foreach{\expT}{n}) } }

  \inferrule[td-call-iface]
  {\tdExpTransGrayBox{\pair{\foreachN{D}}{\fgEnv}}{e : t_I}{\expT}
    \\ \TYPE\ t_I \ \INTERFACE\ \{ \foreachN{S} \} \in \foreachN{D}
    \\ S_j = m(\foreach{x \ t}{n})\,t
    \\ \tdExpTransGrayBox{\pair{\foreachN{D}}{\fgEnv}}{e_i : t_i}{\expT_i}\quad\noteForall{i \in [n]}
    \\ X,\foreach{X}{q}\textrm{~fresh}
  }
  { \tdExpTransGrayBox{\pair{\foreachN{D}}{\fgEnv}}
    {e.m(\foreach{e}{n}) : t}
    {\CASE\ \expT \ \OF\ \kT_{t_I} \ (\xT, \foreach{\xT}{q}) \rightarrow \xT_j \ \xT \ (\foreach{\expT}{n}) }
  }

  \inferrule[td-assert]
  {
    u ~\textrm{defined in}~ \foreachN{D}
    \\\\
    \tdExpTransGrayBox{\pair{\foreachN{D}}{\fgEnv}}{e : t_I}{\expT_2}
          \\ \tdDowncastGrayBox{\foreachN{D}}{\assertOf{t_I}{u}}{\expT_1}
          }
          { \tdExpTransGrayBox{\pair{\foreachN{D}}{\fgEnv}}{e.(u) : u}{\expT_1 \ \expT_2} }

  \inferrule[td-sub]
            {\tdExpTransGrayBox{\pair{\foreachN{D}}{\fgEnv}}{e : t}{\expT_2}
          \\\\ \tdUpcastGrayBox{\foreachN{D}}{\subtypeOf{t}{u}}{\expT_1}
          }
            { \tdExpTransGrayBox{\pair{\foreachN{D}}{\fgEnv}}{e : u}{\expT_1 \ \expT_2} }
  \end{mathpar}

}
%% fig

The translation rules for expressions (\Cref{f:type-directed-exp}) are of the form
$\tdExpTrans{\pair{\foreachN{D}}{\fgEnv}}{e : t}{\expT}$
where $\foreachN{D}$ refers to the sequence of FG declarations,
$\fgEnv$ refers to type binding of local variables,
$e$ is the to be translated FG expression,
$t$ its type and $\expT$ the resulting target term.
Rule \Rule{td-var} translates variables and follows our convention that $x$ translates to $\xT$.
Rule \Rule{td-struct} translates a struct creation. The translated field elements $\expT_i$
are collected in a tuple and tagged via the constructor $K_{t_S}$.
Rule \Rule{td-access} uses pattern matching to capture field access in the translation.

Method calls are dealt with by rules \Rule{td-call-struct} and \Rule{td-call-iface}.
Rule \Rule{td-call-struct} covers the case that the receiver $e$ is of the struct type $t_S$.
The first precondition guarantees that an implementation for this specific method call exists.
(See Figure~\ref{f:fg} for the auxiliary $\methodSpecificationsRel$.)
Hence, we can assume that we have available a corresponding definition for $\mT{m}{t_S}$ in our translation.
The method call then translates to applying $\mT{m}{t_S}$ first on the translated receiver $\expT$,
followed by the translated arguments collected in a tuple $(\foreach{\expT}{n})$.

Rule \Rule{td-call-iface} assumes that receiver $e$ is of interface type $t_I$,
so $e$ translates to interface-value $E$.
Hence, we pattern match on $E$
to access the underlying value and the desired method in the dictionary.
We assume that the order of methods in the dictionary corresponds to the order
of method declarations in the interface.
The preconditions guarantee that $t_I$ provides a method $m$ as demanded by the method call,
where $j$ denotes the index of $m$ in interface $t_I$.

To explain the two remaining rules for expressions (\Rule{td-assert} and \Rule{td-sub}),
we first introduce the two auxiliary relations defined in \Cref{f:upcast-downcast}.
The relation $\tdUpcast{\foreachN{D}}{\subtypeOf{t}{u_I}}{\expT}$ constructs an interface-value
for $u_I$. Thus, the resulting expression $\expT$ is a $\lambda$-expression taking the representation of a value at type $t$
and yields its representation at type $u_I$.

\boxfig{f:upcast-downcast}{Translation of structural subtyping and type assertions}{

  \sTransSection{
  \ruleform{\tdUpcastGrayBox{\foreachN{D}}{\subtypeOf{t}{u_I}}{\expT}}
}{
  Interface-value construction
}
\begin{mathpar}
\inferrule[td-cons-struct-iface]
          {\TYPE\ t_I \ \INTERFACE\ \{ \foreachN{S} \} \in \foreachN{D}
        \\ \methodSpecifications{\foreachN{D}}{t_S} \supseteq \foreachN{S}
        \\ \foreachN{S} = \foreach{m M}{n}
          }
          {\tdUpcastGrayBox{\foreachN{D}}{\subtypeOf{t_S}{t_I}}
                    {\lambda \xT.
                      \kT_{t_I} \ (\xT, \foreach{\mT{m_i}{t_S}}{n}) }
         }

\inferrule[td-cons-iface-iface]
          {\TYPE\ t_I \ \INTERFACE\ \{ \foreach{R}{n} \} \in \foreachN{D}
       \\ \TYPE\ u_I \ \INTERFACE\ \{ \foreach{S}{q} \} \in \foreachN{D}
       \\  S_i = R_{\mapPerm(i)} \quad\noteForall{i \in [q]}
        }
          {\tdUpcastGrayBox{\foreachN{D}}{\subtypeOf{t_I}{u_I}}
                    { \lambda \xT. \CASE\,\xT\,\OF\
                          \kT_{t_I} \ (\xT, \foreach{\xT}{n})
                          \rightarrow
                          \kT_{u_I} \ (\xT, \xT_{\mapPerm(1)}, \ldots, \xT_{\mapPerm(q)})
                    }
         }
\end{mathpar}

\sTransSection{
  \ruleform{\tdDowncastGrayBox{\foreachN{D}}{\assertOf{t_I}{u}}{E}}
}{
  Interface-value destruction
}
\begin{mathpar}
  \inferrule[td-destr-iface-struct]
            {\TYPE\ t_I \ \INTERFACE\ \{ \foreach{R}{n} \} \in \foreachN{D}
             \\ \fgSub{\foreachN{D}}{t_S}{t_I}
            }
            {\tdDowncastGrayBox{\foreachN{D}}{\assertOf{t_I}{t_S}}{\lambda \xT.  \CASE\,\xT\,\OF\
                    \kT_{t_I} \ (K_{t_S} \ Y, \foreach{\xT}{n})
                      \rightarrow K_{t_S} \, Y
                   }
            }

\inferrule[td-destr-iface-iface]
{ X, Y, Y', \foreach{X}{n}~\textrm{fresh}\\
  \TYPE\ t_I \ \INTERFACE\ \{ \foreach{R}{n} \} \in \foreachN{D}
            \\
            \GrayBox{
              \textrm{for all}~\TYPE\ t_{Sj}\ \STRUCT\ \{ \foreachN{f \ u} \} \in \foreachN{D}
              ~\textrm{with}~ \tdUpcast{\foreachN{D}}{\subtypeOf{t_{Sj}}{u_I}}{\expT_j}
              \textrm{:}
            }
            \\
            \GrayBox{
              \clsT_j =
              \kT_{t_{Sj}} \ Y' \rightarrow (\expT_j \ (\kT_{t_{Sj}} \ Y'))
            }
          }
          { \tdDowncastGrayBox{\foreachN{D}}{\assertOf{t_I}{u_I}}{\lambda \xT.
                             \CASE\ \xT \ \OF\
                             \kT_{t_I}\, (\yT, \foreach{\xT}{n}) \rightarrow \CASE\ \yT  \ \OF\
                             [\foreachN{\clsT}]}
          }
\end{mathpar}
}

The preconditions in rule \Rule{td-cons-struct-iface} check that struct $t_S$ implements the interface.
This guarantees the existence of method definitions $\mT{m_i}{t_S}$.
Hence, we can construct the desired interface-value.
The preconditions in rule \Rule{td-cons-iface-iface} check that $t_I$'s methods are a superset of $u_I$'s methods.
This is done via the total function $\mapPerm : \{1,\ldots,q\} \to \{1,\ldots,n\}$ that matches each (wanted) method in $u_I$ against a (given) method in $t_I$.
We use pattern matching over the $t_I$'s interface-value to extract the wanted methods.
Recall that dictionaries maintain the order of method as specified by the interface.

The relation $\tdDowncast{\foreachN{D}}{\assertOf{t_I}{u}}{E}$ destructs an interface-value.
The $\lambda$-expression $E$ takes a representation at type $t_I$ and converts it to the
representation at type $u$. This conversion might fail, resulting in a pattern-match error.

Rule \Rule{td-destr-iface-struct} deals with the case that the target type $u$ is a struct type $t_S$.
Hence, we find the precondition $\fgSub{\foreachN{D}}{t_S}{t_I}$.
We pattern match over the interface-value that represents $t_I$ to check that the underlying value matches $t_S$
and extract the value.
It is possible that some other value has been used to implement the interface-value that represents $t_I$.
In such a case, the pattern match fails and we experience run-time failure.

Rule \Rule{td-destr-iface-iface} deals with the case that the target type $u$ is an interface type $u_I$.
The outer case expression extracts the value $Y$ underlying interface-value $t_I$
(this case never fails).
We then check if we can construct an interface-value for $u_I$ via $Y$.
This is done via an inner case expression.
For each struct $t_{Sj}$ implementing $u_I$,
we have a pattern clause $\clsT_j$ that
matches against the constructor $\kT_{t_{Sj}}$ of the struct and then constructs an interface-value for $u_I$.
There are two reasons for run-time failure here.
First, $\yT$ (used to implement $t_I$) might not implement $u_I$;
that is, none of the pattern clauses $\clsT_j$ match.
Second, $[\foreachN{\clsT}]$ might be empty because
no receiver at all implements $u_I$. This case is rather unlikely
and could be caught statically.

We now come back explaining the two remaining translation rules for expressions (\Cref{f:type-directed-exp}).
Rule \Rule{td-assert} translates a type assertion $e.(u)$ by destructing $e$'s interface-value, potentially
yielding a representation at type $u$. The type of $e$ must be an interface type because only conversions
from an interface type to some other type must be checked dynamically. Rule \Rule{td-sub} translates
structural subtyping by constructing an appropriate interface-value.
This rule could be integrated as part of the other rules to make the translation more syntax-directed.
For clarity, we prefer to have a stand-alone subtyping rule.

\boxfig{f:type-directed-meth-prog}{Translation of methods and programs}{

\sTransSection{
  \ruleform{\tdMethTransGrayBox{\foreachN{D}}{\FUNC\ (x \ t_S) \ m (\foreachN{x \ t}) \ t}{\expT} }
}{
  Translating method declarations
}
  \bda{c}
  \inferruleLeft[td-method]
         { \tdExpTransGrayBox{\pair{\foreachN{D}}{\{ x : t_S, \foreach{x_i : t_i}{n} \}}}{e : t}{\expT}
          }
          { \tdMethTransGrayBox{\foreachN{D}}
                        {\FUNC\ (x \ t_S) \ m (\foreach{x \ t}{n}) \ t \ \{ \RETURN\ e \}}
                        {\lambda \xT . \lambda (\foreach{\xT }{n}). \expT}
          }

  \eda

\sTransSection{
  \ruleform{\tdProgTransGrayBox{P}{\program}}
}{
  Translating programs
}
  \bda{c}
 \inferrule[td-prog]
 { \textrm{all types used in~}\foreachN{D}~\textrm{are defined in}~\foreachN{D}\\
   \tdExpTransGrayBox{\pair{\foreachN{D}}{\EmptyFgEnv}}{e : t}{\expT} \\\\
   \tdMethTransGrayBox{\foreachN{D}}{D_i'}{\expT_i}\\
   \\D_i' = \FUNC\ (x_i \ {t_S}_i) \ m_i M_i \ \{ \RETURN\ e_i \}\\
   (\textrm{for all}~i \in [n],
   \textrm{where}~\foreach{D'}{n}~\textrm{are the}~\FUNC~\textrm{declarations in}~\foreachN{D})\\
 }
 { \tdProgTransGrayBox{\ignore{\PACKAGE\ \MAIN;} \foreachN{D} \ \FUNC\ \MAIN () \{ \_ = e \}}
   { \LET\ \foreach{\mT{m_i}{{t_S}_i} = \expT_i}{n} \ \IN\ \expT}
 }
  \eda
}

The translation of programs and methods (\Cref{f:type-directed-meth-prog})
boils down to the translation of expressions involved.
Rule \Rule{td-method} translates a specific method declaration,
rule \Rule{td-prog} collects all method declarations and also translates the main expression.
The type system induced by the translation rules is equivalent to the original type system
of Featherweight Go. See the Appendix.

%--------------------------------------------------------
%--------------------------------------------------------
\section{Semantic preservation}
\label{sec:semantic-preservation}
\label{sec:logical-relations}
\label{sec:properties}

\boxfig{f:reduce-rel-fg-tl}{Relating FG to \TL\ Reduction}{

  \sTransSection{
   \ruleform{\redLRConfk{k}{t}{e}{\expT}}
  }{FG versus \TL\ expressions}
\begin{mathpar}
 \inferrule[red-rel-exp]{
   {
     \ba{c}
     \left(
       \forall k' < k,
       v ~.~ \reduceFGk{\kA}{\foreachN{D}}{e}{v} \implies \exists \uT.
       \reduceTLN{\vbMethTL}{\expT}{\uT} \wedge \redVLRConfk{k- \kA}{t}{v}{\uT}
     \right)
     \\
     \wedge
     \\
     \left(
       \forall k' < k,
       e' ~.~ \reduceFGk{k'}{\foreachN{D}}{e}{e'} \wedge \divergeDFG{e'}
       \implies
       \divergeMTL{\expT}
       \right)
     \PANIC{
     \\
     \wedge
     \\
     \left(
       \forall k' < k,
       e' ~.~ \reduceFGk{\kA}{\foreachN{D}}{e}{e'} \wedge \panicFG{\foreachN{D}}{e'} \implies
              \panicMTL{\expT}
       \right)
     }
     %% PANIC
     \ea
   }
 }{
   \redLRConfk{k}{t}{e}{\expT}
 }
\end{mathpar}

  \sTransSection{
   \ruleform{\redVLRConfk{k}{t}{v}{\uT}}
  }{FG versus \TL\ values}
\begin{mathpar}
 \inferrule[red-rel-struct]
     { \TYPE\ t_S \ \STRUCT\ \{ \foreach{f \ t}{n} \} \in  \foreachN{D}
       \\ \forall i \in [n] . \redVLRConfk{k}{t_i}{v_i}{\uT_i}
     }
     { \redVLRConfk{k}
       {t_S}
       {t_S \{ \foreach{v}{n} \}}
       {\kT_{t_S} \ (\foreach{\uT}{n})}
     }

 \inferrule[red-rel-iface]
    {\uT = \kT_{u_S} \ \foreachN{\uTA}
      \\ \forall k_1 \leq k . \redVLRConfk{k_1}{u_S}{v}{\uT}
      \\ \methodSpecifications{\foreachN{D}}{t_I} = \{ \foreach{m M}{n} \}
      \\ \forall k_2 \leq k, i \in [n] . \redLRConfk{k_2}
                       {m_i M_i}
                       {\methodLookup{\foreachN{D}}{(m_i,u_S)}}
                       {\yT_i}
     }
     { \redVLRConfk{k}{t_I}{v}
             {\kT_{t_I} \ (\uT, \foreach{\yT}{n})}
     }

 \end{mathpar}

  \sTransSection{
    \ruleform{\redLRConfk{k}
                    {m M}
                    {\FUNC\ (x \ t_S) \ mM \ \{ \RETURN\ e \}}{\yT}}
  }{FG versus \TL\ methods}
 \begin{mathpar}
 \inferrule[red-rel-method]
           { \forall \kA < k,
                v', \uTA, \foreach{v_i}{n}, \foreach{\uT_i}{n}.
                    (\redVLRConfk{\kA}{t_S}{v'}{\uTA}
                    \wedge (\forall i \in [n]. \redVLRConfk{\kA}{t_i}{v_i}{\uT_i}))
         \\ \implies \redLRConfk{\kA}
               {t}
               {\Angle{\subst{x}{v'},\foreach{\subst{x_i}{v_i}}{n}} e}
               {(\yT \ \uTA) \ (\foreach{\uT}{n})}
 }
     { \redLRConfk{k}
       {m (\foreach{x \ t}{n}) \ t}
       {\FUNC\ (x \ t_S) \ m (\foreach{x \ t}{n}) \ t \ \{ \RETURN\ e \}}
       {\yT}
 }
\end{mathpar}

  \sTransSection{
    \ruleform{\redLRk{k}{\triple{\foreachN{D}}{\vbMethTL}{\fgEnv}}
                    {}
                    {\vbFG}
                    {\vbTL}}
  }{FG environments versus \TL\ value substitutions}
  \begin{mathpar}
    \inferrule[red-rel-vb]
              { \forall (x : t) \in \fgEnv .
                 \redLRConfk{k}{t}{\vbFG(x)}{\vbTL(X)}
              }
              {
                \redLRk{k}{\triple{\foreachN{D}}{\vbMethTL}{\fgEnv}}
                    {}
                    {\vbFG}
                    {\vbTL}
             }
  \end{mathpar}

  \sTransSection{
    \ruleform{\redLRk{k}
                    {}
                    {}
                    {\foreachN{D}}
                    {\vbMethTL}}
  }{FG declarations versus \TL\ method substitutions}
 \begin{mathpar}
   \inferrule[red-rel-decls]
             { \forall\,
               \FUNC\ (x \ t_S) \ m M \ \{ \RETURN\ e \} \in \foreachN{D} :\\
              \redLRConfk{k}
                    {m M}
                    {\FUNC\ (x \ t_S) \ mM \ \{ \RETURN\ e \}}{\mT{m}{t_S}}
             }
             { \redLRk{k}
                    {}
                    {}
                    {\foreachN{D}}
                    {\vbMethTL}
             }

 \end{mathpar}
}
%% fig

We establish correctness of the type-directed translation scheme
by showing that source and target behave the same.
Figure~\ref{f:reduce-rel-fg-tl} introduces the details of the logical relations
that are discussed in the earlier Section~\ref{sec:overview}.
We assume that step indices $k$ are natural numbers starting with $0$.

The relation $\redLRConfk{k}{t}{e}{\expT}$ specifies how an FG expression $e$ and TL expression
$E$ are related at FG type $t$.
The three cases \Rule{terminate}, \Rule{diverge} and \Rule{panic}
from \Cref{f:highlights} are combined in one rule \Rule{red-rel-exp}.

The relation $\redVLRConfk{k}{t}{v}{\uT}$ specifies when FG value $v$ and TL value $V$
are equivalent at FG type $t$.
Rule \Rule{red-rel-struct} covers struct values by ensuring that the constructor tag
matches and all field values are equivalent.
Rule \Rule{red-rel-iface} generalizes \Rule{iface-$\leq$}
from Figure~\ref{f:step-index}. The auxiliary $\methodLookupSym$ retrieves
the method declaration for some method name and receiver type:
\bda{c}\small
  \myirule{ \FUNC\ (x \ t_S) \ mM \ \{ \RETURN\ e \} \in  \foreachN{D} }
          { \methodLookup{\foreachN{D}}{(m,t_S)} = \FUNC\ (x \ t_S) \ mM \ \{ \RETURN\ e \} }
\eda

Rule \Rule{red-rel-method} relates methods, generalizing \Rule{method-$<$}
from \Cref{sec:overview}.
Rules \Rule{red-rel-vb} and \Rule{red-rel-decls} lift the logical relation
to environments and declarations.

Thanks to the step index our logical relations are well-founded. In the definitions
in \Cref{f:reduce-rel-fg-tl},
the relations $\redLRConfk{k}{t}{e}{\expT}$ and $\redVLRConfk{k}{t}{v}{\uT}$
form a cycle. But either the step index $k$ decreases or it stays constant but the size of the target value $V$
decreases in recursive calls.
Several other basic properties hold, such as
\Rule{lr-step} and \Rule{lr-mono} from the earlier Figure~\ref{f:step-index}.
Details are given in the appendix.
These properties are vital to establish the following results.

We can prove that target expressions resulting from FG expressions
are semantically equivalent to the source.

\begin{lemma}[Expression Equivalence]
\label{le:exp-red-equivalent}
  Let $\tdExpTrans{\pair{\foreachN{D}}{\fgEnv}}{e : t}{\expT}$
  and $\vbFG$, $\vbTL$, $\vbMethTL$
  such that
  $\redLRk{k}{\triple{\foreachN{D}}{\vbMethTL}{\fgEnv}}{}{\vbFG}{\vbTL}$
  and
  $\redLRk{k}{}{}{\foreachN{D}}{\vbMethTL}$ for some $k$.
  Then, we find that $\redLRConfk{k}{t}{\vbFG(e)}{\vbTL(\expT)}$.
  \end{lemma}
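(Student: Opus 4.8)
The plan is to prove the statement by induction on the derivation of the translation judgement $\tdExpTrans{\pair{\foreachN{D}}{\fgEnv}}{e : t}{\expT}$, case-splitting on the last translation rule applied, with $k$, $\vbFG$, $\vbTL$, $\vbMethTL$ held fixed but arbitrary throughout the induction. In each case I must establish the three implications packaged into rule \Rule{red-rel-exp} for the substituted pair $\vbFG(e)$ and $\vbTL(\expT)$: the terminating clause, the diverging clause, and the panicking clause. Throughout I rely on the two standing hypotheses, namely that the substitutions are related ($\redLRk{k}{\triple{\foreachN{D}}{\vbMethTL}{\fgEnv}}{}{\vbFG}{\vbTL}$, unfoldable via \Rule{red-rel-vb}) and that the declarations are related to the method substitution ($\redLRk{k}{}{}{\foreachN{D}}{\vbMethTL}$, unfoldable via \Rule{red-rel-decls}), together with the monotonicity and step properties \Rule{lr-mono} and \Rule{lr-step} of the logical relation. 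No separate induction on $k$ is needed, since the step decrease is already built into the relation at index $k$ via its internal quantification over $k' < k$.

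The base case \Rule{td-var} is immediate: $\vbFG(x)$ and $\vbTL(\xT)$ are related at $t$ directly from the environment hypothesis via \Rule{red-rel-vb}, as $(x:t)\in\fgEnv$. The congruence cases \Rule{td-struct} and \Rule{td-access} are handled by decomposing any reduction of $\vbFG(e)$ through the FG evaluation contexts of Figure~\ref{f:fg}: a subexpression reduces first, so I apply the induction hypothesis to it, track the consumed steps, and reassemble. For \Rule{td-struct} the component value relations are collected into an instance of \Rule{red-rel-struct}, and for \Rule{td-access} the final FG field projection (rule \Rule{fg-field}) is matched by the target pattern match (rule \Rule{tl-case}). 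The diverging and panicking sub-cases follow from the corresponding components of the induction hypothesis applied to whichever subexpression carries the divergence or panic.

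The crux is the two method-call cases. For \Rule{td-call-struct} the induction hypothesis relates receiver and arguments at $k$; once these have reduced to values, consuming some $k'$ FG steps so that the residual budget is strictly below $k$, the FG call fires (rule \Rule{fg-call}) while the target performs a method lookup (rule \Rule{tl-method}). Here I invoke the declarations hypothesis, which through \Rule{red-rel-decls} and \Rule{red-rel-method} supplies exactly the relation between the substituted method body and $\mT{m}{t_S}$ applied to the related arguments, at a step index $<k$; aligning the several argument-value indices with \Rule{lr-mono} closes the case. The interface case \Rule{td-call-iface} is the delicate one: the induction hypothesis relates the receiver at interface type $t_I$, so by \Rule{red-rel-iface} the target interface-value carries a dictionary entry related to the looked-up method declaration, and after the extra pattern match selecting that entry I again appeal to \Rule{red-rel-method}. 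This is precisely the situation analysed in \Cref{sec:overview}: the step-index arithmetic must line up so that the budget remaining after the FG call still satisfies the strict inequality demanded by \Rule{red-rel-method} while the interface relation is consulted at $\leq$. I expect this bookkeeping --- making all three of the terminate, diverge and panic implications close without ending up one step short --- to be the main obstacle.

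Finally, the coercion cases \Rule{td-sub} and \Rule{td-assert} reduce to two auxiliary correctness lemmas about the upcast and downcast translations of Figure~\ref{f:upcast-downcast}, which I would isolate first since they are reused (their statements and proofs belong in the appendix). For \Rule{td-sub} the induction hypothesis relates $e$ at the subtype, and the upcast lemma must show that applying the constructed interface-value builder yields a value related at the supertype, which in turn requires the assembled dictionary to be related to the method declarations --- once more drawing on the declarations hypothesis. For \Rule{td-assert} the downcast lemma has to match FG's dynamic behaviour exactly: a successful assertion (rule \Rule{fg-assert}) must correspond to the target producing a related value, a divergence of $e$ must propagate, and --- the \emph{genuinely new} phenomenon here --- a failed assertion (rule \Rule{fg-panic}) must coincide with a failing target pattern match (rule \Rule{tl-panic}), so that the panicking clause of \Rule{red-rel-exp} is discharged.
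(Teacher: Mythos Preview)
Your proposal is correct and follows essentially the same approach as the paper: induction on the translation derivation, case analysis on the last rule, with each case splitting into the terminate/diverge/panic subcases of \Rule{red-rel-exp}, and with \Rule{td-sub} and \Rule{td-assert} delegated to separate upcast and downcast lemmas (the paper's Lemmas~\ref{le:upcast-red-equiv} and~\ref{le:downcast-red-equiv}). You have also correctly anticipated that the step-index arithmetic in the method-call cases is the main technical work; in the paper this surfaces as an explicit choice $k''' = \min(k-k'',\,k-\sum_i k_i) - 1$ together with two small inequality calculations to show both that the strict bound needed by \Rule{red-rel-method} is met and that the residual index dominates $k-k'$.
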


As motivated in Section~\ref{sec:overview}, for the proof to go through,
one of the rules \Rule{red-rel-iface} and \Rule{red-rel-method}
must use $<$ and the other $\leq$.
In our case, we use  $\leq$ in rule \Rule{red-rel-iface} and $<$ in rule \Rule{red-rel-method}.
The lengthy proof is given in the appendix.

Based on the above result, we can establish semantic equivalence for method definitions.
For this proof to go through it is essential
that we find $\leq$ in rule \Rule{red-rel-iface} and $<$ in rule \Rule{red-rel-method}.

\begin{lemma}[Method Equivalence]
  \label{le:method-red-rel-equiv}
  \mbox{} \\ %% margin
  Let $\foreachN{D}$ and $\vbMethTL$ such that
  for each $\FUNC\ (x \ t_S) \ m (\foreach{x \ t}{n}) \ t \ \{ \RETURN\ e \}$ in $\foreachN{D}$
  we have $\vbMethTL(\mT{m}{t_S}) = \lambda \xT . \lambda (\foreach{\xT }{n}). \expT$ where
  $\tdMethTrans{\foreachN{D}}
  {\FUNC\ (x \ t_S) \ m (\foreach{x \ t}{n}) \ t \ \{ \RETURN\ e \}}
  {\lambda \xT . \lambda (\foreach{\xT }{n}). \expT}$.
  Then, we find that $\redLRk{k}{}{}{\foreachN{D}}{\vbMethTL}$ for any $k$.
\end{lemma}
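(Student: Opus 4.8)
The plan is to prove the statement by \emph{strong induction on the step index} $k$, using Lemma~\ref{le:exp-red-equivalent} to handle each method body. Fix $\foreachN{D}$ and $\vbMethTL$ as in the hypothesis and assume, as induction hypothesis, that $\redLRk{\kA}{}{}{\foreachN{D}}{\vbMethTL}$ holds for every $\kA < k$. To establish $\redLRk{k}{}{}{\foreachN{D}}{\vbMethTL}$ it suffices, by rule \Rule{red-rel-decls}, to prove the method relation $\redLRConfk{k}{m M}{\FUNC\ (x \ t_S) \ m M \ \{ \RETURN\ e \}}{\mT{m}{t_S}}$ for an arbitrary declaration $\FUNC\ (x \ t_S) \ m (\foreach{x \ t}{n}) \ t \ \{ \RETURN\ e \}$ of $\foreachN{D}$. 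By rule \Rule{red-rel-method} I would then fix $\kA < k$ together with related argument values $\redVLRConfk{\kA}{t_S}{v'}{\uTA}$ and $\redVLRConfk{\kA}{t_i}{v_i}{\uT_i}$ for all $i \in [n]$, and discharge the obligation $\redLRConfk{\kA}{t}{\Angle{\subst{x}{v'},\foreach{\subst{x_i}{v_i}}{n}}e}{(\mT{m}{t_S}\ \uTA)\ (\foreach{\uT}{n})}$.

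The idea is to route this through the body translation. Writing $\vbFG = \Angle{\subst{x}{v'}, \foreach{\subst{x_i}{v_i}}{n}}$ and $\vbTL = \Angle{\subst{\xT}{\uTA}, \foreach{\subst{\xT_i}{\uT_i}}{n}}$, the hypothesis records $\vbMethTL(\mT{m}{t_S}) = \lambda \xT . \lambda (\foreach{\xT }{n}). \expT$ with $\tdExpTrans{\pair{\foreachN{D}}{\fgEnv}}{e : t}{\expT}$ under $\fgEnv = \{ x : t_S, \foreach{x_i : t_i}{n} \}$. Since $\uTA$ and the $\uT_i$ are TL values, a fixed sequence of target reductions via \Rule{tl-method}, \Rule{tl-lambda} and \Rule{tl-case} (the last after desugaring the tuple pattern) yields $(\mT{m}{t_S}\ \uTA)\ (\foreach{\uT}{n}) \reduceSym^* \vbTL(\expT)$, consuming no source steps.

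To apply Lemma~\ref{le:exp-red-equivalent} at index $\kA$ I must supply its two premises at that index. The first, $\redLRk{\kA}{\triple{\foreachN{D}}{\vbMethTL}{\fgEnv}}{}{\vbFG}{\vbTL}$, follows from rule \Rule{red-rel-vb}: the assumed value relations, embedded into the expression relation via the standard value-into-expression property, give exactly $\redLRConfk{\kA}{t'}{\vbFG(x')}{\vbTL(X')}$ for every binding $(x':t') \in \fgEnv$. The second, $\redLRk{\kA}{}{}{\foreachN{D}}{\vbMethTL}$, is precisely the induction hypothesis, available because $\kA < k$. The lemma then delivers $\redLRConfk{\kA}{t}{\vbFG(e)}{\vbTL(\expT)}$, and I conclude the required $\redLRConfk{\kA}{t}{\vbFG(e)}{(\mT{m}{t_S}\ \uTA)\ (\foreach{\uT}{n})}$ by backward closure of the expression relation under target reduction — a basic property (in the spirit of \Rule{lr-step}, obtained by transporting each of the three cases of \Rule{red-rel-exp} across the finitely many TL steps above). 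The base case $k = 0$ is vacuous, since \Rule{red-rel-method} quantifies over the empty set of indices $\kA < 0$.

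The crux, and the only genuinely delicate point, is the interplay between the strict inequality $\kA < k$ in \Rule{red-rel-method} and the strong induction: it is exactly this strictness that makes the declaration relation available at the \emph{smaller} index $\kA$, so that the induction hypothesis discharges the second premise of Lemma~\ref{le:exp-red-equivalent}. Had we chosen $\leq$ in \Rule{red-rel-method} (choice (A) of Section~\ref{sec:overview}), the same step would require $\redLRk{k}{}{}{\foreachN{D}}{\vbMethTL}$ at the current index to prove itself, and the induction would be circular; this is the promised reason for placing $<$ in \Rule{red-rel-method} and $\leq$ in \Rule{red-rel-iface}. I expect the remaining obligations — verifying the precise target reduction chain and checking backward closure for the diverge and panic cases — to be routine, with all the real content residing in this indexing alignment.
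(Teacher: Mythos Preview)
Your proposal is correct and follows essentially the same approach as the paper: induction on $k$, unfolding \Rule{red-rel-decls} and \Rule{red-rel-method}, building $\vbFG$, $\vbTL$, $\fgEnv$, invoking Lemma~\ref{le:exp-red-equivalent} at the smaller index $k'$ (with the induction hypothesis supplying $\redLRk{k'}{}{}{\foreachN{D}}{\vbMethTL}$), and closing up via the target-reduction lemma (the paper's Lemma~\ref{le:target-step-preserve-equivalence}). Your explicit emphasis on why the strict inequality in \Rule{red-rel-method} is what makes the induction go through matches the paper's own commentary after the proof.
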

\begin{proof}
   To verify (1) $\redLRk{k}{}{}{\foreachN{D}}{\vbMethTL}$ for
   each
   $
   \FUNC\ (x \ t_S) \ m (\foreach{x_i \ t_i}{n}) \ t \ \{ \RETURN\ e \} \ \mbox{in} \ \foreachN{D}
   $
    we have to show based on rules \Rule{red-rel-decls} and \Rule{red-rel-method} that
   \bda{c}
     \forall \kA < k,
                v, \uT, \foreach{v}{n}, \foreach{\uT}{n}.
                    (\redLRConfk{\kA}{t_S}{v}{\uT}
                    \wedge (\forall i \in [n]. \redLRConfk{\kA}{t_i}{v_i}{\uT_i}))
         \\[1.5mm] \implies (2) \ \redLRConfk{\kA}
               {t}
               {\Angle{\subst{x}{v},\foreach{\subst{x_i}{v_i}}{n}} e}
               {(\mT{m}{t_S} \ \uT) \ (\foreach{\uT}{n})}
    \eda

   \noindent
   We verify the result by induction on $k$.
   \begin{itemize}
   \item
       {\bf Case} $k=0$ or $k=1$: Holds immediately. See rule \Rule{red-rel-exp}.

     \item
     {\bf Case} $k \implies k+1$:
     Suppose $\kA < k+1$
     and (3) $\redLRConfk{\kA}{t_S}{v}{\uT}$
     and (4) $\redLRConfk{\kA}{t_i}{v_i}{\uT_i}$
     for some $v$, $\uT$, $v_i$, $\uT_i$ for $i \in [n]$.
     Define $\vbFG = \Angle{\subst{x}{v},\foreach{\subst{x_i}{v_i}}{n}}$
     and $\vbTL = \Angle{\subst{\xT}{\uT},\foreach{\subst{\xT_i}{\uT_i}}{n}}$
     and $\fgEnv = \{ x : t_S, \foreach{x_i : t_i}{n} \}$.
     \\[\medskipamount]
     (5)   $\redLRk{\kA}{\triple{\foreachN{D}}{\vbMethTL}{\fgEnv}}{}{\vbFG}{\vbTL}$
     via (3) and (4).
     \\[1mm]
     (6) $\redLRk{\kA}{}{}{\foreachN{D}}{\vbMethTL}$ by induction.
     \\[1mm]
     (7) $\tdExpTrans{\Angle{\foreachN{D},\fgEnv}}{e : t}{E}$ from the assumption and rule \Rule{td-method}.
     \\
     (8) $\redLRConfk{\kA}{t}{\vbFG e}{\vbTL E}$
     via (5), (6), (7), and Lemma~\ref{le:exp-red-equivalent}.
     \\[0.5mm]
     (9) $\reduceTLN{\vbMethTL}{(\mT{m}{t_S} \ \uT) \ (\foreach{\uT}{n})}{\vbTL \expT}$
     \\\phantom{xx}
     via the assumption that $\vbMethTL(\mT{m}{t_S}) = \lambda \xT . \lambda (\foreach{\xT }{n}). \expT$.
     \\
     (10) $\redLRConfk{\kA}
               {t}
               {\vbFG e}
               {(\mT{m}{t_S} \ \uT) \ (\foreach{\uT}{n})}$
     \\\phantom{xx}
     via (8), (9) and because target reductions do not affect the step index\\\phantom{xx}
     (Lemma~\ref{le:target-step-preserve-equivalence} in the Appendix).
     \\[\smallskipamount]
    %% MS: see Lemma~\ref{le:target-step-preserve-equivalence}
    Statement (10) corresponds to (2).
    Thus, we can establish (1). \qed
  \end{itemize}
\end{proof}

If we would find $\leq$ instead of $<$ in rule \Rule{red-rel-method},
the proof would not go through. We would then need to establish
the implication at the beginning of the proof for $k' \leq k$, but
the induction hypothesis gives us only
$\redLRk{k' - 1}{}{}{\foreachN{D}}{\vbMethTL}$ in (6).

We state our main result that the dictionary-passing translation
preserves the dynamic behavior of FG programs.

\begin{theorem}[Program Equivalence]
\label{theo:prog-equiv}
Let $\tdProgTrans{\ignore{\PACKAGE\ \MAIN;} \foreachN{D} \ \FUNC\ \MAIN () \{ \_ = e \}}
{\LET\ \foreach{\mT{m_i}{{t_S}_i} = \expT_i}{n} \ \IN\ \expT}$
where we assume that $e$ has type $t$.
Let $\vbMethTL = \Angle{\foreach{\subst{\mT{m_i}{{t_S}_i}}{\expT_i}}{n}}$.
Then, we find that
$\redLRConfk{k}{t}{e}{\expT}$
for any $k$.
\end{theorem}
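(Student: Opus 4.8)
The plan is to obtain this theorem as a direct corollary of the two preceding lemmas, so almost no new work is needed; the genuine difficulty --- in particular the $<$-versus-$\leq$ choice between rules \Rule{red-rel-iface} and \Rule{red-rel-method} --- has already been absorbed into Lemma~\ref{le:exp-red-equivalent} and Lemma~\ref{le:method-red-rel-equiv}. First I would invert the program-translation rule \Rule{td-prog} applied to the hypothesis. This yields two facts: the main expression is translated under the empty environment, giving $\tdExpTrans{\pair{\foreachN{D}}{\EmptyFgEnv}}{e : t}{\expT}$; and each of the $n$ function declarations $D_i' = \FUNC\ (x_i \ {t_S}_i) \ m_i M_i \ \{ \RETURN\ e_i \}$ occurring in $\foreachN{D}$ is translated by \Rule{td-method} to the lambda $\expT_i$, i.e.\ $\tdMethTrans{\foreachN{D}}{D_i'}{\expT_i}$.

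Next I would discharge the declarations relation $\redLRk{k}{}{}{\foreachN{D}}{\vbMethTL}$. Since $\vbMethTL = \Angle{\foreach{\subst{\mT{m_i}{{t_S}_i}}{\expT_i}}{n}}$ binds each method name $\mT{m_i}{{t_S}_i}$ to precisely the translation $\expT_i$ of its declaration $D_i'$, the substitution $\vbMethTL$ matches the hypothesis of Lemma~\ref{le:method-red-rel-equiv}, which then delivers $\redLRk{k}{}{}{\foreachN{D}}{\vbMethTL}$ for every $k$. This is the one step warranting care: one must confirm that the $n$ bindings of $\vbMethTL$ are in exact correspondence with the $\FUNC$ declarations of $\foreachN{D}$ and have the shape $\lambda \xT . \lambda (\foreach{\xT}{n}). \expT_i$ required by the lemma, which is exactly what the method-translation premises of \Rule{td-prog} guarantee.

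Finally I would instantiate Lemma~\ref{le:exp-red-equivalent} with the empty local environment. Because $\fgEnv = \EmptyFgEnv$ the premise of rule \Rule{red-rel-vb} is vacuous, so the empty substitutions $\vbFG = \Angle{}$ and $\vbTL = \Angle{}$ satisfy $\redLRk{k}{\triple{\foreachN{D}}{\vbMethTL}{\EmptyFgEnv}}{}{\vbFG}{\vbTL}$. Feeding in the expression translation $\tdExpTrans{\pair{\foreachN{D}}{\EmptyFgEnv}}{e : t}{\expT}$, these empty substitutions, and the declarations relation from the previous step, Lemma~\ref{le:exp-red-equivalent} yields $\redLRConfk{k}{t}{\vbFG(e)}{\vbTL(\expT)}$. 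Because empty substitutions act as the identity, $\vbFG(e) = e$ and $\vbTL(\expT) = \expT$, giving exactly $\redLRConfk{k}{t}{e}{\expT}$; since $k$ was arbitrary, the conclusion holds for any $k$. In short, the theorem is a composition of the two lemmas, and the only obstacle worth naming is the bookkeeping that the method substitution $\vbMethTL$ supplied in the statement coincides with the one produced by the translation, so that the premise of Lemma~\ref{le:method-red-rel-equiv} genuinely applies.
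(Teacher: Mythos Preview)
Your proposal is correct and follows essentially the same approach as the paper: the theorem is obtained as a direct corollary of Lemma~\ref{le:exp-red-equivalent} and Lemma~\ref{le:method-red-rel-equiv}, with the empty environment making the substitution premise of the former vacuous. The paper's proof is a one-liner stating exactly this, and your version simply spells out the inversion of \Rule{td-prog} and the bookkeeping about $\vbMethTL$ that the paper leaves implicit.
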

\begin{proof}
  Follows from Lemmas~\ref{le:exp-red-equivalent} and~\ref{le:method-red-rel-equiv}.
  \qed
\end{proof}

%--------------------------------------------------------
%--------------------------------------------------------
\section{Related Work}
\label{sec:related}

Logical relations have a long tradition of proving properties of
typed programming languages.
Such properties include termination~\cite{DBLP:journals/jsyml/Tait67,journals/iandc/Statman85},
type safety~\cite{Skorstengaard}, and program equivalence~\cite[Chapters~6,~7]{ATAPL}.
A logical relation (LR) is often defined inductively, indexed by type.
If its definition is based on an operational semantics, the LR is called
syntactic~\cite{conf/icalp/Pitts98,journals/entcs/CraryH07}.
With recursive types, a step-index~\cite{journals/toplas/AppelM01,10.1007/11693024_6}
provides a decreasing measure
to keep the definition well-founded.
See \cite[Chapter~8]{books/daglib/0085577} and \cite{Skorstengaard} for introductions to the topic.

LRs are often used to relate two terms of the same language. For our translation, the two terms
are from different languages, related at a type from the source language.
Benton and Hur~\cite{conf/icfp/BentonH09} prove correctness of compiler transformations.
They used a step-index LR
to relate a denotational semantics of the $\lambda$-calculus with recursion to configurations of a SECD-machine.
The setup relies on
biorthogonality~\cite{journals/mscs/Pitts00,conf/lics/MelliesV05,conf/dagstuhl/Pitts10,jaber:hal-00594386}
to allow for compositionality and extensionality of equivalences.

Hur and Dreyer~\cite{conf/popl/HurD11} build on this idea to show equivalence between
an expressive source language (polymorphic $\lambda$-calculus with references, existentials, and recursive types)
and assembly language. Their biorthogonal, step-indexed Kripke LR does not directly relate the two languages
but relies on abstract language specifications. The Kripke part of the LR~\cite{Pitts98operationalreasoning}
allows reasoning about the shape of the heap.

Our setting is different in that we consider a source language with support for overloading.
Besides structured data and functions, we need to cover interface values.
This then leads to some challenges to get the step index right.
Recall Figure~\ref{f:step-index} and the discussion in Section~\ref{sec:overview}.

Simulation or bisimulation (see e.g. \cite{journals/jacm/SumiiP07}) is another common technique for showing
program equivalences. In our setting, using this technique amounts to proving that reduction and translation
commutes:
if source term $e$ reduces to $e'$ and translates to target term $E$, then $e'$ translates
to $E'$ such that $E$ reduces to $E''$ (potentially in several steps) with $E' = E''$.
One challenge is that two target
terms $E'$ and $E''$ are not necessarily syntactically equal but only semantically.
With LR, we abstract away certain details of single step reductions, as we only compare values not intermediate
results. A downside of the LR is that getting the step index right is sometimes not trivial.
%% MS omit: felt like saying, of course, we could use here a simulation (but I'm not sure).
%% However, we would like to extend our translation to other settings as well, for example
%% to provide a type-directed translation for Featherweight Go extended with generics~\cite{FeatherweightGo}.
%% In such a setting, the two target
%% terms $E'$ and $E''$ are not necessarily syntactically equal but only semantically.

Paraskevopoulou and Grover~\cite{journals/pacmpl/Paraskevopoulou21a} combine simulation
and an untyped, step-indexed LR~\cite{conf/popl/AcarAB08}
to relate the translation of a reduced expression (the $E'$ from the preceding paragraph)
with the reduction result of the translated expression (the $E''$). They use this technique
to prove correctness of CPS transformations using small-step and big-step operational
semantics. Resource invariants connect the number of steps
a term and its translation might take, allowing them to prove that
divergence and asymptotic runtime is preserved by the transformation. Our LR
does not support resource invariants but includes a case for divergence directly.

Hur and coworkers~\cite{conf/popl/HurDNV12}
as well as
Hermida and coworkers~\cite{hermida_reddy_robinson_santamaria_2022}
also blend bisimulation with LRs, building
on previous results~\cite{conf/popl/HurD11}.

%--------------------------------------------------------
%--------------------------------------------------------
\section{Conclusion}
\label{sec:conc}

In this work, we established a strong semantic preservation result for a type-directed translation scheme of Featherweight Go.
To achieve this result, we rely on syntactic, step-indexed logical relations. There are some subtle corners and
we gave a detailed discussion of how to get the definition of logical relations right
so that the proofs will go through. The proofs are still hand-written where all cases are worked out in detail.
To formalize the proofs in a proof assistant we yet need to mechanize the source and target semantics.
This is something we plan to pursue in future work.

We believe that the methods developed in this work
will be useful in other language settings that employ a type-directed translation scheme
for a form of overloading, e.g.~consider Haskell type classes~\cite{Hall:1996:TCH:227699.227700} and
traits in Scala~\cite{scala} and Rust~\cite{rust}.
This is another topic for future work.
In another direction, we plan to adapt our translation scheme
and proof method to cover Featherweight Go extended with generics~\cite{FeatherweightGo}.

\paragraph{Acknowledgments.}
We thank the MPC'22 reviewers for their comments.

\bibliography{main}

\pagebreak

\appendix

%--------------------------------------------------------
%--------------------------------------------------------
\section{Properties of FG and TL}

The translation rules from FG to the target language also induce a set of typing
rules for FG by simply omitting the translation part. These typing rules are slightly different
from FG's original typing rules~\cite{FeatherweightGo},
because the translation rules are not syntax-directed due the structural subtyping rule \Rule{td-sub}
defined in \Cref{f:type-directed-exp}.
We could integrate this rule via the other rules but this would make all
the rules harder to read. Hence, we prefer to have a separate rule \Rule{td-sub}.
Nevertheless, the typing rules induced by the translation are equivalent to FG's original typing rules.

\begin{lemma}\label{lem:fg-typing-equiv}
  Assume $P$ is an FG program. Then we have that
  $P$ is well-typed according to FG's original typing rules \cite[Figure 13]{FeatherweightGo} if, and only if,
  $\tdProgTrans{P}{\program}$ for some $\program$.
\end{lemma}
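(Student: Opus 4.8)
The plan is to reduce the biconditional to an equivalence between two \emph{typing} systems for FG and then compare them rule by rule. Stripping the gray-boxed output $\leadsto \expT$ from the rules in \Cref{f:type-directed-exp,f:upcast-downcast,f:type-directed-meth-prog} turns the translation into a pure typing system; call its judgments $\vdash_{T}$ and write $\vdash_{O}$ for FG's original judgments \cite[Figure 13]{FeatherweightGo}. First I would observe that, given an induced typing derivation, the translation output is a total function of it: every datum the output depends on is supplied by a typing premise (e.g.\ the existence of the method bindings $\mT{m_i}{t_S}$ guaranteed by $\methodSpecifications{\foreachN{D}}{t_S}\supseteq\foreachN{S}$ in \Rule{td-cons-struct-iface}, or the enumeration of implementing structs in \Rule{td-destr-iface-iface}), and the resulting $\expT$ is uniquely determined. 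Hence decorating an induced derivation with outputs yields a translation derivation and erasing outputs recovers the induced derivation, so $\tdProgTrans{P}{\program}$ holds for some $\program$ iff $P$ is typable under $\vdash_{T}$. It then remains to prove $\vdash_{O}$-typability $\iff$ $\vdash_{T}$-typability.

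The essential difference between the two systems is how subtyping is handled: the original rules bake $\subtypeOf{s}{t}$ side conditions into the premises of syntax-directed rules (so each expression receives a unique type), whereas the induced system forces exact type matches in every structural rule and factors all subtyping through the single subsumption rule \Rule{td-sub}. The key to bridging this is the characterisation
\[
  \fgEnv \vdash_{T} e : u
  \quad\Longleftrightarrow\quad
  \exists t.\ \fgEnv \vdash_{O} e : t \ \wedge\ \fgSub{\foreachN{D}}{t}{u},
\]
which I would prove by induction on derivations in both directions. This first requires reflexivity and transitivity of $<:$: reflexivity follows from \Rule{sub-struct-refl} and from \Rule{sub-iface} (an interface's method set contains itself), and transitivity is a short case analysis on the last subtyping rule, using that $\methodSpecifications{\foreachN{D}}{\cdot}$ is monotone under $\supseteq$. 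In the $(\Leftarrow)$ direction I induct on the original derivation witnessing $\fgEnv\vdash_{O} e : t$, building $\fgEnv\vdash_{T} e : t$ and then applying \Rule{td-sub} once to reach $u$ via $\fgSub{\foreachN{D}}{t}{u}$; inside the construction of $\fgEnv\vdash_{T} e : t$, any subexpression the original rule typed at a subtype is coerced up to the exact demanded type with \Rule{td-sub} (for instance a struct literal whose field expressions are subtypes of the field types). In the $(\Rightarrow)$ direction the only non-structural case is \Rule{td-sub} itself, which is absorbed by transitivity of $<:$; all structural cases read off an original-system premise directly.

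Finally I would lift the expression-level characterisation to declarations and programs. A method declaration $\FUNC\ (x\ t_S)\ m(\foreach{x\ t}{n})\ t\ \{\RETURN\ e\}$ is well-typed in either system exactly when its body types at a subtype of the declared return type $t$; on the induced side this is $\vdash_{T} e : t$, which by the characterisation (taking $u = t$) coincides with the original premise $\fgSub{\foreachN{D}}{s}{t}$ for the body's type $s$. Rule \Rule{td-prog} then quantifies over all method declarations and the main expression, so program typability transfers in both directions, closing the biconditional. I expect the main obstacle to be the $(\Rightarrow)$ direction of the characterisation, namely making the induction robust against the non-syntax-directedness introduced by \Rule{td-sub}; phrasing the invariant as ``a $\vdash_{T}$ typing at $u$ witnesses some original type below $u$'' is what tames it, and transitivity of structural subtyping is the one auxiliary fact that must be established with care.
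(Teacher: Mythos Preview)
Your approach is essentially the paper's: it too proves the expression-level correspondence by induction on derivations (your characterization is exactly its facts (c) and (d)) and then lifts to methods and programs. One small point the paper isolates that you elide: the premise of \Rule{td-sub} is the interface-construction judgment $\tdUpcast{\foreachN{D}}{\subtypeOf{t}{u}}{\cdot}$, not the subtyping relation itself, and the two coincide only modulo the reflexive struct case $t = u = t_S$, where no \Rule{td-sub} application is needed; the paper records this as a separate fact, but it is a detail you would hit immediately when writing out your $(\Leftarrow)$ direction.
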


\begin{proof} %%[of Lemma~\ref{lem:fg-typing-equiv}]
  We write
  $\fgEnv \turnsFG e : t$ for FG's original typing relation on expressions, and $\turnsFG P~\mathsf{ok}$
  for FG's original typing relation on programs. These relation are defined in \cite[Figure 13]{FeatherweightGo},
  the sequence of declaration $\foreachN{D}$ is implicit there. Further, we need
  to prepend $\mbox{\kw{package}}\,\mbox{\kw{main}};$ to program $P$.
  FG's original subtyping relation, written
  $\foreachN{D} \turnsFG \subtypeOf{t}{u}$ is identical to the one defined in \Cref{f:fg}, expect
  that in FG's original definition $\foreachN{D}$ is implicit.

  We then prove the following facts, where (e) is the claim of the lemma.
  \begin{EnumAlph}
  \item If $\foreachN{D} \turnsFG \subtypeOf{t}{u}$ then either $\tdUpcast{\foreachN{D}}{\subtypeOf{t}{u}}{\expT}$ for some $E$
    or $t = u$ and $t$ is a struct type.
  \item If $\tdUpcast{\foreachN{D}}{\subtypeOf{t}{u}}{\expT}$ then $\foreachN{D} \turnsFG \subtypeOf{t}{u}$.
  \item If $\fgEnv \turnsFG e : t$ then
    $\tdExpTrans{\pair{\foreachN{D}}{\fgEnv}}{e : t}{\expT}$ for some $\expT$.
  \item If $\tdExpTrans{\pair{\foreachN{D}}{\fgEnv}}{e : t}{\expT}$ then
    $\fgEnv \turnsFG e : t'$ for some $t'$ with $\foreachN{D} \turnsFG \subtypeOf{t'}{t}$.
  \item $\turnsFG P~\mathsf{ok}$ iff
    $\tdProgTrans{P}{\program}$ for some $\program$.
  \end{EnumAlph}

  We prove (a) and (b) by case distinctions on the last rule of the given derivations;
  (c) and (d) follow by induction on the derivations, using (a) and (b). Claim (e) then
  follows by examining the type rules, using (c), (d), and conditions FG2, FG3, FG4. \qed
\end{proof}

%% MS: note, the below results are used implicitly, see ``observing reductions''.
%% \ms{nuke, there's no reference to these results in any of the proofs,
%%   they may be necessary but not sufficient in our reasoning,
%%   e.g. we make observations about how to decompose a ``big'' evaluation step into some ``smaller'' steps,
%%   likely being deterministic is a consequence but doesn't seem to be directly needed.}

\begin{lemma}[FG reductions are deterministic]
  \label{lem:fg-determ}
  If $\foreachN{D} \turnsFG \reduce{e}{e'}$ and
  $\foreachN{D} \turnsFG \reduce{e}{e''}$ then $e = e'$.
\end{lemma}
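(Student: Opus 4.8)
The plan is to treat this as a standard determinacy result for a context-based, call-by-value reduction relation, and to derive it from two ingredients: a \emph{unique decomposition} property for evaluation contexts, and determinacy of the individual base rules. First I would isolate the \emph{redexes}, i.e.\ the left-hand sides of the three base rules \Rule{fg-field}, \Rule{fg-call} and \Rule{fg-assert}, namely $t_S\{\foreachN{v}\}.f$, $v.m(\foreachN{v})$ and $v.(t)$, where all indicated subterms are values. Every single step $\foreachN{D} \turnsFG \reduce{e}{e'}$ then arises by wrapping exactly one base step inside an evaluation context via \Rule{fg-context} (with $\EvCtx = \Hole$ covering a base step at the top). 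So if I can show that a reducible $e$ determines its context $\EvCtx$ and redex $r$ uniquely, and that $r$ itself reduces uniquely, then $e'$ is forced and I obtain $e' = e''$.

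The main lemma to establish is unique decomposition: for every expression $e$, either $e$ is a value, or there is exactly one pair consisting of an evaluation context $\EvCtx$ and a redex $r$ with $e = \EvCtx[r]$. I would prove this by induction on the structure of $e$, reading off the intended order from the grammar of $\EvCtx$. That grammar enforces left-to-right, innermost-first evaluation: in a struct literal $t_S\{\foreachN{e}\}$ and in an argument list the hole sits at the leftmost non-value position (all components to its left being values), and in a method call $e.m(\foreachN{e})$ the receiver is reduced before any argument. The content of this step is to check, form by form, that no two distinct context shapes can simultaneously host a redex, so the leftmost-innermost position is singled out unambiguously; in particular a value has no decomposition, and a redex at the top admits only $\EvCtx = \Hole$.

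Next I would show that each redex reduces to a unique result under the base rules. For \Rule{fg-field}, condition \FGUniqueFields\ (distinct field names per struct) guarantees that the field name selects a single index, so $t_S\{\foreachN{v}\}.f_i$ reduces only to $v_i$. For \Rule{fg-call}, condition \FGUniqueReceiver\ (each declaration uniquely identified by receiver type and method name) pins down the method body $e$ and hence the substituted result. For \Rule{fg-assert} the rule returns the value $v$ unchanged, and the side condition $\fgSub{\foreachN{D}}{t_S}{t}$ is a fixed relation, so there is nothing to choose. Combining the two ingredients: writing $e = \EvCtx[r]$ by unique decomposition, both $e'$ and $e''$ must equal $\EvCtx[r']$ for the unique reduct $r'$ of $r$, hence $e' = e''$.

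I expect the unique-decomposition lemma to be the main obstacle, though only in bulk rather than in depth: it is a routine but careful case analysis matching each syntactic form against the evaluation-context grammar and verifying that the call-by-value order admits no ambiguity. The base-rule step is short, but it is exactly where the well-formedness conditions \FGUniqueFields\ and \FGUniqueReceiver\ are needed; without them field or method lookup could branch and determinacy would fail.
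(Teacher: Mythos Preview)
Your approach is essentially the same as the paper's: both rest on uniqueness of the context/redex decomposition together with determinacy of the base rules, and both invoke \FGUniqueReceiver\ for \Rule{fg-call} (you additionally and correctly flag \FGUniqueFields\ for \Rule{fg-field}). Two small points are worth tightening. First, your unique-decomposition lemma as stated is too strong: a panicking expression such as $\EvCtx[t_S\{\foreachN{v}\}.(t)]$ with $\neg\,\fgSub{\foreachN{D}}{t_S}{t}$ is neither a value nor of the form $\EvCtx[r]$ for a redex $r$, so the dichotomy fails; for determinism you only need the uniqueness direction (if $\EvCtx_1[r_1] = \EvCtx_2[r_2]$ with $r_1,r_2$ redexes then $\EvCtx_1=\EvCtx_2$ and $r_1=r_2$), which is exactly what the paper isolates. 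Second, your sentence ``every single step arises by wrapping exactly one base step inside an evaluation context'' silently assumes that nested uses of \Rule{fg-context} can be flattened; the paper makes this explicit by first showing that evaluation contexts compose and then normalising any derivation to end in at most one \Rule{fg-context}. Neither point threatens your overall plan, but both deserve a line of justification in a full proof.
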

\begin{proof} %%[of Lemma~\ref{lem:fg-determ}]
  We first state and prove three sublemmas:
  \begin{EnumAlph}
  \item If $e = \EvCtx_1[\EvCtx_2[e']]$ then there exists $\EvCtx_3$ with $e = \EvCtx_3[e']$.
    The proof is by induction on $\EvCtx_1$.
  \item If $\foreachN{D} \turnsFG \reduce{e}{e'}$ then there exists a derivation of
    $\foreachN{D} \turnsFG \reduce{e}{e'}$ that ends with
    at most one consecutive application of rule \Rule{fg-context}. The proof is by induction
    on the derivation of $\foreachN{D} \turnsFG \reduce{e}{e'}$. From the IH, we know that this derivation ends with
    at most \emph{two} consecutive applications of rule \Rule{fg-context}. If there are
    two such consecutive applications, (a) allow us to merge the two
    evaluation contexts involved, so that we need only one consecutive application
    of \Rule{fg-context}.
  \item We call an FGG expression \emph{directly reducible} if it reduces but not by rule \Rule{fg-context}.
    If $e_1$ and $e_2$ are now directly reducible and $\EvCtx_1[e_1] = \EvCtx_2[e_2]$ then $\EvCtx_1 = \EvCtx_2$
    and $e_1 = e_2$. For the proof, we first note that $\EvCtx_1 = \Hole$ iff $\EvCtx_2 = \Hole$. This
    holds because directly reducible expressions have no inner redexes. The rest of the
    proof is then a straightforward induction on $\EvCtx_1$.
  \end{EnumAlph}
  Now assume $\foreachN{D} \turnsFG \reduce{e}{e'}$ and $\foreachN{D} \turnsFG \reduce{e}{e''}$.
  By (b) we may assume that both derivations
  ends with at most one consecutive application of rule \Rule{fg-context}. It is easy to see
    (as values do not reduce) that both derivations must end with the same rule. If this rule
    is not \Rule{fg-context}, then obviously $e' = e''$ (note condition FG4 for rule \Rule{fg-call}). Otherwise,
    we have the following situation with $R_1 \neq \Rule{fg-context}$ and
    $R_2 \neq \Rule{fg-context}$:
    \begin{mathpar}
      \inferrule*[left=fg-context]{
        \inferrule*[left=$R_1$]{
        }{
          \reduce{e_1}{e_1'}
        }
      }{
        \reduce{\BraceBelow{\EvCtx_1[e_1]}{= e}}{\BraceBelow{\EvCtx_1[e_1']}{= e'}}
      }

      \inferrule*[right=fg-context]{
        \inferrule*[right=$R_2$]{
        }{
          \reduce{e_2}{e_2'}
        }
      }{
        \reduce{\BraceBelow{\EvCtx_2[e_2]}{= e}}{\BraceBelow{\EvCtx_2[e_2']}{= e''}}
      }
    \end{mathpar}
    As neither $R_1$ nor $R_2$ are \Rule{fg-context}, we know that $e_1$ and $e_2$ are directly
    reducible. Thus, with $\EvCtx_1[e_1] = \EvCtx_2[e_2]$ and (c) we get $\EvCtx_1 = \EvCtx_2$ and
    $e_1 = e_2$. With $R_1$ and $R_2$ not being \Rule{fg-context}, we have $e_1' = e_2'$, so
    $e' = e''$ as required.
    \qed
\end{proof}

\begin{lemma}[Target reductions are deterministic]
  \label{lem:tl-determ}
  If $\reduceTL{\vbMethTL}{\expT}{E'}$ and
  $\reduceTL{\vbMethTL}{\expT}{E''}$ then $E' = E''$.
  Further, if $\turnsTL \reduce{\program}{\program'}$ and
  $\turnsTL \reduce{\program}{\program''}$ then $\program = \program'$.
\end{lemma}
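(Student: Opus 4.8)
The plan is to mirror the three-step structure used for FG in Lemma~\ref{lem:fg-determ}, since the target language also pins down the redex position via evaluation contexts $\REvCtxT$. The crux is a \emph{unique-decomposition} property: every reducible expression splits in exactly one way into an evaluation context and a base redex. I would first establish three auxiliary facts analogous to (a)--(c) in the proof of Lemma~\ref{lem:fg-determ}. (a) Context composition: if $\REvCtxT_1[\REvCtxT_2[E]] = E_0$ then there is $\REvCtxT_3$ with $E_0 = \REvCtxT_3[E]$, by induction on $\REvCtxT_1$. (b) Normalization: every derivation of $\reduceTL{\vbMethTL}{E}{E'}$ can be rearranged to end with \emph{at most one} consecutive use of \Rule{tl-context}, which follows from (a) by induction on the derivation, merging two nested contexts into one whenever \Rule{tl-context} is applied twice in a row.

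For (c) I would call $E$ \emph{directly reducible} if it reduces by one of \Rule{tl-lambda}, \Rule{tl-case}, \Rule{tl-method} (i.e.\ not by \Rule{tl-context}), and prove that for directly reducible $E_1,E_2$, the equation $\REvCtxT_1[E_1] = \REvCtxT_2[E_2]$ forces $\REvCtxT_1 = \REvCtxT_2$ and $E_1 = E_2$. The key observation is that directly reducible expressions have no inner redexes: the function and argument of $(\lambda \xT.E)\,\uT$ and of $\yT\,\uT$, and the scrutinee $\kT\,\foreachN{\uT}$ of a $\CASE$, are all values, and every value ($\kT\,\foreachN{\uT}$, $\lambda \xT.E$, $\xT$) is irreducible because no evaluation context descends under a $\lambda$ or into a fully applied constructor. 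Hence $\REvCtxT_1=\Hole$ iff $\REvCtxT_2=\Hole$, and the remaining cases follow by induction on $\REvCtxT_1$, where the two application shapes $\REvCtxT\,\expT$ and $\uT\,\REvCtxT$ enforce a strict left-to-right (call-by-value) order that forces the two contexts to agree.

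The main argument then runs as follows. Given two reductions from $E$, by (b) both can be written $\REvCtxT_i[E_i] \to \REvCtxT_i[E_i']$ with $E_i$ directly reducible (taking $\REvCtxT_i=\Hole$ when the whole step is itself a base rule). By (c) we get $\REvCtxT_1=\REvCtxT_2$ and $E_1=E_2$. Since $E_1=E_2$ is a single expression, its outermost shape selects exactly one base rule --- a $\lambda$-head gives \Rule{tl-lambda}, a variable head gives \Rule{tl-method}, a $\CASE$ gives \Rule{tl-case}, and these are mutually exclusive --- and each base rule is a function of the redex: $\beta$-substitution is determined, the method substitution $\vbMethTL$ is a function, and the matching clause in \Rule{tl-case} is unique because the constructors of the clauses are assumed distinct. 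Hence $E_1'=E_2'$ and therefore $E'=\REvCtxT_1[E_1']=\REvCtxT_2[E_2']=E''$.

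For the program relation, \Rule{tl-prog} is the only applicable rule; it leaves the top-level let-bindings unchanged and reduces the body under the fixed method substitution $\Angle{\foreachN{\subst{\yT_i}{\lambda \xT_i.\expT_i}}}$ read off from those bindings, so determinism of program steps follows immediately from the expression case. I expect the main obstacle to be sublemma (c): one must check carefully that no nontrivial evaluation context can sit above a base redex, which rests on values being irreducible and on the strict order imposed by $\REvCtxT\,\expT$ versus $\uT\,\REvCtxT$, together with the point that a constructor application $\kT\,\foreachN{\uT}$ is a \emph{value} rather than a redex, so that it never competes with \Rule{tl-lambda} or \Rule{tl-method} in the application case.
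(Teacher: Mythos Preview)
Your proposal is correct and follows essentially the same approach as the paper: the paper's proof explicitly says the expression case ``follows analogously to the proof of \Cref{lem:fg-determ}'', singles out the distinct-constructors assumption for \Rule{tl-case}, and derives determinism for programs from determinism for expressions---exactly the structure you spell out. Your write-up is simply a more detailed elaboration of what the paper leaves implicit.
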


\begin{proof} %% [of Lemma~\ref{lem:tl-determ}]
  Assume $\reduceTL{\vbMethTL}{\expT}{E'}$ and
  $\reduceTL{\vbMethTL}{\expT}{E''}$.
  The claim that $E' = E''$ follows analogously to the proof of \Cref{lem:fg-determ}.
  If both derivations end with rule \Rule{tl-case}, we get deterministic evaluation
  by the syntactic restriction that case clauses have distinct constructors.

  Deterministic evaluation for programs is a simple consequence of deterministic
  evaluation of expressions. \qed
\end{proof}

% --------------------------------------------------------
% --------------------------------------------------------
\section{Logical Relation Properties}

A fundamental property of step indexed logical relations is that
if two expressions are in a relation for $k$ steps then they are also in a relation
for any smaller number of steps.
(See \Rule{lr-mono} from \Cref{f:step-index}.)

\begin{lemma}[Monotonicity]
\label{le:monotonicity}
  (1) Let $\redLRConfk{k}{t}{e}{\expT}$ and $\kA \leq k$.
Then, we find that $\redLRConfk{\kA}{t}{e}{\expT}$.
  (2) Let $\redVLRConfk{k}{t}{v}{\uT}$ and $\kA \leq k$.
  Then, we find that $\redVLRConfk{\kA}{t}{v}{\uT}$.
\end{lemma}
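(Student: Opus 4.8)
The plan is to prove the two claims \emph{in sequence} rather than by a single combined induction: first establish claim~(2) for the value relation on its own, and then derive claim~(1) directly from~(2). At first glance the two relations look mutually recursive---the interface case of the value relation mentions the expression relation through its method premise, and \Rule{red-rel-exp} mentions the value relation in its terminating clause---so one might fear having to index the induction by the pair $(k,\size{\uT})$ that guarantees well-foundedness. However, the apparent circularity dissolves: the premises of \Rule{red-rel-iface} are already universally quantified over \emph{all} indices $k_1 \le k$ and $k_2 \le k$, so shrinking $k$ to $\kA \le k$ merely shrinks the range of these quantifiers and requires no appeal to monotonicity of the inner relations. Consequently~(2) can be proved without any reference to~(1).

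For~(2) I would induct on $\size{\uT}$. In case \Rule{red-rel-struct} we have $\uT = \kT_{t_S}\,(\foreach{\uT}{n})$ together with the premises $\redVLRConfk{k}{t_i}{v_i}{\uT_i}$; each $\uT_i$ is a strict subterm of $\uT$, so the induction hypothesis yields $\redVLRConfk{\kA}{t_i}{v_i}{\uT_i}$ for every $i$, and re-applying \Rule{red-rel-struct} gives the goal at $\kA$. In case \Rule{red-rel-iface} no recursion is needed: since $\kA \le k$, every $k_1 \le \kA$ satisfies $k_1 \le k$ and every $k_2 \le \kA$ satisfies $k_2 \le k$, so the two $\forall$-premises holding at $k$ immediately entail the corresponding premises at $\kA$, and \Rule{red-rel-iface} reassembles $\redVLRConfk{\kA}{t_I}{v}{\kT_{t_I}\,(\uT,\foreach{\yT}{n})}$.

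With~(2) in hand, claim~(1) follows directly, with no further induction. Unfolding \Rule{red-rel-exp} at the target index $\kA$, I must re-establish its three conjuncts. For the terminating clause, given $\kB < \kA$ and $\reduceFGk{\kB}{\foreachN{D}}{e}{v}$, note $\kB < \kA \le k$, so the hypothesis at $k$ supplies some $\uT$ with $\reduceTLN{\vbMethTL}{\expT}{\uT}$ and $\redVLRConfk{k-\kB}{t}{v}{\uT}$; since $\kA - \kB \le k - \kB$, applying~(2) downgrades this to $\redVLRConfk{\kA-\kB}{t}{v}{\uT}$, exactly as required. The divergence and panic clauses are even simpler: any witness $\kB < \kA$ also satisfies $\kB < k$, so the respective conclusions $\divergeMTL{\expT}$ and $\panicMTL{\expT}$ are inherited verbatim from the hypothesis at $k$.

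The only point that needs care---and the thing I would flag explicitly---is the interaction between the step index and the value relation inside the terminating clause of~(1): there the value relation is demanded at $k-\kB$, which is \emph{not} strictly smaller than $k$ when $\kB=0$, so a naive induction on $k$ carried out simultaneously over both claims would stall. The clean way around this is precisely the two-phase structure above: because~(2) is proved in full first (for all indices, by induction on $\size{\uT}$), its use at $k-\kB$ in the proof of~(1) is an appeal to an already-established lemma rather than to an induction hypothesis, and so no well-foundedness obligation arises at that step.
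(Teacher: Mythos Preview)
Your proof is correct, and the key observations match the paper's: the \Rule{red-rel-iface} case needs no inductive appeal because its premises are already universally quantified over all smaller-or-equal indices, the \Rule{red-rel-struct} case recurses on strict subvalues, and the divergence/panic clauses of \Rule{red-rel-exp} are immediate while the terminate clause needs monotonicity of the value relation.

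Where you differ from the paper is in organisation. The paper runs a single ``mutual induction over the derivations'' of both relations and disposes of \Rule{red-rel-exp} with ``follows immediately''---leaving implicit that the terminate clause appeals to the inductive hypothesis for~(2), and leaving the well-foundedness measure (essentially the lexicographic pair of step index and size of the target value used to justify that the definitions themselves are well-founded) unstated. Your two-phase decomposition makes the structure explicit: you first prove~(2) outright by induction on $\size{\uT}$, so that when you come to~(1) the appeal in the terminate clause is to an already-established lemma rather than to an inductive hypothesis, and no joint well-foundedness argument is needed. Your closing remark that a naive simultaneous induction on $k$ alone would stall at $k'' = 0$ is exactly the subtlety the paper's ``induction on derivations'' silently sidesteps; your arrangement makes the resolution visible. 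Both arguments are sound; yours is the more careful exposition.
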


\begin{proof}
  By mutual induction over the derivations $\redLRConfk{k}{t}{e}{\expT}$
  and
  $\redVLRConfk{k}{t}{v}{\uT}$.

  \noindent
  {\bf Case} \Rule{red-rel-exp}: Follows immediately.

  \noindent
      {\bf Case} \Rule{red-rel-struct}:
\begin{mathpar}
 \inferrule
     { \TYPE\ t_S \ \STRUCT\ \{ \foreach{f \ t}{n} \} \in  \foreachN{D}
       %% \fieldsOf{\foreachN{D}}{t_S} = \foreach{f_i \ t_i}{n}
       \\ \forall i \in [n]. \redVLRConfk{k}{t_i}{v_i}{\uT_i}
     }
     { \redVLRConfk{k}
       {t_S}
       {t_S \{ \foreach{v}{n} \}}
       {\kT_{t_S} \ (\foreach{\uT}{n})}
     }
 \end{mathpar}

  Follows immediately by induction.

  \noindent
      {\bf Case} \Rule{red-rel-iface}:
\begin{mathpar}
 \inferrule
    {\uT = \kT_{u_S} \ \foreachN{\uTA}
      \\ (1)~ \forall k_1 \leq k . \redVLRConfk{k_1}{u_S}{v}{\uT}
      \\ \methodSpecifications{\foreachN{D}}{t_I} = \{ \foreach{m M}{n} \}
      \\ (2)~ \forall k_2 \leq k, i \in [n] . \redLRConfk{k_2}
                       {m_i M_i}
                       {\methodLookup{\foreachN{D}}{(m_i,u_S)}}
                       {\yT_i}
     }
     { \redVLRConfk{k}{t_I}{v}
             {\kT_{t_I} \ (\uT, \foreach{\yT}{n})}
     }
\end{mathpar}

  Consider the first premise (1).
  If there exists $k_1 \leq \kA$ then $\redVLRConfk{k_1}{u_S}{v}{\uT}$. Otherwise,
  this premise holds vacuously.
  \\
  The same argument for $k_2 \leq \kA$ applies to the second premise (2)
  by unfolding $\redLRConfk{k_2}
                       {m_i M_i}
                       {\methodLookup{\foreachN{D}}{(m_i,u_S)}}
                       {\yT_i}$ via rule \Rule{red-rel-method}.
  \\
  Hence,
  $\redVLRConfk{\kA}{t_I}{v}{\kT_{t_I} \ (\uT, \foreach{\yT}{n})}$.
  \qed
\end{proof}
A similar monotonicity result applies to method definitions and declarations.

\begin{lemma}[Monotonicity 2]
\label{le:monotonicity2}
Let $\redLRConfk{k}{m M}{\FUNC\ (x \ t_S) \ mM \ \{ \RETURN\ e \}}{\uT}$ and $\kA \leq k$.
Then, we find that $\redLRConfk{\kA}{m M}{\FUNC\ (x \ t_S) \ mM \ \{ \RETURN\ e \}}{\uT}$.
\end{lemma}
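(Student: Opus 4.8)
The plan is to unfold the method relation on both sides using its unique defining rule \Rule{red-rel-method} and then observe that monotonicity here is an immediate consequence of the \emph{shape} of that rule, rather than requiring the mutual induction used for Lemma~\ref{le:monotonicity}. Recall that, writing $F$ for the declaration $\FUNC\ (x \ t_S) \ mM \ \{ \RETURN\ e \}$, the relation $\redLRConfk{k}{m M}{F}{\uT}$ holds exactly when, for every $\kB < k$ and all argument values $v', \uTA, \foreach{v_i}{n}, \foreach{\uT_i}{n}$ related at index $\kB$ (i.e. $\redVLRConfk{\kB}{t_S}{v'}{\uTA}$ together with $\redVLRConfk{\kB}{t_i}{v_i}{\uT_i}$ for all $i \in [n]$), the substituted body and the corresponding target application are related at the \emph{same} index $\kB$, namely $\redLRConfk{\kB}{t}{\Angle{\subst{x}{v'},\foreach{\subst{x_i}{v_i}}{n}} e}{(\uT \ \uTA) \ (\foreach{\uT}{n})}$. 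The crucial point is that the step index $k$ appears only as the strict upper bound of the universally quantified $\kB$; it does not reappear in the conclusion of the implication.

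First I would assume $\redLRConfk{k}{m M}{F}{\uT}$ and fix $\kA \le k$. To establish $\redLRConfk{\kA}{m M}{F}{\uT}$ via \Rule{red-rel-method}, I must verify the defining condition for index $\kA$: take an arbitrary $\kB < \kA$ together with argument values related at index $\kB$. Since $\kA \le k$, transitivity yields $\kB < k$, so this very instance lies in the domain of the universally quantified premise supplied by the hypothesis at index $k$. Instantiating that premise at this $\kB$ and these argument values produces exactly the required expression relation at index $\kB$, which is precisely what \Rule{red-rel-method} demands for index $\kA$. This completes the argument.

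No induction on $k$ or on the derivation is needed, and in particular no appeal to the expression-level monotonicity of Lemma~\ref{le:monotonicity} is required: because the argument hypotheses and the expression conclusion both sit at index $\kB$, shrinking the outer index from $k$ to $\kA$ merely restricts the range of the outer quantifier to the subset $\{\,\kB : \kB < \kA\,\} \subseteq \{\,\kB : \kB < k\,\}$, and a universal statement is trivially preserved under restricting its domain. The only thing to get right — and hence the single point worth stating carefully — is the inequality chain $\kB < \kA \le k$, so there is no genuine obstacle beyond this observation. The contrast with Lemma~\ref{le:monotonicity} is instructive: there the index decorates value and expression relations that feed recursively into one another, so monotonicity must be pushed through by mutual induction, whereas here the index guards a premise and the result falls out directly.
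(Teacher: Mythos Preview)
Your proof is correct and follows essentially the same approach as the paper, which simply says the result ``follows immediately by observing the premise of rule \Rule{red-rel-method}.'' You have spelled out in detail exactly what that observation is: the index $k$ occurs only as the strict upper bound of the universally quantified $\kB$, so shrinking it to $\kA \leq k$ merely restricts the quantifier's range.
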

\begin{proof}
  Follows immediately by observing the premise of rule \Rule{red-rel-method}.
  \qed
\end{proof}

\begin{lemma}[Monotonicity 3]
\label{le:monotonicity3}
Let $\redLRk{k}{}{}{\foreachN{D}}{\vbMethTL}$ and $\kA \leq k$.
  Then, we find that $\redLRk{\kA}{}{}{\foreachN{D}}{\vbMethTL}$.
\end{lemma}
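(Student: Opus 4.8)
The plan is to reduce this claim directly to Monotonicity 2 (Lemma~\ref{le:monotonicity2}), since the relation on declarations is, by its defining rule, nothing more than a uniform conjunction of per-declaration method relations. First I would unfold the hypothesis $\redLRk{k}{}{}{\foreachN{D}}{\vbMethTL}$ by inspecting rule \Rule{red-rel-decls}, whose single premise states that for every method declaration $\FUNC\ (x \ t_S) \ m M \ \{ \RETURN\ e \}$ occurring in $\foreachN{D}$ we have the method relation $\redLRConfk{k}{m M}{\FUNC\ (x \ t_S) \ mM \ \{ \RETURN\ e \}}{\mT{m}{t_S}}$.

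Next, for each such declaration I would invoke Lemma~\ref{le:monotonicity2}. Since $\kA \leq k$ by assumption, that lemma lowers every one of these method relations from index $k$ to index $\kA$, yielding $\redLRConfk{\kA}{m M}{\FUNC\ (x \ t_S) \ mM \ \{ \RETURN\ e \}}{\mT{m}{t_S}}$ for all declarations in $\foreachN{D}$. Finally, I would apply rule \Rule{red-rel-decls} in the reverse direction: having re-established its premise at index $\kA$ for every declaration, its conclusion is exactly $\redLRk{\kA}{}{}{\foreachN{D}}{\vbMethTL}$, which is the desired result.

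The key observation that makes this immediate is that \Rule{red-rel-decls} carries no step-index structure of its own beyond threading the single index uniformly through each per-declaration obligation; there is no nested quantification over smaller indices at this level. Consequently, no induction on $k$ is needed here, in contrast to Lemma~\ref{le:monotonicity} where the nested premises of \Rule{red-rel-iface} force a mutual induction. I therefore expect no genuine obstacle: the entire step-index work has already been discharged inside Monotonicity 2, and Monotonicity 3 is a purely compositional consequence obtained declaration-by-declaration.
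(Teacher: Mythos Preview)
Your proposal is correct and matches the paper's approach exactly: the paper's proof is the one-liner ``Follows via Lemma~\ref{le:monotonicity2},'' and you have simply spelled out the obvious unfolding of \Rule{red-rel-decls}, pointwise application of Monotonicity~2, and refolding.
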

\begin{proof}
  Follows via Lemma~\ref{le:monotonicity2}.
  \qed
\end{proof}

Monotonicity is an essential property that is exploited frequently in our proofs.
Another useful property is \Rule{lr-step} from Figure~\ref{f:step-index}.
We also need several variations of this property in our proofs.

\begin{lemma}
 \label{le:source-step-preserve-equivalence}
  Let $\redLRConfk{k}{t}{e}{\expT}$ for some $k$, $e$, $\expT$, $\foreachN{D}$ and $\vbMethTL$.
  Let $\reduceFGk{1}{\foreachN{D}}{e_2}{e}$ for some $e_2$.
  Then, we have that $\redLRConfk{k+1}{t}{e_2}{\expT}$.
\end{lemma}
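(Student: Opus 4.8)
The plan is to read this as the source-step (\Rule{lr-step}) instance in which the target is held fixed, and to prove it directly by unfolding the goal $\redLRConfk{k+1}{t}{e_2}{\expT}$ through rule \Rule{red-rel-exp} and discharging each of its three conjuncts (terminate, diverge, panic) in turn. No induction on $k$ is needed: the hypothesis $\redLRConfk{k}{t}{e}{\expT}$ already quantifies over all indices strictly below $k$, so the real work is routing each reduction of $e_2$ back to a reduction of $e$. The tool that makes this routing possible is determinism of FG reduction (Lemma~\ref{lem:fg-determ}): since $\reduceFGk{1}{\foreachN{D}}{e_2}{e}$ is the \emph{unique} first step, every run out of $e_2$ factors through $e$.

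Concretely, I would fix $k' < k+1$ (equivalently $k' \le k$) and a term $w$ with $\reduceFGk{k'}{\foreachN{D}}{e_2}{w}$, then split on whether $k' = 0$. For $k' \ge 1$, Lemma~\ref{lem:fg-determ} forces the run to begin with $\reduceFGk{1}{\foreachN{D}}{e_2}{e}$, hence $\reduceFGk{k'-1}{\foreachN{D}}{e}{w}$ with $k'-1 < k$, and I apply the matching conjunct of the hypothesis. For the terminate conjunct this produces $\uT$ with $\reduceTLN{\vbMethTL}{\expT}{\uT}$ and $\redVLRConfk{k-(k'-1)}{t}{w}{\uT}$; the identity $k-(k'-1) = (k+1)-k'$ then delivers exactly the index demanded at the higher step count. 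This arithmetic shift is the crux of the lemma, encoding the idea that one extra source step buys back one unit of step index. The diverge and panic conjuncts transfer without any index arithmetic, since they only ask for $\divergeMTL{\expT}$, respectively $\panicMTL{\expT}$, with nothing recorded on the target side.

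For the base subcase $k' = 0$ we have $w = e_2$. The terminate and panic obligations are then vacuous: their premises would make $e_2$ a value, respectively force $\panicFG{\foreachN{D}}{e_2}$, yet both are incompatible with the standing assumption that $e_2$ reduces (values are irreducible, and a panicking expression is stuck at its unique active redex, by the same unique-decomposition reasoning that underlies Lemma~\ref{lem:fg-determ}). The diverge obligation is the only one with content here: from $\divergeDFG{e_2}$ and the deterministic first step I obtain $\divergeDFG{e}$, and the hypothesis's diverge conjunct, instantiated at index $0 < k$, yields $\divergeMTL{\expT}$.

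The step I expect to be the main obstacle is not any single deduction but keeping the case split honest: I must confirm that a reducible expression is neither a value nor panicking (so the $k'=0$ terminate and panic cases really are vacuous), and I must track the shift $k-(k'-1) = (k+1)-k'$ precisely so the returned value relation lands at the strictly larger index. The one genuinely delicate boundary is $k = 0$, where the goal sits at index $1$ and the hypothesis is vacuous; there only the $k'=0$ diverge obligation survives and the appeal to the hypothesis's diverge clause needs $k \ge 1$, so I would dispatch this degenerate instance as a separate base case (in the style of the ``$k=0$ or $k=1$: holds immediately'' treatment in Lemma~\ref{le:method-red-rel-equiv}) rather than folding it into the general argument.
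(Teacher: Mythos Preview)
Your approach matches the paper's: unfold \Rule{red-rel-exp} for the goal and discharge each conjunct by prefixing the assumed step $\reduceFGk{1}{\foreachN{D}}{e_2}{e}$ to the runs quantified in the hypothesis. The paper's proof is much terser---three one-line claims, with no explicit appeal to determinism (Lemma~\ref{lem:fg-determ}) and no separate treatment of the $k'=0$ subcase---so your write-up is strictly more careful on both counts.

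Your diagnosis of the $k=0$ boundary is correct, but the proposed remedy is not. When $k=0$ the hypothesis $\redLRConfk{0}{t}{e}{\expT}$ holds vacuously for \emph{every} $\expT$, yet the diverge clause of the goal $\redLRConfk{1}{t}{e_2}{\expT}$ still demands $\divergeMTL{\expT}$ whenever $\divergeDFG{e_2}$. Take any diverging $e_2$, its one-step reduct $e$, and a terminating $\expT$: both hypotheses of the lemma are satisfied but the conclusion fails. So the statement is simply false at $k=0$, and no ``holds immediately'' base case in the style of Lemma~\ref{le:method-red-rel-equiv} can rescue it---in that lemma the conclusion at small indices is itself vacuous, whereas here it is not. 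The paper's own proof shares this gap (it does not flag the issue at all); the honest repair is to add $k \geq 1$ to the hypotheses, which suffices for your argument and is what the uses of the lemma in Lemma~\ref{le:exp-red-equivalent} effectively need.
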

\begin{proof}
  Based on our assumption we find that
\begin{mathpar}
 \inferrule[red-rel-exp]{
   {
     \ba{c}
     (1) \left(
       \forall k' < k,
       v ~.~ \reduceFGk{\kA}{\foreachN{D}}{e}{v} \implies \exists \uT.
       \reduceTLN{\vbMethTL}{\expT}{\uT} \wedge \redVLRConfk{k- \kA}{t}{v}{\uT}
     \right)
     \\
     \wedge
     \\
     (2) \left(
       \forall k' < k,
       e' ~.~ \reduceFGk{k'}{\foreachN{D}}{e}{e'} \wedge \divergeDFG{e'}
       \implies
       \divergeMTL{\expT}
       \right)
    \PANIC{
     \\
     \wedge
     \\
     (3) \left(
       \forall k' < k,
       e' ~.~ \reduceFGk{\kA}{\foreachN{D}}{e}{e'} \wedge \panicFG{\foreachN{D}}{e'} \implies
              \panicMTL{\expT}
       \right)
    }
    %% PANIC
     \ea
   }
 }{
   \redLRConfk{k}{t}{e}{\expT}
 }
\end{mathpar}

From (1) and $\reduceFGk{1}{\foreachN{D}}{e_2}{e}$ we conclude
that
\bda{c}
\forall k' < k+1,
       v ~.~ \reduceFGk{\kA+1}{\foreachN{D}}{e_2}{v} \implies \exists \uT.
       \reduceTLN{\vbMethTL}{\expT}{\uT} \wedge \redVLRConfk{k + 1 - \kA}{t}{v}{\uT}
       \eda

From (2) and $\reduceFGk{1}{\foreachN{D}}{e_2}{e}$ we conclude
that
\bda{c}
       \forall k' < k+1,
       e' ~.~ \reduceFGk{k'}{\foreachN{D}}{e_2}{e'} \wedge \divergeDFG{e'}
       \implies
       \divergeMTL{\expT}
       \eda

\PANIC{
From (3) and $\reduceFGk{1}{\foreachN{D}}{e_2}{e}$ we conclude
that
\bda{c}
       \forall k' < k+1,
       e' ~.~ \reduceFGk{\kA}{\foreachN{D}}{e_2}{e'} \wedge \panicFG{\foreachN{D}}{e'} \implies
         \panicMTL{\expT}
\eda
}
%% PANIC

Thus, $\redLRConfk{k+1}{t}{e_2}{\expT}$ and we are done.
  \qed
\end{proof}

\begin{lemma}
 \label{le:target-step-preserve-equivalence}
  Let $\redLRConfk{k}{t}{e}{\expT}$ for some $k$, $e$, $\expT$, $\foreachN{D}$ and $\vbMethTL$.
  Let $\reduceTLk{k_2}{\vbMethTL}{\expT_2}{\expT}$ for some $\expT_2$ and $k_2$.
  Then, we have that $\redLRConfk{k}{t}{e}{\expT_2}$.
\end{lemma}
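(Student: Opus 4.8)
The plan is to unfold the hypothesis $\redLRConfk{k}{t}{e}{\expT}$ via rule \Rule{red-rel-exp} into its three conjuncts (termination, divergence, panic of the source) and then re-establish each conjunct with $\expT_2$ in place of $\expT$, keeping the step index $k$ fixed throughout. The key observation driving the whole argument is that in all three conjuncts the antecedent speaks only about reductions of the \emph{source} expression $e$, which is untouched by the hypothesis $\reduceTLk{k_2}{\vbMethTL}{\expT_2}{\expT}$. So for each conjunct I would reuse the same source premise and show that the target-side conclusion, already discharged for $\expT$, transfers to its predecessor $\expT_2$. No induction is needed, and determinism of target reduction (Lemma~\ref{lem:tl-determ}) is not required for any of the existence arguments below.

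For the termination conjunct I would assume $\kA < k$ and $\reduceFGk{\kA}{\foreachN{D}}{e}{v}$. The hypothesis yields a target value $\uT$ with $\reduceTLN{\vbMethTL}{\expT}{\uT}$ and $\redVLRConfk{k-\kA}{t}{v}{\uT}$. I would keep the very same $\uT$: prepending $\reduceTLk{k_2}{\vbMethTL}{\expT_2}{\expT}$ and using transitivity of $\reduceSym^*$ gives $\reduceTLN{\vbMethTL}{\expT_2}{\uT}$, while the value relation $\redVLRConfk{k-\kA}{t}{v}{\uT}$ carries over verbatim.

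For the divergence conjunct I would assume $\kA < k$, $\reduceFGk{\kA}{\foreachN{D}}{e}{e'}$ and $\divergeDFG{e'}$, obtaining $\divergeMTL{\expT}$. It remains to show $\divergeMTL{\expT_2}$, i.e. that for every $m \in \Nat$ there is an $m$-step reduct of $\expT_2$. For $m \le k_2$ these are the intermediate terms along the given sequence $\reduceTLk{k_2}{\vbMethTL}{\expT_2}{\expT}$; for $m > k_2$ I write $m = k_2 + j$ and use $\divergeMTL{\expT}$ to produce a $j$-step reduct of $\expT$, which concatenated after the $k_2$-step prefix is an $m$-step reduct of $\expT_2$. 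Hence $\divergeMTL{\expT_2}$.

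The panic conjunct is the delicate one and I expect it to be the main obstacle. Here the hypothesis gives $\panicMTL{\expT}$ and I must derive $\panicMTL{\expT_2}$. The subtlety is that rule \Rule{tl-panic} defines $\panicMTL{\cdot}$ as recognising a \emph{stuck} configuration $\REvCtxT[\tcaseof{\kT\,\foreachN{\uT}}{[\foreachN{\clsT}]}]$ with no matching clause, and such a configuration is irreducible; so whenever $k_2 \ge 1$ the term $\expT_2$ (which reduces to $\expT$) is not literally of that form. The argument therefore rests on reading the target-panic obligation in \Rule{red-rel-exp} as ``$\expT$ reduces to a stuck panic configuration'', i.e. the reduction-reachability closure of the stuck predicate; this is the only reading under which the lemma holds, and it is exactly what is needed at step (10) of the proof of Lemma~\ref{le:method-red-rel-equiv}, where $k_2 \ge 1$. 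Under this reading the case closes just like termination: from $\reduceTLk{k_2}{\vbMethTL}{\expT_2}{\expT}$ and the fact that $\expT$ reaches a panic configuration, transitivity shows $\expT_2$ reaches the same configuration, so $\panicMTL{\expT_2}$. Reassembling the three re-established conjuncts and reapplying \Rule{red-rel-exp} then yields $\redLRConfk{k}{t}{e}{\expT_2}$.
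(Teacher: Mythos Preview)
Your proposal is correct and follows exactly the paper's approach: unfold the three conjuncts of \Rule{red-rel-exp} and argue that each transfers from $\expT$ to its predecessor $\expT_2$. The paper's own proof is a one-line sketch (``For each case we can argue that $\expT_2$ satisfies the requirements''), so you have simply made explicit what the paper leaves implicit.

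Your observation about the panic case is genuinely sharper than the paper. As written, rule \Rule{tl-panic} defines $\panicMTL{\cdot}$ as a predicate on \emph{stuck} configurations, so a term that still reduces cannot literally satisfy it; yet the lemma is invoked (e.g.\ in step~(10) of Lemma~\ref{le:method-red-rel-equiv}) with $k_2 \geq 1$. The paper's terse proof does not mention this, and your proposed resolution---reading the target-panic obligation as ``$\expT$ \emph{reaches} a panic configuration''---is the natural fix and the one tacitly needed throughout the appendix (the same reading is already used implicitly in the panic subcases of Lemmas~\ref{le:upcast-red-equiv} and~\ref{le:downcast-red-equiv}). So you have not diverged from the paper's argument; you have surfaced an ambiguity in the definition that the paper's proof elides.
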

\begin{proof}
  By assumption we have the following
\begin{mathpar}
 \inferrule[red-rel-exp]{
   {
     \ba{c}
     \left(
       \forall k' < k,
       v ~.~ \reduceFGk{\kA}{\foreachN{D}}{e}{v} \implies \exists \uT.
       \reduceTLN{\vbMethTL}{\expT}{\uT} \wedge \redVLRConfk{k- \kA}{t}{v}{\uT}
     \right)
     \\
     \wedge
     \\
     \left(
       \forall k' < k,
       e' ~.~ \reduceFGk{k'}{\foreachN{D}}{e}{e'} \wedge \divergeDFG{e'}
       \implies
       \divergeMTL{\expT}
       \right)
     \PANIC{
     \\
     \wedge
     \\
     \left(
       \forall k' < k,
       e' ~.~ \reduceFGk{\kA}{\foreachN{D}}{e}{e'} \wedge \panicFG{\foreachN{D}}{e'} \implies
                \panicMTL{\expT}
       \right)
     }
     %% PANIC
     \ea
   }
 }{
   \redLRConfk{k}{t}{e}{\expT}
 }
\end{mathpar}

For each case we can argue that $\expT_2$ satisfies
the requirements. Hence, we find that  $\redLRConfk{k}{t}{e}{\expT_2}$.
  \qed
\end{proof}

\begin{lemma}
 \label{le:two-target-step-preserve-equivalence}
 Let $\redLRConfk{k}{t}{e}{\expT}$ for some $k$, $e$, $\expT$, $\foreachN{D}$ and $\vbMethTL$.
  Let $\reduceFGN{\foreachN{D}}{e}{v}$ for some value $v$.
  Let $\reduceTLN{\vbMethTL}{\expT}{\uT}$ for some value $\uT$.
  Let $\expTA$ be a target expression such that $\reduceTLN{\vbMethTL}{\expTA}{\uT}$.
  Then, we have that $\redLRConfk{k}{t}{e}{\expTA}$.
\end{lemma}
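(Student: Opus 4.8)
The plan is to unfold the goal $\redLRConfk{k}{t}{e}{\expTA}$ according to rule \Rule{red-rel-exp} and to verify its three conjuncts (the terminate, diverge, and panic clauses) one by one, exploiting that both FG and TL reductions are deterministic (Lemmas~\ref{lem:fg-determ} and~\ref{lem:tl-determ}). The guiding observation is that the source expression $e$ is assumed to reduce to a value, which immediately trivializes two of the three clauses, while determinism of the target lets me recycle the value-relatedness fact already available for $\expT$.

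First I would dispatch the diverge and panic conjuncts as vacuously true. Since $\reduceFGN{\foreachN{D}}{e}{v}$ by hypothesis and FG reduction is deterministic, the reduction starting at $e$ is a unique finite sequence ending in the value $v$. Hence every $e'$ with $\reduceFGk{\kA}{\foreachN{D}}{e}{e'}$ lies on this sequence and therefore itself reduces to $v$, so it can neither diverge nor panic. Thus the premises $\divergeDFG{e'}$ and $\panicFG{\foreachN{D}}{e'}$ of these two clauses are never satisfied, and both conjuncts hold regardless of the target expression $\expTA$.

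The remaining work is the terminate conjunct. I would assume $\kA < k$ and $\reduceFGk{\kA}{\foreachN{D}}{e}{v_0}$ for some value $v_0$. By determinism of FG, and because values do not reduce, this forces $v_0 = v$, with $\kA$ equal to the length of the reduction $e \reduceSym^* v$. I then invoke the terminate conjunct of the hypothesis $\redLRConfk{k}{t}{e}{\expT}$ at this same $\kA$, obtaining a value $\uT_0$ with $\reduceTLN{\vbMethTL}{\expT}{\uT_0}$ and $\redVLRConfk{k-\kA}{t}{v}{\uT_0}$. Because $\reduceTLN{\vbMethTL}{\expT}{\uT}$ is also assumed, TL determinism forces $\uT_0 = \uT$, so $\redVLRConfk{k-\kA}{t}{v}{\uT}$. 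Finally, the hypothesis $\reduceTLN{\vbMethTL}{\expTA}{\uT}$ supplies $\uT$ itself as the witness for $\expTA$, which establishes the terminate conjunct and hence $\redLRConfk{k}{t}{e}{\expTA}$.

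The argument is essentially routine once the two determinism lemmas are available; the only point requiring care — and the closest thing to an obstacle — is lining up the witness values. Specifically, I must recognize that the value produced by the relation for $\expT$ coincides, by TL determinism, with the common value $\uT$ that both $\expT$ and $\expTA$ reduce to, so that $\expTA$ can reuse exactly the same value-relatedness fact $\redVLRConfk{k-\kA}{t}{v}{\uT}$ at the unchanged step index $k-\kA$.
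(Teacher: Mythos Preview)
Your proposal is correct and follows essentially the same approach as the paper: observe that because $e$ reduces to a value the diverge and panic conjuncts are vacuous, and for the terminate conjunct reuse the value $\uT$ as the witness for $\expTA$. The paper's proof is considerably terser---it simply remarks that one may ``replace $\expT$ by $\expTA$'' in the terminate clause---whereas you make the reliance on FG and TL determinism (Lemmas~\ref{lem:fg-determ} and~\ref{lem:tl-determ}) explicit, which is arguably cleaner.
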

\begin{proof}
  Expression $e$ reduces to a value.
  Hence, the statement $\redLRConfk{k}{t}{e}{\expT}$ implies

\bda{c}
       \forall k' < k,
       v ~.~ \reduceFGk{\kA}{\foreachN{D}}{e}{v} \implies \exists \uT.
       \reduceTLN{\vbMethTL}{\expT}{\uT} \wedge \redVLRConfk{k- \kA}{t}{v}{\uT}
\eda
We can simply replace $\expT$ by $\expTA$ in the above.
Hence, we also find that $\redLRConfk{k}{t}{e}{\expTA}$.
\qed
\end{proof}

%---------------------------------------------------------------------------------
%---------------------------------------------------------------------------------
\section{Semantic Preservation for Interface-Value Constructors and Destructors}

Interface-value constructors and destructors, see Figure~\ref{f:upcast-downcast},
preserve equivalent expressions via logical relations as stated by the following results.

\begin{lemma}[Structural Subtyping versus Interface-Value Constructors]
\label{le:upcast-red-equiv}
  Let $\tdUpcast{\foreachN{D}}{\subtypeOf{t}{u}}{\expT_1}$
  and $\redLRk{k}{}{}{\foreachN{D}}{\vbMethTL}$
  and $\redLRConfk{k}{t}{e}{\expT_2}$.
  Then, we find that $\redLRConfk{k}{u}{e}{\expT_1 \ \expT_2}$.
\end{lemma}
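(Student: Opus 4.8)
The plan is to proceed by case analysis on the last rule of the derivation $\tdUpcast{\foreachN{D}}{\subtypeOf{t}{u}}{\expT_1}$, which must be either \Rule{td-cons-struct-iface} (so $t = t_S$ and $u = t_I$) or \Rule{td-cons-iface-iface} (so $t = t_I$ and $u = u_I$). In both cases $u$ is an interface type and $\expT_1$ is a $\lambda$-abstraction, hence a target value. To establish the goal $\redLRConfk{k}{u}{e}{\expT_1 \ \expT_2}$ I would unfold rule \Rule{red-rel-exp}, producing the three obligations \Rule{terminate}, \Rule{diverge} and \Rule{panic}. The common engine of the argument is that, since $\expT_1$ is a value, $\expT_1 \ \Hole$ is an evaluation context (an instance of $\uT \ \REvCtxT$), so every reduction of $\expT_2$ lifts to a reduction of $\expT_1 \ \expT_2$.

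The diverge and panic cases are then immediate and uniform across both sub-cases. If $\reduceFGk{\kA}{\foreachN{D}}{e}{e'}$ with $\kA < k$ and $e'$ diverges, the diverge clause of the hypothesis $\redLRConfk{k}{t}{e}{\expT_2}$ yields $\divergeMTL{\expT_2}$; lifting each reduction of $\expT_2$ through the context $\expT_1 \ \Hole$ gives $\divergeMTL{\expT_1 \ \expT_2}$. Symmetrically, if $e'$ panics, the panic clause gives $\panicMTL{\expT_2}$, that is, $\expT_2$ reaches a stuck $\CASE$; placing that stuck subterm in the context $\expT_1 \ \Hole$ witnesses $\panicMTL{\expT_1 \ \expT_2}$.

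The terminate case is the crux. Fixing $\kA < k$ and $v$ with $\reduceFGk{\kA}{\foreachN{D}}{e}{v}$, the terminate clause of the hypothesis provides $\uT$ with $\reduceTLN{\vbMethTL}{\expT_2}{\uT}$ and $\redVLRConfk{k - \kA}{t}{v}{\uT}$. Lifting through the context gives $\reduceTLN{\vbMethTL}{\expT_1 \ \expT_2}{\expT_1 \ \uT}$, and the redex $\expT_1 \ \uT$ fires by \Rule{tl-lambda} (and, in the interface-to-interface case, one further \Rule{tl-case} step) to a value $\uT'$. It then remains to show $\redVLRConfk{k - \kA}{u}{v}{\uT'}$ via \Rule{red-rel-iface}. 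In case \Rule{td-cons-struct-iface}, inverting \Rule{red-rel-struct} on $\redVLRConfk{k-\kA}{t_S}{v}{\uT}$ forces $\uT = \kT_{t_S}(\ldots)$, and $\uT' = \kT_{t_I}(\uT, \foreach{\mT{m_i}{t_S}}{n})$. The underlying-value premise of \Rule{red-rel-iface} is discharged from $\redVLRConfk{k-\kA}{t_S}{v}{\uT}$ together with monotonicity (\Cref{le:monotonicity}), while the dictionary premise $\redLRConfk{k_2}{m_i M_i}{\methodLookup{\foreachN{D}}{(m_i,t_S)}}{\mT{m_i}{t_S}}$ for every $k_2 \leq k - \kA$ follows from the hypothesis $\redLRk{k}{}{}{\foreachN{D}}{\vbMethTL}$ (unfolding \Rule{red-rel-decls}) and monotonicity (\Cref{le:monotonicity2}), using $k - \kA \leq k$. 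In case \Rule{td-cons-iface-iface}, inverting \Rule{red-rel-iface} on $\redVLRConfk{k-\kA}{t_I}{v}{\uT}$ already exposes the underlying value $\uTB = \kT_{u_S}(\ldots)$, its value relation at every $k_1 \leq k-\kA$, and the dictionary relations for $t_I$'s methods at every $k_2 \leq k-\kA$; since $\uT' = \kT_{u_I}(\uTB, \yT_{\mapPerm(1)}, \ldots, \yT_{\mapPerm(q)})$ and $S_j = R_{\mapPerm(j)}$, the premises demanded for $u_I$ are obtained by re-indexing these through the permutation $\mapPerm$.

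The main obstacle I expect is the bookkeeping in the terminate case: correctly identifying the underlying value and the dictionary entries of $\uT'$, routing the method relations through either the global hypothesis $\redLRk{k}{}{}{\foreachN{D}}{\vbMethTL}$ (struct case) or the permutation $\mapPerm$ (interface case), and discharging the quantifiers $\forall k_1 \leq k-\kA$ and $\forall k_2 \leq k-\kA$ by monotonicity. Note that the step index is preserved throughout (both hypothesis and conclusion speak of $k-\kA$) and no method body is ever unfolded, so no decrease of the index is needed here; the argument requires no induction, only inversion of the value relation and monotonicity.
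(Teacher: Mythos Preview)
Your proposal is correct and follows essentially the same route as the paper: case analysis on the two \Rule{iCons} rules, unfolding \Rule{red-rel-exp} into the three sub-obligations, lifting diverge/panic of $\expT_2$ through the evaluation context $\expT_1\,\Hole$ (with $\expT_1$ a $\lambda$-value), and in the terminate case reducing $\expT_1\,\uT$ to the appropriate $K_{u}$-tagged value and closing with \Rule{red-rel-iface} using monotonicity for the index bookkeeping. The paper's proof is organised identically; your explicit observation that $\expT_1$ being a value makes $\expT_1\,\Hole$ a TL evaluation context is left implicit there.
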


\begin{proof}
We perform a case analysis of the derivation for $\tdUpcast{\foreachN{D}}{\subtypeOf{t}{u}}{\expT_1}$
and label the assumptions
(1) $\redLRk{k}{}{}{\foreachN{D}}{\vbMethTL}$
and (2) $\redLRConfk{k}{t}{e}{\expT_2}$
as well as the to be proven statement
(3) $\redLRConfk{k}{t}{e}{\expT_1 \ \expT_2}$ for later reference.

  \noindent
      {\bf Case} \Rule{td-cons-struct-iface}:
\begin{mathpar}
\inferrule
          {\TYPE\ t_I \ \INTERFACE\ \{ \foreachN{S} \} \in \foreachN{D}
        \\ \methodSpecifications{\foreachN{D}}{t_S} \supseteq \foreachN{S}
        \\ \foreachN{S} = \foreach{m M}{n}
          }
          {\tdUpcast{\foreachN{D}}{\subtypeOf{t_S}{t_I}}
                    {\lambda \xT.
                      \kT_{t_I} \ (\xT, \foreach{\mT{m_i}{t_S}}{n}) }
          }
\end{mathpar}

  We establish statement (3) by distinguishing among the subcases that arise in rule \Rule{red-rel-exp}.

  \addSpace
  {\bf Subcase-Terminate:}

      Suppose (4) $\reduceFGk{\kA}{\foreachN{D}}{e}{v}$
  for some $\kA$ and value $v$ where $\kA < k$.

  (5) $\reduceTLN{\vbMethTL}{\expT_2}{\uT}$ for some $\uT$ where

  (6) $\redVLRConfk{k - \kA}{t_S}{v}{\uT}$

  via reverse application of rule \Rule{red-rel-exp} on (2) where
  in the premise the left-hand side of the implication is satisfied via (4).

  (7) $\reduceTLN{\vbMethTL}{(\lambda \xT.  \kT_{t_I} \ (\xT, \foreach{\mT{m_i}{t_S}}{n})) \ \expT_2}{\kT_{t_I} \ (\uT, \foreach{\mT{m_i}{t_S}}{n})}$

  via reduction step (5).

  (8) $\redLRConfk{k}
                    {m_i M_i}
                    {\FUNC\ (x \ t_S) \ m_i M_i \ \{ \RETURN\ e \}}{\mT{m_i}{t_S}}$ for $i=1,...,n$

  via reverse application of rule \Rule{red-rel-decls} on (1).

   (9) $\redLRConfk{k - \kA}
                    {m_i M_i}
                    {\FUNC\ (x \ t_S) \ m_i M_i \ \{ \RETURN\ e \}}{\mT{m_i}{t_S}}$

   via (7) and the Monotonicity Lemma.

  (10) $\redVLRConfk{k - \kA}{t_I}{v}{\kT_{t_I} \ (\uT, \foreach{\mT{m_i}{t_S}}{n})}$

   via application of rule \Rule{red-rel-iface}
   on (6) and (8). Statements (6) and (8) hold for any $\kB \leq k - \kB$ via the Monotonicity Lemma.

      Thus, this subcase in statement (3) holds.

  \addSpace
  {\bf Subcase-Diverge:}

      Suppose (4) $\reduceFGk{\kA}{\foreachN{D}}{e}{e'}$
  for some $\kA$ and $e'$ where $\kA < k$ and $\divergeDFG{e'}$.

  (5) $\divergeMTL{\expT_2}$

    via reverse application of rule \Rule{red-rel-exp} on (2) where
  in the premise the left-hand side of the implication is satisfied via (4).

  (6) $\divergeMTL{(\lambda \xT.  \kT_{t_I} \ (\xT, \foreach{\mT{m_i}{t_S}}{n})) \ \expT_2}$ via (5).

  Thus, this subcase in statement (3) holds.

  \PANIC{
      \addSpace
    {\bf Subcase-Panic:}

    Suppose (4) $\reduceFGk{\kA}{\foreachN{D}}{e}{e'}$
    for some $\kA$ and $e'$ where $\kA < k$ and $\panicDFG{e'}$.

    (5) $\panicMTL{\expT_2}$

  via reverse application of rule \Rule{red-rel-exp} on (2) where
  in the premise the left-hand side of the implication is satisfied via (4).

    (6) $\panicMTL{(\lambda \xT.  \kT_{t_I} \ (\xT, \foreach{\mT{m_i}{t_S}}{n})) \ \expT_2}$
    via (5).

      Thus, this subcase in statement (3) holds.
}
%% PANIC

  \noindent
      {\bf Case} \Rule{td-cons-iface-iface}
\begin{mathpar}
\inferrule
          {\TYPE\ t_I \ \INTERFACE\ \{ \foreach{R}{n} \} \in \foreachN{D}
       \\ \TYPE\ u_I \ \INTERFACE\ \{ \foreach{S}{q} \} \in \foreachN{D}
       \\  S_i = R_{\mapPerm(i)} \quad\noteForall{i \in [q]}
        }
          {\tdUpcast{\foreachN{D}}{\subtypeOf{t_I}{u_I}}
                    { \lambda \xT. \CASE\,\xT\,\OF\
                          \kT_{t_I} \ (\xT, \foreach{\xT}{n})
                          \rightarrow
                          \kT_{u_I} \ (\xT, \xT_{\mapPerm(1)}, \ldots, \xT_{\mapPerm(q)})
                    }
         }
\end{mathpar}

Via similar reasoning as for the other upcast case,
we establish statement (3) by distinguishing among the subcases that arise in rule \Rule{red-rel-exp}.

  \addSpace
   {\bf Subcase-Terminate:}

   Suppose (4) $\reduceFGk{\kA}{\foreachN{D}}{e}{v}$
  for some $\kA$ and value $v$ where $\kA < k$.

  (5) $\reduceTLN{\vbMethTL}{\expT_2}{\uT}$ for some $\uT$ where

  (6) $\redVLRConfk{k - \kA}{t_I}{v}{\uT}$

  via reverse application of rule \Rule{red-rel-exp} on (2) where
  in the premise the left-hand side of the implication is satisfied via (4).

  (7) for any $\kB \leq k - \kA$

  (8) $\redVLRConfk{\kB}{u_S}{v}{\uTVal}$ and

  (9) $\redLRConfk{\kB}{R_j}{\FUNC \ (x \ u_S) \ R_j}{\yT_j}$ for $j=1,...,n$ where

  (10) $\uTVal = \kT_{u_S} \ (\foreachN{\uTA})$ and

  (11) $\uT = \kT_{t_I} \ (\uTVal, \foreach{\yT_j}{n})$

  via reverse application of rule \Rule{red-rel-iface} on (6).

 (12) $\reduceTLN{\vbMethTL}{(\lambda \xT. \CASE\,\xT\,\OF\
                          \kT_{t_I} \ (\xT, \foreach{\xT}{n})
                          \rightarrow
                          \kT_{u_I} \ (\xT, \xT_{\mapPerm(1)}, \ldots, \xT_{\mapPerm(q)})) \ \expT_2}
  {\kT_{u_I} \ (\uTVal, \yT_{\mapPerm(1)}, \ldots, \yT_{\mapPerm(q)})}$ via

  reduction steps (5), (10) and (11).

  (8) $\redVLRConfk{k - \kA}{u_I}{v}{\kT_{u_I} \ (\uTVal, \yT_{\mapPerm(1)}, \ldots, \yT_{\mapPerm(q)})}$

  via application of rule \Rule{red-rel-iface} on (6), (7) and (9) in combination
  with the Monotonicity Lemma.

      Thus, the first subcase in statement (3) holds.

      We omit the other
      subcase\PANIC{s} as the reasoning steps exactly correspond
      to case \Rule{td-cons-struct-iface}.
\qed
\end{proof}

\begin{lemma}[Type Assertions versus Interface-Value Destructors]
\label{le:downcast-red-equiv}
  Let $\tdDowncast{\foreachN{D}}{\assertOf{t}{u}}{\expT_1}$
  and $\redLRk{k}{}{}{\foreachN{D}}{\vbMethTL}$
  and $\redLRConfk{k}{t}{e}{\expT_2}$.
  Then, we find that $\redLRConfk{k}{u}{e.(u)}{\expT_1 \ \expT_2}$.
\end{lemma}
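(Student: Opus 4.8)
The plan is to mirror the structure of the proof of Lemma~\ref{le:upcast-red-equiv}: I would perform a case analysis on the derivation of $\tdDowncast{\foreachN{D}}{\assertOf{t}{u}}{\expT_1}$, which has exactly two cases, \Rule{td-destr-iface-struct} (target a struct $t_S$) and \Rule{td-destr-iface-iface} (target an interface $u_I$), and in each case establish $\redLRConfk{k}{u}{e.(u)}{\expT_1\ \expT_2}$ by discharging the three subcases of rule \Rule{red-rel-exp}. Two observations drive everything. On the FG side, a reduction of $e.(u)$ first reduces $e$ inside the context $\Hole.(u)$, and once $e$ reaches a value $v = u_S\{\foreachN{v}\}$ the step \Rule{fg-assert} fires (if $\fgSub{\foreachN{D}}{u_S}{u}$) or the configuration is stuck via \Rule{fg-panic} (otherwise). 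On the target side, $\expT_1\ \expT_2$ first reduces $\expT_2$ in argument position and then applies the $\lambda$-abstraction $\expT_1$, which pattern-matches. Since the source type is always an interface $t_I$, whenever $e$ terminates to $v$ I can use the terminate subcase of assumption $\redLRConfk{k}{t_I}{e}{\expT_2}$ together with rule \Rule{red-rel-iface} to learn that $\expT_2$ reduces to an interface-value $\uT = \kT_{t_I}\ (\uTVal, \foreach{\yT}{n})$ with $\uTVal = \kT_{u_S}\ \foreachN{\uTA}$ and $\redVLRConfk{k'}{u_S}{v}{\uTVal}$ for every relevant $k'$.

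For the \emph{diverge} subcase I would first argue that $e.(u)$ can diverge only if $e$ diverges: by FG determinism (Lemma~\ref{lem:fg-determ}) the reduction path is unique, and if $e$ ever reached a value then $e.(u)$ would terminate or panic within one further step, while if $e$ panicked so would $e.(u)$. Hence from a diverging reduct of $e.(u)$ I obtain $\divergeDFG{e}$, instantiate the diverge subcase of the assumption with $\reduceFGk{0}{\foreachN{D}}{e}{e}$ to get $\divergeMTL{\expT_2}$, and lift this through the context to $\divergeMTL{\expT_1\ \expT_2}$. For the \emph{terminate} subcase the assertion must succeed, $\fgSub{\foreachN{D}}{u_S}{u}$. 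In \Rule{td-destr-iface-struct} with $u = t_S$, subtyping to a struct (only \Rule{sub-struct-refl} applies) forces $u_S = t_S$, so the pattern $\kT_{t_I}\ (\kT_{t_S}\ Y, \foreach{\xT}{n})$ matches and $\expT_1\ \expT_2$ reduces to $\uTVal$; the needed $\redVLRConfk{k-k'}{t_S}{v}{\uTVal}$ is exactly the underlying-value premise of \Rule{red-rel-iface} up to Monotonicity (Lemma~\ref{le:monotonicity}). In \Rule{td-destr-iface-iface} with $u = u_I$, success means $u_S$ implements $u_I$, so $u_S$ is one of the structs $t_{Sj}$ carrying a clause $\clsT_j$ with $\tdUpcast{\foreachN{D}}{\subtypeOf{t_{Sj}}{u_I}}{\expT_j}$, and $\expT_1\ \expT_2$ reduces to $\expT_j\ \uTVal$; here I would lift the value relation $\redVLRConfk{}{u_S}{v}{\uTVal}$ to the expression relation (a value reduces to itself in zero steps), apply the upcast lemma (Lemma~\ref{le:upcast-red-equiv}) using $\redLRk{k}{}{}{\foreachN{D}}{\vbMethTL}$ and Monotonicity~3 (Lemma~\ref{le:monotonicity3}), and unfold the resulting terminate clause to extract the final value related to $v$ at $u_I$.

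The hard part will be the \emph{panic} subcase, which is precisely where this lemma departs from Lemma~\ref{le:upcast-red-equiv}, since the assertion itself is a new source of failure. I would split on whether $e$ reaches a value within the witnessed reduction. If it does not, then the stuck reduct has the shape $e'.(u)$ with $e'$ a non-value, so the panicking redex (which must be of the form $t'_S\{\foreachN{v}\}.(t')$) lies inside $e'$; thus $\panicDFG{e'}$, and the panic subcase of the assumption gives $\panicMTL{\expT_2}$, hence $\panicMTL{\expT_1\ \expT_2}$, exactly as in the upcast proof. If $e$ does reach a value $v = u_S\{\foreachN{v}\}$, then the reachability of a panic forces the assertion to fail, $\neg\,\fgSub{\foreachN{D}}{u_S}{u}$, while the target has already reduced $\expT_2$ to $\uT = \kT_{t_I}\ (\uTVal, \foreach{\yT}{n})$ with $\uTVal = \kT_{u_S}\ \foreachN{\uTA}$. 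I must then show the ensuing match is stuck, and the crux is to turn the FG subtyping failure into a TL match failure: for \Rule{td-destr-iface-struct} ($u=t_S$), $\neg\,\fgSub{\foreachN{D}}{u_S}{t_S}$ is equivalent to $u_S \neq t_S$, so the inner pattern $\kT_{t_S}\ Y$ cannot match $\kT_{u_S}\ \foreachN{\uTA}$; for \Rule{td-destr-iface-iface} ($u=u_I$), the clauses $[\foreachN{\clsT}]$ cover \emph{precisely} the structs implementing $u_I$, and since $u_S$ is not among them, none of the constructor patterns $\kT_{t_{Sj}}$ matches $\kT_{u_S}\ \foreachN{\uTA}$. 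In either case $\expT_1\ \expT_2$ reduces to a $\CASE$ with no matching clause, whence $\panicMTL{\expT_1\ \expT_2}$ via \Rule{tl-panic} (reading $\panicMTL{\cdot}$ as ``reduces to a stuck case'', consistently with its role in \Rule{red-rel-exp}).

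Overall I expect the step-index bookkeeping to be routine, handled uniformly by the Monotonicity lemmas, and the genuinely delicate point to be the exhaustiveness argument in the interface-to-interface panic case, where correctness hinges on the translation rule \Rule{td-destr-iface-iface} quantifying over all structs implementing the target interface. This is the one spot where a mismatch between the FG subtyping relation and the generated pattern clauses would break the proof, so I would verify it against the precondition of \Rule{td-destr-iface-iface} with particular care.
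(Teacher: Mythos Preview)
Your proposal is correct and follows essentially the same approach as the paper: case analysis on the two destructor rules, then the three \Rule{red-rel-exp} subcases, with the panic subcase split on whether $e$ reaches a value, and the interface-to-interface terminate subcase reduced to Lemma~\ref{le:upcast-red-equiv}. The only notable difference is tactical: in the terminate subcase of \Rule{td-destr-iface-iface} you lift the value relation $\redVLRConfk{}{u_S}{v}{\uTVal}$ to an expression relation for the \emph{value} $v$ and apply the upcast lemma there, whereas the paper first re-establishes $\redLRConfk{k}{t_S}{e}{\uTVal}$ for the original expression $e$ (using that $e$ deterministically terminates in $v$), applies the upcast lemma at index $k$, and then invokes Lemma~\ref{le:two-target-step-preserve-equivalence} to replace the target $\expTA\,\uTVal$ by the actual $\expT_1\,\expT_2$ before passing to $e.(u_I)$ via Monotonicity. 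Both routes work; yours is arguably more direct, while the paper's keeps the step index at $k$ until the last moment and packages the target-side rerouting into a dedicated lemma.
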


\begin{proof}
  We perform a case analysis of the derivation $\tdDowncast{\foreachN{D}}{\assertOf{t}{u}}{\expT_1}$
  and label the assumptions
(1) $\redLRk{k}{}{}{\foreachN{D}}{\vbMethTL}$
  and (2) $\redLRConfk{k}{t}{e}{\expT_2}$
  as well as the to be proven statement
(3) $\redLRConfk{k}{t}{e.(u)}{\expT_1 \ \expT_2}$ for later reference.

  \noindent
  {\bf Case} \Rule{td-destr-iface-struct}:
  \begin{mathpar}
  \inferrule
            {(4) \ \TYPE\ t_I \ \INTERFACE\ \{ \foreach{S}{n} \} \in \foreachN{D}
             \\ (5) \ \fgSub{\foreachN{D}}{t_S}{t_I}
            }
            {\tdDowncast{\foreachN{D}}{\assertOf{t_I}{t_S}}{\lambda \xT.  \CASE\,\xT\,\OF\
                    \kT_{t_I} \ (K_{t_S} \ \foreachN{\yT}, \foreach{\xT}{n})
                      \rightarrow K_{t_S} \ \foreachN{\yT}
                   }
            }
  \end{mathpar}

  We establish statement (3) by distinguishing among the subcases that arise in rule \Rule{red-rel-exp}.

  \addSpace
  {\bf Subcase-Terminate:}
   Suppose (6) $\reduceFGk{\kA}{\foreachN{D}}{e.(t_S)}{v}$  for some value $v$ where $\kA < k$.

   (7) $v = t_S \{ \foreachN{v} \}$ for some $\foreachN{v}$

   via (5), (6) and the FG reduction rules.

   (8) $\reduceFGk{\kA - 1}{\foreachN{D}}{e}{t_S \{ \foreachN{v} \}}$

  via (6) and (7).

  (9) $\reduceTLN{\vbMethTL}{\expT_2}{\uT}$ for some $\uT$ where

  (10) $\redVLRConfk{k - (\kA - 1)}{t_I}{v}{\uT}$

  via reverse application of rule \Rule{red-rel-exp} on (2)
  where in the premise the right-hand side of the implication is satisfied via (8)
  and the fact that $\kA - 1 < k$.

  (11) $\redVLRConfk{\kB}{t_S}{v}{\uTVal}$  for any $\kB \leq k - (\kA - 1)$ where

  (12) $\uTVal = \kT_{t_S} \ (\foreachN{\uTA})$ for some $\foreachN{\uTA}$ and

  (13) $\uT = \kT_{t_I} \ (\uTVal, \foreach{\yT_j}{n})$ for some $\foreach{\yT_j}{n}$

  via reverse application of rule \Rule{red-rel-iface} on (10)
  where we make use of (7) and rule \Rule{red-rel-struct} to derive the shape of $\uTVal$ and (4) to guarantee
  that there are $n$ method variables $\yT_j$.

  (14) $\reduceTLN{\vbMethTL}{(\lambda \xT.  \CASE\,\xT\,\OF\
                    \kT_{t_I} \ (K_{t_S} \ \foreachN{\yT}, \foreach{\xT}{n})
                    \rightarrow K_{t_S} \ \foreachN{\yT}) \ \expT_2}{\uTVal}$

  via the reduction step (9) and the equations (12) and (13).

  (15) $\redVLRConfk{k - \kA}{t_S}{v}{\uTVal}$

  via (11) and the Monotonicity Lemma.

  Thus, we can establish this subcase in statement (3).

  \addSpace
  {\bf Subcase-Diverge:}
  Suppose (6) $\reduceFGk{\kA}{\foreachN{D}}{e.(t_S)}{e'}$  for some $e'$ where $\kA < k$ and $\divergeDFG{e'}$.

  (7) $\reduceFGk{\kA}{\foreachN{D}}{e}{e''}$  for some $e''$ where $\kB \leq kA$ and $\divergeDFG{e''}$

   by observing the reduction (6).

  (8) $\divergeMTL{\expT_2}$

    via reverse application of rule \Rule{red-rel-exp} on (2) where
    in the premise the left-hand side of the implication is satisfied via (8).

  (9) $\divergeMTL{(\lambda \xT.  \CASE\,\xT\,\OF\
                    \kT_{t_I} \ (K_{t_S} \ \foreachN{\yT}, \foreach{\xT}{n})
                    \rightarrow K_{t_S} \ \foreachN{\yT}) \expT_2}$ via (7).

   Thus, we can establish this subcase in statement (3).

\PANIC{
  \addSpace
  {\bf Subcase-Panic:}
  Suppose (6) $\reduceFGk{\kA}{\foreachN{D}}{e.(t_S)}{e'}$  for some $e'$ where $\kA < k$ and $\panicDFG{e'}$.

  We distinguish among the following two cases.
  Either
  (1) the expression panics or
  (2) the type assertion fails.

  \addSpace
  {\bf Subcase-Panic-1:}
  (7) $\reduceFGk{\kB}{\foreachN{D}}{e}{e''}$  for some $e''$ where $\kB < k$ and $\panicDFG{e'}$.

  \addSpace
  {\bf Subcase-Panic-2:}
  (8) $\reduceFGk{\kB}{\foreachN{D}}{e}{u_S \{ \foreachN{v'} \}}$
for some $u_S \{ \foreachN{v'} \}$ where $\kB < k$ and $t_S \not = u_S$.

  \addSpace
  Consider {\bf Subcase-Panic-1.}

  (9) $\panicMTL{\expT_2}$

  via reverse application of rule \Rule{red-rel-exp} on (2) where
  in the premise the left-hand side of the implication is satisfied via (7).

  (10) $\panicMTL{(\lambda \xT.  \CASE\,\xT\,\OF\
                    \kT_{t_I} \ (K_{t_S} \ \foreachN{\yT}, \foreach{\xT}{n})
                    \rightarrow K_{t_S} \ \foreachN{\yT}) \ \expT_2}$ via (9) and we are done here.

  \addSpace
  Consider {\bf Subcase-Panic-2.}

  (11) $\reduceTLN{\vbMethTL}{\expT_2}{\uT}$ for some $\uT$ where

  (12) $\redVLRConfk{k - \kB}{t_I}{u_S \{ \foreachN{v'} \}}{\uT}$

  via reverse application of rule \Rule{red-rel-exp} on (2)
  where in the premise the right-hand side of the implication is satisfied via (8)

  (13) $\uTVal = \kT_{u_S} \ (\foreachN{\uTA})$ for some $\foreachN{\uTA}$ and

  (14) $\uT = \kT_{t_I} \ (\uTVal, \foreachN{\yT})$ for some method variables $\foreachN{\yT}$

  via reverse application of rule \Rule{red-rel-iface} on (12)
  where we make use of \Rule{red-rel-struct} to derive the shape of $\uTVal$.

  (15) $\panicMTL{(\lambda \xT.  \CASE\,\xT\,\OF\
                    \kT_{t_I} \ (K_{t_S} \ \foreachN{\yT}, \foreach{\xT}{n})
                    \rightarrow K_{t_S} \ \foreachN{\yT}) \ \expT_2}$

  via (11), (13) and (14) and the fact that $K_{t_S} \not = K_{u_S}$. Thus, we are done here.
} %% PANIC

   \noindent
   {\bf Case} \Rule{td-destr-iface-iface}:
   \begin{mathpar}
     \inferrule
     {
  (4) \ \TYPE\ t_I \ \INTERFACE\ \{ \foreach{R}{n} \} \in \foreachN{D}
            \\
            {
              (5) \ \textrm{for all}~\TYPE\ t_{Sj}\ \STRUCT\ \{ \foreachN{f \ u} \} \in \foreachN{D}
              ~\textrm{with}~ \tdUpcast{\foreachN{D}}{\subtypeOf{t_{Sj}}{u_I}}{\expT_j}
              \textrm{:}
            }
            \\
            {
              \clsT_j =
              \kT_{t_{Sj}} \ Y' \rightarrow (\expT_j \ (\kT_{t_{Sj}} \ Y'))
            }
          }
          { \tdDowncast{\foreachN{D}}{\assertOf{t_I}{u_I}}{\lambda \xT.
                             \CASE\ \xT \ \OF\
                             \kT_{t_I}\, (\yT, \foreach{\xT}{n}) \rightarrow \CASE\ \yT  \ \OF\
                             [\foreachN{\clsT}]}
          }
   \end{mathpar}

  We establish statement (3) by distinguishing among the subcases that arise in rule \Rule{red-rel-exp}.

  \addSpace
  {\bf Subcase-Terminate:}
   Suppose (6) $\reduceFGk{\kA}{\foreachN{D}}{e.(u_I)}{v}$  for some value $v$ where $\kA < k$.

   (7) $v = t_S \{ \foreachN{v} \}$ for some $\foreachN{v}$ where

   (8) $\fgSub{\foreachN{D}}{t_S}{u_I}$

   via (6) and the FG reduction rules.

   (9) $\reduceFGk{\kA - 1}{\foreachN{D}}{e}{t_S \{ \foreachN{v} \}}$

   via (6) and (7).

   (10) $\reduceTLN{\vbMethTL}{\expT_2}{\uT}$ for some $\uT$ where

   (11) $\redVLRConfk{k - (\kA - 1)}{t_I}{v}{\uT}$

   via reverse application of rule \Rule{red-rel-exp} on (2)
   where in the premise the right-hand side of the implication is satisfied via (9)
   and the fact that $\kA - 1 < k$.

   (12) $\redVLRConfk{\kB}{t_S}{v}{\uTVal}$  for any $\kB \leq k - (\kA - 1)$ where

  (12) $\uTVal = \kT_{t_S} \ (\foreachN{\uTA})$ for some $\foreachN{\uTA}$ and

  (13) $\uT = \kT_{t_I} \ (\uTVal, \foreach{\yT_j}{n})$ for some $\foreach{\yT_j}{n}$

  via reverse application of rule \Rule{red-rel-iface} on (11)
  where we make use of (7) and rule \Rule{red-rel-struct} to derive the shape of $\uTVal$ and (4) to guarantee
  that there are $n$ method variables $\yT_j$.

  (14) $\redLRConfk{k}{t_S}{e}{\uTVal}$

  via (9), (12) and rule \Rule{red-rel-exp}.
  %% MS: the FG reduction rules guarantee that the reduction leads to a value, neither panic nor diverge will arise.

  (15) $\redLRConfk{k}{u_I}{e}{\expTA \ \uTVal}$ where

  (16) $\tdUpcast{\foreachN{D}}{\subtypeOf{t_S}{u_I}}{\expTA}$

  via (8), (14) and Lemma~\ref{le:upcast-red-equiv}.

  (17) $\reduceTLN{\vbMethTL}{\expTA \ \uTVal}{\uTA}$ and

  (18) $\reduceTLN{\vbMethTL}{(\lambda \xT.
                             \CASE\ \xT \ \OF\
                             \kT_{t_I}\, (\yT, \foreach{\xT}{n}) \rightarrow \CASE\ \yT  \ \OF\
                                [\foreachN{\clsT}]) \ \expT_2}{\uTA}$ for some $\uTA$

  via (4), (10), (13) and (16).

  (19) $\redLRConfk{k}{u_I}{e}{(\lambda \xT.
                             \CASE\ \xT \ \OF\
                             \kT_{t_I}\, (\yT, \foreach{\xT}{n}) \rightarrow \CASE\ \yT  \ \OF\
                                [\foreachN{\clsT}]) \ \expT_2}$

  via (9), (17), (18) and Lemma~\ref{le:two-target-step-preserve-equivalence}.

  (20) $\redLRConfk{k}{u_I}{e.(u_I)}{(\lambda \xT.
                             \CASE\ \xT \ \OF\
                             \kT_{t_I}\, (\yT, \foreach{\xT}{n}) \rightarrow \CASE\ \yT  \ \OF\
                                [\foreachN{\clsT}]) \ \expT_2}$

  via (6), (7), (9), (19) and the Monotonicity Lemma.

   Thus, this subcase in statement (3) holds.

     \addSpace
  {\bf Subcase-Diverge:}
  Suppose (6) $\reduceFGk{\kA}{\foreachN{D}}{e.(u_I)}{e'}$  for some $e'$ where $\kA < k$ and $\divergeDFG{e'}$.

  (7) $\reduceFGk{\kA}{\foreachN{D}}{e}{e''}$  for some $e''$ where $\kB \leq kA$ and $\divergeDFG{e''}$

   by observing the reduction (6).

  (8) $\divergeMTL{\expT_2}$

    via reverse application of rule \Rule{red-rel-exp} on (2) where
    in the premise the left-hand side of the implication is satisfied via (8).

  (9) $\divergeMTL{(\lambda \xT.
                             \CASE\ \xT \ \OF\
                             \kT_{t_I}\, (\yT, \foreach{\xT}{n}) \rightarrow \CASE\ \yT  \ \OF\
                                [\foreachN{\clsT}]) \ \expT_2}$ via (7).

   Thus, we can establish this subcase in statement (3).

\PANIC{
  \addSpace
  {\bf Subcase-Panic:}
  Suppose (6) $\reduceFGk{\kA}{\foreachN{D}}{e.(u_I)}{e'}$  for some $e'$ where $\kA < k$ and $\panicDFG{e'}$.

  We distinguish among the following two cases.
  Either
  (1) the expression panics or
  (2) the type assertion fails.

  \addSpace
  {\bf Subcase-Panic-1:}
  (7) $\reduceFGk{\kB}{\foreachN{D}}{e}{e''}$  for some $e''$ where $\kB < k$ and $\panicDFG{e'}$.

  \addSpace
  {\bf Subcase-Panic-2:}
  (8) $\reduceFGk{\kB}{\foreachN{D}}{e}{u_S \{ \foreachN{v'} \}}$
for some $u_S \{ \foreachN{v'} \}$ where $\kB < k$ and $\fgSub{\foreachN{D}}{u_S}{u_I}$ does not hold.

  \addSpace
  Consider {\bf Subcase-Panic-1.}

  (9) $\panicMTL{\expT_2}$

  via reverse application of rule \Rule{red-rel-exp} on (2) where
  in the premise the left-hand side of the implication is satisfied via (7).

  (10) $\panicMTL{(\lambda \xT.
                             \CASE\ \xT \ \OF\
                             \kT_{t_I}\, (\yT, \foreach{\xT}{n}) \rightarrow \CASE\ \yT  \ \OF\
                                [\foreachN{\clsT}]) \ \expT_2}$ via (9) and we are done here.

  \addSpace
  Consider {\bf Subcase-Panic-2.}

  (11) $\reduceTLN{\vbMethTL}{\expT_2}{\uT}$ for some $\uT$ where

  (12) $\redVLRConfk{k - \kB}{t_I}{u_S \{ \foreachN{v'} \}}{\uT}$

  via reverse application of rule \Rule{red-rel-exp} on (2)
  where in the premise the right-hand side of the implication is satisfied via (8)

  (13) $\uTVal = \kT_{u_S} \ (\foreachN{\uTA})$ for some $\foreachN{\uTA}$ and

  (14) $\uT = \kT_{t_I} \ (\uTVal, \foreachN{\yT})$ for some method variables $\foreachN{\yT}$

  via reverse application of rule \Rule{red-rel-iface} on (12)
  where we make use of \Rule{red-rel-struct} to derive the shape of $\uTVal$.

  (15) $\panicMTL{(\lambda \xT.
                             \CASE\ \xT \ \OF\
                             \kT_{t_I}\, (\yT, \foreach{\xT}{n}) \rightarrow \CASE\ \yT  \ \OF\
                                [\foreachN{\clsT}]) \ \expT_2}$

  via the assumption that $\fgSub{\foreachN{D}}{u_S}{u_I}$ does not hold
  and therefore none of the pattern clauses $\foreachN{\clsT}$ will yield a match.
  Thus, we are done here.
} %% PANIC

      \qed
\end{proof}

%--------------------------------------------------------
%--------------------------------------------------------
%%\section{Proofs for Properties Stated in the Main Text}
%--------------------------------------------------------
\section{Proof of Lemma~\ref{le:exp-red-equivalent}}

\begin{proof}
  By induction over the derivation $\tdExpTrans{\pair{\foreachN{D}}{\fgEnv}}{e : t}{\expT}$.
  We label the assumptions   (1) $\redLRk{k}{\triple{\foreachN{D}}{\vbMethTL}{\fgEnv}}{}{\vbFG}{\vbTL}$
  and
  (2) $\redLRk{k}{}{}{\foreachN{D}}{\vbMethTL}$ as well as the to be proven
  statement (3) $\redLRConfk{k}{t}{\vbFG(e)}{\vbTL(\expT)}$ for some later reference.

  \noindent
  {\bf Case} \Rule{td-var}:
  \begin{mathpar}
    \inferrule
            { (x : t) \in \fgEnv
            }
            { \tdExpTrans{\pair{\foreachN{D}}{\fgEnv}}{x : t}{\xT}
            }
  \end{mathpar}

  (3) follows immediately from (1).

  \noindent
      {\bf Case} \Rule{td-struct}:
  \begin{mathpar}
    \inferrule
        {   \TYPE\ t_S \ \STRUCT\ \{ \foreachN{f_i \ t_i}{n} \} \in  \foreachN{D}
              %% \fieldsOf{\foreachN{D}}{t_S} = \foreach{f_i \ t_i}{n}
               \\   \tdExpTrans{\pair{\foreachN{D}}{\fgEnv}}{e_i : t_i}{\expT_i}\quad\noteForall{i \in [n]}
            }
            { \tdExpTrans{\pair{\foreachN{D}}{\fgEnv}}{t_S \{ \foreach{e_i}{n} \} : t_S }{\kT_{t_S} \ (\foreach{\expT_i}{n})}
            }
  \end{mathpar}

  (4) $\redLRConfk{k}{t}{\vbFG(e_i)}{\vbTL(\expT_i)}$ by induction.

  To establish (3) $\redLRConfk{k}{t}{\vbFG(e)}{\vbTL(\expT)}$
  we consider the subcases that arise in rule \Rule{red-rel-exp}.

    \addSpace
  {\bf Subcase-Terminate:}

  Suppose (5) $\reduceFGk{\kA}{\foreachN{D}}{\vbFG(t_S \{ \foreach{e_i}{n} \})}{t_S \{ \foreach{v_i}{n} \}}$
  for some values $v_i$ where  (6) $\kA < k$.

  (7) $\reduceFGk{k_i}{\foreachN{D}}{\vbFG(e_i)}{v_i}$ for some  $k_i$ where

  (8) $k_i \leq \kA < k$

  by observing the reduction (5) and the number of reduction steps taken (6).

  (9) $\reduceTLN{\vbMethTL}{\vbTL(\expT_i)}{\uT_i}$ for some $\uT_i$  where

  (10) $\redVLRConfk{k-k_i}{t_i}{v_i}{\uT_i}$

  via reverse application of rule \Rule{red-rel-exp} on (4)
  where in the premise the left-hand side of the implication is satisfied via (7) and (8).

  (11) $k- \kA \leq k - k_i$ via (8).

  (12) $\redVLRConfk{k-\kA}{t_i}{v_i}{\uT_i}$

  via (10), (11) and the Monotonicity Lemma.

  (10) $\reduceTLN{\vbMethTL}{\kT_{t_S} \ (\foreach{\vbTL(\expT_i)}{n})}{\kT_{t_S} \ (\foreach{\uT_i}{n})}$

  via the reduction step (7).

  (11) $\redVLRConfk{k-\kA}{t_S}{t_S \{ \foreach{e'_i}{n} \}}{\kT_{t_S} \ (\foreach{\uT_i}{n})}$

  via application of rule \Rule{red-rel-struct} on (9).

  Via (10) and (11) we can establish this subcase in statement (3).

    \addSpace
  {\bf Subcase-Diverge:}

    Suppose (5) $\reduceFGk{\kA}{\foreachN{D}}{\vbFG(t_S \{ \foreach{e_i}{n} \})}{t_S \{ \foreach{e'_i}{n} \}}$
  for some expressions $e'_1,...,e'_n$ where $\kA < k$
  and $\divergeDFG{t_S \{ \foreach{e'_i}{n} \}}$.

  (6) $\divergeDFG{e'_j}$ for some $j$ where $e'_1,...,e'_{j-1}$ are values via (5)
     and FG reduction rules.

  (7) $\reduceTLN{\vbMethTL}{\vbTL(\expT_l)}{\uT_l}$ for some values $\uT_l$
  for $l=1,...,j-1$ via (4), (6) and rule \Rule{red-rel-exp}.

  (8)  $\divergeMTL{\expT_j}$ via (4), (6) and rule \Rule{red-rel-exp}.

  (9) $\divergeMTL{\vbTL(\kT_{t_S} \ (\foreach{\expT_i}{n}))}$
       via (7) and (8).

  Via (9) we can establish this subcase in statement (3).

%% Identical reasoning as for ``diverge''.
  \PANIC{
    \addSpace
  {\bf Subcase-Panic:}

  Suppose (5) $\reduceFGk{\kA}{\foreachN{D}}{\vbFG(t_S \{ \foreach{e_i}{n} \})}{t_S \{ \foreach{e'_i}{n} \}}$
  for some  $e'_1,...,e'_n$ where $\kA < k$
  and  $\panicFG{\foreachN{D}}{t_S \{ \foreach{e'_i}{n} \}}$.

  (6) $\panicFG{\foreachN{D}}{e'_j}$ for some $j$ where $e'_1,...,e'_{j-1}$ are values via (5)
      and evaluation context $t_S\{e'_i,\ldots,e'_{j-1},\EvCtx,e_{j+1},\ldots,e_n\}$.

  (7) $\reduceTLN{\vbMethTL}{\vbTL(\expT_l)}{\uT_l}$ for some values $\uT_l$
  for $l=1,...,j-1$ via (4), (6) and rule \Rule{red-rel-exp}.

  (8)  $\panicMTL{\expT}$ via (4), (6) and rule \Rule{red-rel-exp}.

  (9) $\panicMTL{\vbTL(\kT_{t_S} \ (\foreach{\expT_i}{n}))}$
       via (7) and (8).

   Via (9) we can establish this subcase in statement (3).
}
%% PANIC

  \noindent
  {\bf Case} \Rule{td-access}:
  \begin{mathpar}
      \inferrule
           {  \tdExpTrans{\pair{\foreachN{D}}{\fgEnv}}{e : t_S}{\expT}
             \\  \TYPE\ t_S \ \STRUCT\ \{ \foreach{f_j \ t_j}{n} \} \in  \foreachN{D}
                 %% \fieldsOf{\foreachN{D}}{t_S} = \foreach{f_j \ t_j}{n}
          }
           { \tdExpTrans{\pair{\foreachN{D}}{\fgEnv}}
                        {e.f_i : t_i }
                        { \CASE\ \ \expT\ \ \OF\ \kT_{t_S} \ (\foreach{\xT_j}{n}) \rightarrow \xT_i}
           }
  \end{mathpar}

  (4) $\redLRConfk{k}{t}{\vbFG(e)}{\vbTL(\expT)}$ by induction.

  To establish (3) $\redLRConfk{k}{t}{\vbFG(e)}{\vbTL(\expT)}$
  we consider the subcases that arise in rule \Rule{red-rel-exp}.

    \addSpace
        {\bf Subcase-Terminate:}

  Suppose (5) $\reduceFGk{\kA}{\foreachN{D}}{\vbFG(e.f_i)}{v'}$
  for some value $v'$ where (6) $\kA < k$.

  (7) $\reduceFGk{\kB}{\foreachN{D}}{\vbFG(e)}{v}$  for some $v$ and $\kB$ where

  (8) $\kB < \kA$

  by observing the reduction (5) and the number of reduction steps taken (6).

  (9) $\reduceTLN{\vbMethTL}{\vbTL(\expT)}{\uT}$ for some $\uT$   where

  (10) $\redVLRConfk{k-\kB}{t_S}{v}{\uT}$

  via reverse application of rule \Rule{red-rel-exp} on (4)
  where the left-hand side of the implication is satisfied via (7) and (8).

  (11) $v = t_S \{ \foreach{v_j}{n} \}$ and

  (12) $\uT = \kT_{t_S} \ (\foreach{\uT_j}{n})$  for some $v_j$ and $\uT_j$ where

  (13) $\redVLRConfk{k-\kB}{t_j}{v_j}{\uT_j}$ for $j=1,...,n$

  via reverse application of rule \Rule{red-rel-struct} on (10).

  (14) $\reduceTLN{\vbMethTL}{\vbTL(\CASE\ \ \expT\ \ \OF\ \kT_{t_S} \ (\foreach{\xT_j}{n}) \rightarrow \xT_i)}{\uT_i}$

  via reduction step (9) and (12).

  (15) $v' = v_i$

  via reduction step (5) and (11).

  (16) $\redVLRConfk{k-\kA}{t_j}{v'}{\uT_i}$

  via (13), (15) and the Monotonicity Lemma   as we have that $k - \kA \leq k - \kB$.

  Via (14) and (16) we can establish this subcase in statement (3).

    \addSpace
        {\bf Subcase-Diverge:}

  Suppose (5) $\reduceFGk{\kA}{\foreachN{D}}{\vbFG(e.f_i)}{e'}$
  for some $e'$ where $\kA < k$ and $\divergeDFG{e'}$.

  (6) $\reduceFGk{k'}{\foreachN{D}}{\vbFG(e)}{e'}$
  and $\divergeDFG{e'}$ via (5) and the FG reduction rules.

  (7) $\divergeMTL{\vbTL(\expT)}$ via (4), (6) and rule \Rule{red-rel-exp}.

  (8) $\divergeMTL{\vbTL(\CASE\ \ \expT\ \ \OF\ \kT_{t_S} \ (\foreach{\xT_j}{n}) \rightarrow \xT_i)}$
  via (7).

  Via (8) we can establish this subcase in statement (3).

  %% Again almost identical reasoning.
  %% Step (6) is a bit more involved as we also need to consult the panic rules.
  \PANIC{
    \addSpace
        {\bf Subcase-Panic:}

  Suppose (5) $\reduceFGk{\kA}{\foreachN{D}}{\vbFG(e.f_i)}{e'}$
  for some $e'$ where $\kA < k$ and $\panicDFG{e'}$.

  (6) $\reduceFGk{k'}{\foreachN{D}}{\vbFG(e)}{e'}$
  and $\panicDFG{e'}$ via (5) and the FG reduction and panic rules.

  (7) $\panicMTL{\vbTL(\expT)}$ via (4), (6) and rule \Rule{red-rel-exp}.

  (8) $\panicMTL{\vbTL(\CASE\ \ \expT\ \ \OF\ \kT_{t_S} \ (\foreach{\xT_j}{n}) \rightarrow \xT_i)}$
  via (7).

  Via (8) we can establish this subcase in statement (3).
}
%% PANIC

    \noindent
  {\bf Case} \Rule{td-call-struct}:
  \begin{mathpar}
      \inferrule
            { m (\foreach{x_i \ t_i}{n}) \ t \in \methodSpecifications{\foreachN{D}}{t_S}
              \\
              \tdExpTrans{\pair{\foreachN{D}}{\fgEnv}}{e : t_S}{\expT}
              \\ \tdExpTrans{\pair{\foreachN{D}}{\fgEnv}}{e_i : t_i}{\expT_i}\quad\noteForall{i \in [n]}
          }
            { \tdExpTrans{\pair{\foreachN{D}}{\fgEnv}}{e.m(\foreach{e_i}{n}) : t}{\mT{m}{t_S} \ \expT \ (\foreach{\expT_i}{n}) } }
  \end{mathpar}

  (4) $\redLRConfk{k}{t_S}{\vbFG(e)}{\vbTL(\expT)}$
     and (5) $\redLRConfk{k}{t_i}{\vbFG(e_i)}{\vbTL(\expT_i)}$ by induction.

  To establish (3) $\redLRConfk{k}{t}{\vbFG(e)}{\vbTL(\expT)}$
  we consider the subcases that arise in rule \Rule{red-rel-exp}.

    \addSpace
        {\bf Subcase-Terminate:}

  Suppose (6) $\reduceFGk{\kA}{\foreachN{D}}{\vbFG(e.m(\foreach{e_i}{n}))}{v'}$
  for some value $v'$ where (7) $\kA < k$.

  (8) $\reduceFGk{\kB}{\foreachN{D}}{\vbFG(e)}{v}$ for some $v$, $\kB$ where

  (9) $\kB < \kA$

  by observing the reduction (6) and the number of steps taken (7).

  (10) $\reduceFGk{k_i}{\foreachN{D}}{\vbFG(e_i)}{v_i}$ for some $v_i$, $k_i$ where

  (11) $\sum_i k_i < \kA$

  by observing the reduction (6) and the number of steps taken (7).

  (12) $\reduceTLN{\vbMethTL}{\vbTL(\expT)}{\uT}$ for some $\uT$ where

  (13) $\redVLRConfk{k-\kB}{t_S}{v}{\uT}$

  via reverse application of rule \Rule{red-rel-exp} on (4) where
  in the premise the left-hand side of the implication is satisfied via (8) and (9).

  (14) $\reduceTLN{\vbMethTL}{\vbTL(\expT_i)}{\uT_i}$ for some $\uT_i$ where

  (15) $\redVLRConfk{k-k_i}{t_i}{v_i}{\uT_i}$

  via reverse application of rule \Rule{red-rel-exp} on (5) where
  in the premise the left-hand side of the implication is satisfied via (10) and (11).

  (16) Set $\kC = \minF(k-\kB,k - \sum_i k_i) - 1$ where

  (17) $0 \leq \kC < k$ and

  (18) $\kC < k- \kB$ and $\kC < k - k_i$

  via (7), (9) and (11).

  (19) $\redVLRConfk{\kC}{t_S}{v}{\uT}$

   via (13), (18) and the Monotonicity Lemma.

   (20) $\redVLRConfk{\kC}{t_i}{v_i}{\uT_i}$

   via (15), (18) and the Monotonicity Lemma.

   (21) $\redLRConfk{k}{m M}{\FUNC\ (x \ t_S) \ m M \{ \RETURN\ e'' \}}{\mT{m}{t_S}}$

   via reverse application of rule \Rule{red-rel-decls} on (2).

   (22) $\redLRConfk{\kC}{t}
         {\Angle{\subst{x}{v},\foreach{\subst{x_i}{v_i}}{n}} e''}
         {\mT{m}{t_S} \ \uT \ (\foreach{\uT_i}{n})}$

    via reverse application of rule \Rule{red-rel-method} on (21)
    where in the premise the left-hand side of the implication is satisfied
    via (18), (19) and (20).

    (23) $\redLRConfk{\kC+1}{t}
                     {v.m(\foreach{v_i}{n})}
                     {(\mT{m}{t_S} \ \uT) \ (\foreach{\uT_i}{n})}$

     via (22) and Lemma~\ref{le:source-step-preserve-equivalence}.

     (24) $\reduceFGk{\kA- (\kB + \sum_i k_i)}{\foreachN{D}}{v.m(\foreach{v_i}{n})}{v'}$

     via reduction steps (6), (8) and (10).

     (25) $ \kC + 1 > \kA - (\kB + \sum_i k_i)$

     via the following reasoning.

        \bda{llr}
         & \kC + 1
        \\ = & \minF(k-\kB,k - \sum_i k_i)     & \mbox{by definition}
        \\ = & k - \maxF(\kB, \sum_i k_i)      & \mbox{by $\minF/\maxF$ distributivity law}
        \\ \geq & k - (\kB + \sum_i k_i)       & \mbox{by $\maxF$ approximation}
        \\ > & \kA - (\kB + \sum_i k_i)        & \mbox{by $\kA < k$}
        \eda

    (26) $\reduceTLN{\vbMethTL}{\mT{m}{t_S} \ \uT \ (\foreach{\uT_i}{n})}{\uTA}$ for some $\uTA$ where

        (27) $\redVLRConfk{\kC+ 1 - (\kA- (\kB + \sum_i k_i))}{t}{v'}{\uTA}$

        via reverse application of rule \Rule{red-rel-exp} on (23)
        where in the premise the left-hand side of the implication is satisfied
        via (24) and (25).

   (28) $k - \kA \leq \kC+ 1 - (\kA- (\kB + \sum_i k_i))$

        via the following reasoning.

        \bda{llr}
        & \kC+ 1 - (\kA- (\kB + \sum_i k_i))
        \\ = &  \minF(k-\kB,k - \sum_i k_i)- (\kA- (\kB + \sum_i k_i))     & \mbox{by definition}
        \\ = & k - \maxF(\kB, \sum_i k_i)  - (\kA- (\kB + \sum_i k_i))     & \mbox{by $\minF/\maxF$ distributivity law}
        \\ = & k - \kA + \kB + \sum_i k_i  - \maxF(\kB, \sum_i k_i)        & \mbox{by equivalence}
        \\ \geq & k - \kA                                                 & \mbox{by approximation}
        \eda

        (29) $\redVLRConfk{k-\kA}{t}{v'}{\uTA}$

        via (27), (28) and the Monotonicity Lemma.

  (30) $\reduceTLN{\vbMethTL}{\vbTL(\mT{m}{t_S} \ \expT \ (\foreach{\expT_i}{n}))}{\uTA}$

        via reduction steps (12), (14) and (24).

        Via (29) and (30) we can establish this subcase in statement (3).

         \addSpace
        {\bf Subcase-Diverge:}
        Suppose (5) $\reduceFGk{k'}{\foreachN{D}}{\vbFG(e.m(\foreach{e_i}{n}))}{e'}$ for some $e'$
        where $\kA < k$ and $\divergeDFG{e'}$.

        We distinguish among the following three cases.
        Either the
        (1) expression on which the method call is performed diverges or
        (2) one of the arguments diverges or
        (3) reduction of the method call leads to some expression that diverges.

          \addSpace
        {\bf Subcase-Diverge-1:} $e' = e''.m(...)$ where
        \\
        (6) $\divergeDFG{e''}$ and $\reduceFGk{\kB}{\foreachN{D}}{\vbFG(e)}{e''}$ and $\kB < \kA$.

          \addSpace
        {\bf Subcase-Diverge-2:} $e' = v.m(v_1,...,v_{j-1},e'',...)$ for some $j$ where
        \\
        (7) $\divergeDFG{e''}$ and
        \\
        (8) $\reduceFGk{\kB}{\foreachN{D}}{\vbFG(e)}{v}$ and $\kB < \kA$ and
        \\
        (9) $\reduceFGk{k_j}{\foreachN{D}}{\vbFG(e_l)}{v_j}$ and $k_l < \kA$ for $l=1,...,j-1$ and
        \\
        (10) $\reduceFGk{\kC}{\foreachN{D}}{\vbFG(e_j)}{e''}$ and $\kC < \kA$.

          \addSpace
        {\bf Subcase-Diverge-3:}
        (11) $\reduceFGk{\kB}{\foreachN{D}}{\vbFG(e)}{v}$ and $\kB < \kA$ and
        \\
        (12) $\reduceFGk{k_i}{\foreachN{D}}{\vbFG(e_i)}{v_i}$ and $k_i < \kA$ and
        \\
         (13) $\foreachN{D} \turnsFG
                             \vbFG(e.m(\foreach{e_i}{n}))
                             \reduceSym^{\kB + \sum_i k_i}  v.m(\foreach{v_i}{n})
                             \reduceSym^{\kD} e'$
          where
          \\
          (14) $\kA = \kB + \sum_i k_i + \kD$.

      \addSpace
      Consider {\bf Subcase-Diverge-1}.

      (15) $\divergeMTL{\vbTL(\expT)}$ via (4), (6) and rule \Rule{red-rel-exp}.

      (16) $\divergeMTL{\vbTL(\mT{m}{t_S} \ \expT \ (\foreach{\expT_i}{n}))}$ via (15) and we are done here.

      \addSpace
      Consider {\bf Subcase-Diverge-2}.

      (17) $\reduceTLN{\vbMethTL}{\vbTL(\expT)}{\uT}$ for some $\uT$ via (4), (8) and rule \Rule{red-rel-exp}.

      (18) $\reduceTLN{\vbMethTL}{\vbTL(\expT_l)}{\uT_l}$ for some $\uT_l$ via (4), (9) and rule \Rule{red-rel-exp}
           for $l=1,..,j-1$.

      (19) $\divergeMTL{\vbTL(\expT_j)}$ via (4), (7), (1) and rule \Rule{red-rel-exp}.

           (20) $\divergeMTL{\mT{m}{t_S} \ \expT \ (\foreach{\expT_i}{n})}$ via (17), (18) and (19) and we are done here.

     \addSpace
      Consider {\bf Subcase-Diverge-3}.

       (20) $\redLRConfk{\kC+1}{t}
                     {v.m(\foreach{v_i}{n})}
                     {(\mT{m}{t_S} \ \uT) \ (\foreach{\uT_i}{n})}$
                     via the exact same reasoning steps that lead to (23) as found in the first subcase.
       (21) Recall $\kC = \minF(k-\kB,k - \sum_i k_i) - 1$.

       (22) Recall $\kC + 1 > \kA - (\kB - \sum_i k_i) = \kD$.

       (23) $\divergeMTL{\mT{m}{t_S} \ \expT \ (\foreach{\expT_i}{n})}$ via (13), (20), (22) and
            rule \Rule{red-rel-exp} and we are done here.

%% identical reasoning as in case of ``diverge''
\PANIC{
         \addSpace
        {\bf Subcase-Panic:}
        Suppose (5) $\reduceFGk{k'}{\foreachN{D}}{\vbFG(e.m(\foreach{e_i}{n}))}{e'}$ for some $e'$
        where $\kA < k$ and $\panicDFG{e'}$.

        We distinguish among the following three cases.
        Either
        (1) the expression on which the method call is performed panics or
        (2) one of the arguments panics or
        (3) reduction of the method call leads to some expression that panics.

          \addSpace
        {\bf Subcase-Panic-1:} $e' = e''.m(...)$ where
        \\
        (6) $\panicDFG{e''}$ and $\reduceFGk{\kB}{\foreachN{D}}{\vbFG(e)}{e''}$ and $\kB < \kA$.

          \addSpace
        {\bf Subcase-Panic-2:} $e' = v.m(v_1,...,v_{j-1},e'',...)$ for some $j$ where
        \\
        (7) $\panicDFG{e''}$ and
        \\
        (8) $\reduceFGk{\kB}{\foreachN{D}}{\vbFG(e)}{v}$ and $\kB < \kA$ and
        \\
        (9) $\reduceFGk{k_j}{\foreachN{D}}{\vbFG(e_l)}{v_j}$ and $k_l < \kA$ for $l=1,...,j-1$ and
        \\
        (10) $\reduceFGk{\kC}{\foreachN{D}}{\vbFG(e_j)}{e''}$ and $\kC < \kA$.

          \addSpace
        {\bf Subcase-Panic-3:}
        (11) $\reduceFGk{\kB}{\foreachN{D}}{\vbFG(e)}{v}$ and $\kB < \kA$ and
        \\
        (12) $\reduceFGk{k_i}{\foreachN{D}}{\vbFG(e_i)}{v_i}$ and $k_i < \kA$ and
        \\
         (13) $\foreachN{D} \turnsFG
                             \vbFG(e.m(\foreach{e_i}{n}))
                             \reduceSym^{\kB + \sum_i k_i}  v.m(\foreach{v_i}{n})
                             \reduceSym^{\kD} e'$
          where
          \\
          (14) $\kA = \kB + \sum_i k_i + \kD$.

      \addSpace
      Consider {\bf Subcase-Panic-1}.

      (15) $\panicMTL{\vbTL(\expT)}$ via (4), (6) and rule \Rule{red-rel-exp}.

      (16) $\panicMTL{\vbTL(\mT{m}{t_S} \ \expT \ (\foreach{\expT_i}{n}))}$ via (15) and we are done here.

      \addSpace
      Consider {\bf Subcase-Panic-2}.

      (17) $\reduceTLN{\vbMethTL}{\vbTL(\expT)}{\uT}$ for some $\uT$ via (4), (8) and rule \Rule{red-rel-exp}.

      (18) $\reduceTLN{\vbMethTL}{\vbTL(\expT_l)}{\uT_l}$ for some $\uT_l$ via (4), (9) and rule \Rule{red-rel-exp}
           for $l=1,..,j-1$.

      (19) $\panicMTL{\vbTL(\expT_j)}$ via (4), (7), (1) and rule \Rule{red-rel-exp}.

           (20) $\panicMTL{\mT{m}{t_S} \ \expT \ (\foreach{\expT_i}{n})}$ via (17), (18) and (19) and we are done here.

       \addSpace
       Consider {\bf Subcase-Panic-3}.

       (20) $\redLRConfk{\kC+1}{t}
                     {v.m(\foreach{v_i}{n})}
                     {(\mT{m}{t_S} \ \uT) \ (\foreach{\uT_i}{n})}$
                     via the exact same reasoning steps that lead to (23) as found in the first subcase.
       (21) Recall $\kC = \minF(k-\kB,k - \sum_i k_i) - 1$.

       (22) Recall $\kC + 1 > \kA - (\kB - \sum_i k_i) = \kD$.

       (23) $\panicMTL{\mT{m}{t_S} \ \expT \ (\foreach{\expT_i}{n})}$ via (13), (20), (22) and
            rule \Rule{red-rel-exp} and we are done here.

}
%% PANIC

  \noindent
  {\bf Case} \Rule{td-call-iface}:
  \begin{mathpar}
  \inferrule
            { \tdExpTrans{\pair{\foreachN{D}}{\fgEnv}}{e : t_I}{\expT}
             \\ (4) \ \TYPE\ t_I \ \INTERFACE\ \{ \foreach{S_i}{q} \} \in \foreachN{D}
             \\ (5) \ S_j = m(\foreach{x_i \ t_i}{n}) \ t\quad(\textrm{for some}~j \in [q])
            \\ \tdExpTrans{\pair{\foreachN{D}}{\fgEnv}}{e_i : t_i}{\expT_i}\quad\noteForall{i \in [n]}
          }
          { \tdExpTrans{\pair{\foreachN{D}}{\fgEnv}}
              {e.m(\foreach{e_i}{n}) : t}
              {\CASE\ \ \expT \ \OF\ \kT_{t_I} \ (\xTval, \foreach{\xT_i}{q}) \rightarrow \xT_j \ \xTval \ (\foreach{\expT_i}{n}) }
          }
  \end{mathpar}

  (6) $\redLRConfk{k}{t_I}{\vbFG(e)}{\vbTL(\expT)}$
     and (7) $\redLRConfk{k}{t_i}{\vbFG(e_i)}{\vbTL(\expT_i)}$ by induction.

  To establish (3) $\redLRConfk{k}{t}{\vbFG(e)}{\vbTL(\expT)}$
  we consider the subcases that arise in rule \Rule{red-rel-exp}.

  \addSpace
  {\bf Subcase-Terminate:}

  Suppose (8) $\reduceFGk{\kA}{\foreachN{D}}{\vbFG(e.m(\foreach{e_i}{n}))}{v'}$
  for some value $v'$ where (9) $\kA < k$.

  (10) $\reduceFGk{\kB}{\foreachN{D}}{\vbFG(e)}{v}$ for some $v$, $\kB$ where

  (11) $\kB < \kA$

  by observing the reduction (8) and the number of steps taken (9).

  (12) $\reduceFGk{k_i}{\foreachN{D}}{\vbFG(e_i)}{v_i}$ for some $v_i$, $k_i$ where

  (13) $\sum_i k_i < \kA$

  by observing the reduction (8) and the number of steps taken (9).

  (14) $\reduceTLN{\vbMethTL}{\vbTL(\expT)}{\uT}$ for some $\uT$ where

  (15) $\redVLRConfk{k-\kB}{t_I}{v}{\uT}$

  via reverse application of rule \Rule{red-rel-exp} on (6) where
  in the premise the left-hand side of the implication is satisfied via (10) and (11).

  (16) $\reduceTLN{\vbMethTL}{\vbTL(\expT_i)}{\uT_i}$ for some $\uT_i$ where

  (17) $\redVLRConfk{k-k_i}{t_i}{v_i}{\uT_i}$

  via reverse application of rule \Rule{red-rel-exp} on (7) where
  in the premise the left-hand side of the implication is satisfied via (12) and (13).

  (18) $\methodSpecifications{\foreachN{D}}{t_I} = \{ \foreach{S}{q} \}$

  via (4).

  (19) $\uTVal = \kT_{u_S} \ (\foreachN{\uTA})$ and

  (20) $\uT = \kT_{t_I} \ (\uTVal, \foreach{\yT_l}{q})$ for some $u_S$, $\foreachN{\uTA}$, $\foreach{\yT_l}{q}$ and

  (21) $\methodLookup{\foreachN{D}}{(m,u_S)} = \FUNC \ (x \ u_S) \ m (\foreach{x_i \ t_i}{n}) \ t \ \{ \RETURN\ e' \}$ and

  (22) $\redVLRConfk{k-\kB}{u_S}{v}{\uTVal}$ and

  (23) $\redLRConfk{k - \kB}
  {\ m (\foreach{x_i \ t_i}{n}) \ t}{\FUNC \ (x \ u_S) \ m (\foreach{x_i \ t_i}{n}) \ t \ \{ \RETURN\ e' \}}{\yT_j}$

  via reverse application of rule \Rule{red-rel-iface} on (15)
  where we assume (18) and (5).
    Index $k - \kB$ is the largest index that satisfies the logical relations
    in the premise of rule \Rule{red-rel-iface}.

  (24) Set $\kC = \minF(k-\kB,k - \sum_i k_i) - 1$ where

  (25) $0 \leq \kC < k$ and

  (26) $\kC < k- \kB$ and $\kC < k - k_i$

  via (9), (11) and (13).

  (27) $\redVLRConfk{\kC}{t_i}{v_i}{\uT_i}$

  via (17), (26) and the Monotonicity Lemma.

  (28) $\redVLRConfk{\kC}{u_S}{v}{\uTVal}$

  via (22), (26) and the Monotonicity Lemma.

   (29) $\redLRConfk{\kC}{t}
      {\Angle{\subst{x}{v},\foreach{\subst{x_i}{v_i}}{n}}}
      {\yT_j \ \uTVal \ (\foreach{\uT_i}{n})}$

      via reverse application of rule \Rule{red-rel-method} on (23)
      where in the premise the left-hand side of the implication is satisfied
      via (26), (27) and (28).

    (30) $\redLRConfk{\kC+1}{t}
      {v.m(\foreach{v_i}{n})}
      {\uTA_j \ \uTVal \ (\foreach{\uT_i}{n})}$

      via (29) and Lemma~\ref{le:source-step-preserve-equivalence}.

     (31) $\reduceFGk{\kA- (\kB + \sum_i k_i)}{\foreachN{D}}{v.m(\foreach{v_i}{n})}{v'}$

      via reduction steps (8), (10) and (12).

      (32) $\kA- (\kB + \sum_i k_i) < \kC + 1$

      via the following reasoning.

      \bda{llr}
        & \kC + 1
      \\ = & \minF(k - \kB,k - \sum_i k_i)  & \mbox{by definition}
      \\ = & k - \maxF(\kB, \sum_i k_i)     & \mbox{by $\minF/\maxF$ distributivity law}
      \\ \geq & k - (\kB + \sum_i k_i)      & \mbox{by $\maxF$ approximation}
      \\ > & \kA - (\kB + \sum_i k_i)       & \mbox{by $\kA < k$}
      \eda

      (33) $\reduceTLN{\vbMethTL}{\yT_j \ \uTVal \ (\foreach{\uT_i}{n})}{\uTA}$ for some $\uTA$ where

      (34) $\redLRConfk{\kC + 1 - (\kA - (\kB + \sum_i k_i))}{t}{v'}{\uTA}$

      via reverse application of rule \Rule{red-rel-exp} on (30)
      where in the premise the left-hand side of the implication is satisfied via (31) and (32).

      (35) $k - \kA \leq \kC + 1 - (\kA - (\kB + \sum_i k_i))$

      via the following reasoning.

      \bda{llr}
      & \kC + 1 - (\kA - (\kB + \sum_i k_i))
      \\ = & \minF(k - \kB, k - \sum_i k_i) - (\kA - (\kB + \sum_i k_i))   & \mbox{by definition}
      \\ = & k - \maxF(\kB, \sum_i k_i) - (\kA  - (\kB + \sum_i k_i))      & \mbox{by $\minF/\maxF$ distributivity law}
      \\ = & (k - \kA) + (\kB + \sum_i k_i) - \maxF(\kB, \sum_i k_i)       & \mbox{by equivalence}
      \\ \geq &  k - \kA                                                  & \mbox{by approximation}
      \eda

      (36) $\redLRConfk{k-\kA}{t}{v'}{\uTA}$

      via (34), (35) and the Monotonicity Lemma.

      (37) $\reduceTLN{\vbMethTL}{\vbTL(\CASE\ \ \expT \ \OF\ \kT_{t_I} \ (\xTval, \foreach{\xT_i}{q}) \rightarrow \xT_j \ \xTval \ (\foreach{\expT_i}{n}))}{\uTA}$

      via reduction steps (14), (20) and (33).

      Via (36) and (37) we can establish this subcase in statement (3).

      \addSpace
          {\bf Subcase-Diverge} and {\bf Subcase-Panic} are left out as the reasoning
          is very close to the reasoning for case \Rule{td-call-struct}.

  \noindent
  {\bf Case} \Rule{td-sub}:
  \begin{mathpar}
  \inferrule
            {\tdExpTrans{\pair{\foreachN{D}}{\fgEnv}}{e : t}{\expT_2}
          \\ \tdUpcast{\foreachN{D}}{\subtypeOf{t}{u}}{\expT_1}
          }
            { \tdExpTrans{\pair{\foreachN{D}}{\fgEnv}}{e : u}{\expT_1 \ \expT_2} }
  \end{mathpar}

  By induction we obtain that
  (4) $\redLRConfk{k}{t}{\vbFG(e)}{\vbTL(\expT_2)}$.
  From (3), (4) and Lemma~\ref{le:upcast-red-equiv}  we obtain that
  $\redLRConfk{k}{u}{\vbFG(e)}{\expT_1 \ \vbTL(\expT_2)}$.

  We have that $\vbTL(\expT_1) = \expT_1$ and thus we are done for this case.

  \noindent
  {\bf Case} \Rule{td-assert}:
  \begin{mathpar}
  \inferrule
            {\tdExpTrans{\pair{\foreachN{D}}{\fgEnv}}{e : u}{\expT_2}
          \\ \tdDowncast{\foreachN{D}}{\assertOf{u}{t}}{\expT_1}
          }
          { \tdExpTrans{\pair{\foreachN{D}}{\fgEnv}}{e.(t) : t}{\expT_1 \ \expT_2} }
  \end{mathpar}

    By induction we obtain that
  (4) $\redLRConfk{k}{u}{\vbFG(e)}{\vbTL(\expT_2)}$.
  From (3), (4) and Lemma~\ref{le:downcast-red-equiv}  we obtain that
  $\redLRConfk{k}{t}{\vbFG(e).(t)}{\expT_1 \ \vbTL(\expT_2)}$.

  We have that $\vbTL(\expT_1) = \expT_1$ and thus we are done for this case.
  \qed
\end{proof}

\end{document}